\documentclass[runningheads,envcountsame,envcountsect]{llncs}

\newif\iffull
\fulltrue
\newcommand{\full}[2]{\iffull#1\else#2\fi}
\newif\ifanon
\iffull\anonfalse\else\anontrue\fi

\usepackage{amsmath}
\usepackage{amssymb}
\usepackage{graphicx}
\usepackage{textcomp}
\usepackage{xspace}
\usepackage{xcolor}
\usepackage{array}
\usepackage[section]{placeins}
\usepackage{array}
\usepackage[most]{tcolorbox}
\usepackage[figuresright]{rotating}
\usepackage{longtable}

\usepackage[nospace]{cite}
\usepackage[hidelinks]{hyperref}
\usepackage[expansion=false]{microtype}
\usepackage{tcolorbox}

\AtBeginDocument{\def\doi#1{\href{https://doi.org/#1}{\nolinkurl{https://doi.org/#1}}}}

\newcommand{\mypara}[1]{\smallskip\noindent\textbf{#1.}}
\newcommand{\temph}[1]{\textbf{#1}}
\newcommand{\remove}[1]{}

\ifanon

\else

\fi

\newcommand{\calF}{\mathcal{F}}

\newcommand{\calN}{\mathbb{N}}

\usepackage{enumitem}
\setlist{nosep, leftmargin=*}

\newtcolorbox{eg}{colback=gray!5!white,colframe=black!75!black,top=2pt,bottom=2pt}
\usepackage{enumitem}
\setlist{nosep, leftmargin=*}

\title{Sovereign Grassroots Currencies: A CBDC Architecture for Credit and Monetary Policy\full{ (Full Version)}{}}
\titlerunning{Sovereign Grassroots Currencies}
\ifanon
\author{Anonymous Submission}
\authorrunning{Anonymous Submission}
\institute{}
\else
\author{Ehud Shapiro}
\authorrunning{E. Shapiro}
\institute{London School of Economics, UK, and Weizmann Institute of Science, Israel}
\fi

\iffull\else

\makeatletter
\g@addto@macro\normalsize{%
  \abovedisplayskip 1\p@ \@plus 1\p@ \@minus 1\p@
  \belowdisplayskip \abovedisplayskip
  \abovedisplayshortskip 1\p@ \@plus 1\p@
  \belowdisplayshortskip 3\p@ \@plus 1\p@ \@minus 1\p@}

\def\@spthm#1#2#3#4{\topsep 1\p@ \@plus 1\p@ \@minus 1\p@
\refstepcounter{#1}%
\@ifnextchar[{\@spythm{#1}{#2}{#3}{#4}}{\@spxthm{#1}{#2}{#3}{#4}}}
\def\@Thm#1#2#3{\topsep 1\p@ \@plus 1\p@ \@minus 1\p@
\@ifnextchar[{\@Ythm{#1}{#2}{#3}}{\@Xthm{#1}{#2}{#3}}}
\renewcommand\section{\@startsection{section}{1}{\z@}%
                       {-6\p@ \@plus -2\p@ \@minus -2\p@}%
                       {2\p@ \@plus 2\p@ \@minus 2\p@}%
                       {\normalfont\large\bfseries\boldmath
                        \rightskip=\z@ \@plus 8em\pretolerance=10000 }}
\renewcommand\subsection{\@startsection{subsection}{2}{\z@}%
                       {-6\p@ \@plus -2\p@ \@minus -2\p@}%
                       {2\p@ \@plus 2\p@ \@minus 2\p@}%
                       {\normalfont\normalsize\bfseries\boldmath
                        \rightskip=\z@ \@plus 8em\pretolerance=10000 }}
\renewcommand\subsubsection{\@startsection{subsubsection}{3}{\z@}%
                       {-9\p@ \@plus -3\p@ \@minus -3\p@}%
                       {-0.5em \@plus -0.22em \@minus -0.1em}%
                       {\normalfont\normalsize\bfseries\boldmath}}
\renewcommand\paragraph{\@startsection{paragraph}{4}{\z@}%
                       {-8\p@ \@plus -3\p@ \@minus -3\p@}%
                       {-0.5em \@plus -0.22em \@minus -0.1em}%
                       {\normalfont\normalsize\itshape}}
\def\@listI{\leftmargin\leftmargini \parsep \z@ \@plus 1\p@
  \topsep 3\p@ \@plus 1\p@ \@minus 1\p@ \partopsep 1\p@
  \itemsep 1\p@ \@plus .5\p@ \@minus .5\p@}
\let\@listi\@listI \@listi
\makeatother
\fi
\begin{document}

\maketitle

\begin{abstract}
A Central Bank Digital Currency (CBDC) is central-bank money in digital form, held by the public. Leading designs have two limitations: conversion from bank deposits into CBDC can accelerate deposit flight, requiring safeguards, and the CBDC stays outside credit creation and monetary-policy operations.

Here we present a CBDC architecture based on grassroots currencies that overcomes these limitations.  The architecture has three components: (1)  \textbf{Money:} \emph{sovereign grassroots coins}, which are digital debts of one unit of fiat currency issued by the central bank, constituting a direct CBDC; (2) \textbf{Credit and Liquidity:} \emph{non-sovereign grassroots coins}, which are digital debts of one unit of the same fiat currency, redeemable at par, that can be issued by any person, natural or legal, thus adding credit; and (3) \textbf{Interest:} grassroots bonds, sovereign and non-sovereign, adding maturity and thus  interest, standard banking instruments, and the central bank's instruments of monetary policy.

The central bank can therefore lend, absorb liquidity, set its rates and buy and sell securities in the coins and bonds the public holds, choosing the counterparties and terms of its credit operations, and without converting bank deposits into newly issued central bank money on demand.

We prove that the arbitrage-free price of any non-sovereign grassroots coin whose issuer redeems it on demand is one unit of the fiat currency.  The central bank can choose to deal with any counterparty, not just banks, and we argue that the central bank's interest rates on lending and bonds bound from above and below the corresponding interest rates of its counterparties.  Sovereign and non-sovereign grassroots coins and bonds have been implemented and tested on a small scale.

\keywords{Sovereign grassroots currencies \and Central bank digital currency \and Mutual credit \and Grassroots platforms \and Liquidity \and Monetary policy \and Stablecoins}
\end{abstract}

\section{Introduction}\label{sec:introduction}

A Central Bank Digital Currency (CBDC) is central-bank money in digital form, held by the public~\cite{bis2020cbdc,auer2022motives}. Leading designs have two limitations: conversion from bank deposits into CBDC can accelerate deposit flight, requiring safeguards~\cite{bis2021finstab,ecb2023stocktake,boe2023digitalpound}, and the CBDC stays outside credit creation and monetary-policy operations~\cite{mcleay2014money}.

\mypara{The contenders}
Of those in operation or in pilot we consider four, the Sand Dollar of the Bahamas~\cite{sanddollar}, the eNaira of Nigeria~\cite{cbn2021enaira}, JAM-DEX of Jamaica~\cite{boj2022jamdex} and the e-CNY of China~\cite{pboc2021ecny}; the digital euro~\cite{ecb2023stocktake,ecb2025closing} is in preparation, with a pilot possible from mid-2027 and a first issuance during 2029, and the digital pound~\cite{boe2023digitalpound,boe2026progress} is in design, with no decision taken on introducing it; and the research community has designed systems for them~\cite{senn2026sok}\full{: Project Hamilton~\cite{lovejoy2022hamilton}, Chaumian eCash~\cite{chaum2021cbdc,bis2023tourbillon}, and the privacy-preserving Platypus~\cite{wust2022platypus} and PEReDi~\cite{kiayias2022peredi}}{~\cite{lovejoy2022hamilton,chaum2021cbdc,bis2023tourbillon,wust2022platypus,kiayias2022peredi}}.  Beside them stand the private alternatives, regulated stablecoins~\cite{gorton2023taming,genius2025act,mica2023} and tokenised deposits~\cite{garratt2023singleness}.

The designs are classified by who holds the retail claim and who keeps the retail record~\cite{auer2020technology,auer2020rise}: in a direct CBDC the central bank keeps every retail balance and executes every payment; in a hybrid or intermediated CBDC intermediaries do, the central bank keeping a copy or a wholesale ledger; in an indirect CBDC the claim is on an intermediary, backed by central bank money.  Each, and each private alternative, places one record and one party inside every retail payment.

Seven central banks and the BIS have set what a retail CBDC must satisfy~\cite{bis2020cbdc}: three foundational principles\full{---do no harm to monetary and financial stability, coexistence with cash and other forms of money, and innovation and efficiency---}{ }and fourteen core features\full{, among them convertibility at par, availability offline, low or no cost to end users, instant settlement, resilience, scalability, interoperability, a robust legal framework and conformity with regulatory standards}{}.

\mypara{The two limitations}
The first limitation is the subject of the central banks' own report on financial stability~\cite{bis2021finstab}.  Deposits are converted into central bank money at will, disintermediating the banks, and a run into it costs less than a run into cash, so a systemic run may be faster and larger.\full{  How much so depends on the design of the CBDC and on the credibility of deposit insurance.  The safeguards proposed are limits on holdings and on transactions, tiered remuneration and waterfalls into bank accounts.}{}

The second limitation holds of all of them: credit creation and monetary-policy operations remain outside the CBDC, in bank money and reserves. That is a choice those designs made, not something a CBDC must be: interest-bearing central bank digital currencies have been proposed, and interest is an instrument of policy~\cite{bordo2017cbdc,davoodalhosseini2022cbdc,agur2022designing} (Section~\ref{sec:related-work}).

\mypara{Grassroots currencies}
Grassroots currencies~\cite{shapiro2024gc}, issued by natural or legal persons, are IOUs with the obligation to redeem issued grassroots coins at par against any grassroots coins held by the issuer.  Liquidity forms by mutual credit: persons who trust each other swapping their coins, with no system-wide ledger or operator.

\mypara{The proposal}
Here we present a CBDC architecture based on grassroots currencies~\cite{shapiro2024gc} that overcomes these limitations.  The architecture has three components: 
\begin{enumerate}
    \item  \textbf{Money:} \emph{sovereign grassroots coins}, which are digital debts of one unit of fiat currency issued by the central bank, constituting a direct CBDC;
    \item \textbf{Credit and Liquidity:} \emph{non-sovereign grassroots coins}, which are digital debts of one unit of the same fiat currency, redeemable at par, that can be issued by any person, natural or legal, thus adding credit;
    \item  \textbf{Interest:} grassroots bonds~\cite{shapiro2026bonds}, sovereign and non-sovereign, adding maturity and thus interest, standard banking instruments, and the central bank's instruments of monetary policy.
\end{enumerate}

The three components are realised in one grassroots social contract~\cite{shapiro2026formalising}.  Given a denomination, call it \emph{Dollar} for simplicity, every person issues their own Dollar-denominated currency under it, and a sovereign grassroots coin is simply a Dollar-denominated grassroots coin issued by the central bank of the Dollar.  There is no separate protocol for the central bank, and the only difference between it and other person-issued Dollar-denominated grassroots coins is that they all carry some risk of default, whereas a central bank can issue by fiat as many Dollars as needed to redeem their sovereign grassroots Dollars.

A \emph{grassroots currency} comprises the grassroots coins and the grassroots bonds of its issuer; its coins are money and its bonds are credit.

\mypara{Money}
The central bank digital currency proposed here is the sovereign grassroots coins,  a direct CBDC~\cite{auer2020technology}: they enter circulation by exchange for fiat coins and by the central bank's own payments, and leave it by redemption into fiat coins, the central bank countersigns every payment and records it in its own log, and the public holds a claim on the central bank and on no intermediary.

\iffull
They differ in the machinery only from the designs in which the central bank keeps the ledger and settles every payment~\cite{lovejoy2022hamilton,ecb2023stocktake,boe2023digitalpound}: each person keeps their own holdings, the central bank keeps its log, a payment is disseminated over the social graph~\cite{lewis2023grassroots}, and it is final when the central bank has countersigned it, with no consensus.
\fi

\iffull
A payment is countersigned by the issuer of the coins paid and recorded in the issuer's log, held by machines the issuer designates: for a person, state custodians chosen among their friends~\cite{eitan2026securing}; for an institution, its own.
\fi

\iffull
A payment needs its payer, its payee, the issuer and those machines.  So it proceeds over any network that reaches them, Bluetooth or a local network with no connection to the Internet; it is confidential to the three parties, and the custodians' persons have no access to the log their machines hold; and it bears no fee required by the architecture, an issuer being free to charge for their services.  The central bank sees only the payments in its own coins.

A person who loses their key or device recovers their identity from their identity custodians and their log, exactly, from their state custodians, and resumes without double spending~\cite{eitan2026securing}.

A coin is a signed debt instrument, analogous to a promissory note, which is legally binding also in electronic form, with  ordinary rules of contract governing the debt it states.  Whether it is negotiable, and whether the central bank may issue it~\cite{boe2026progress}, are questions for each jurisdiction.
\fi

\iffull
The coins create no credit and open no lending channel for the central bank, and they add no instrument of policy: remuneration and limits on holdings can be added, as in the other designs, and are not part of it.
\fi

\mypara{Credit and liquidity}
With the denominated grassroots currencies of persons, sovereign grassroots coins are the base money of the denomination~\cite{mcleay2014money}.  Counterparties that issue coins of their own allow the central bank to issue its coins against them.  In this component the central bank issues its coins by mutual credit lines.

\iffull
Issuance that is not by conversion is Kumhof and Noone's principle~\cite{kumhof2018cbdc}: no guaranteed on-demand convertibility of bank deposits into the CBDC, and issuance only against eligible securities, principally government securities.  There the central bank buys securities its counterparty already holds; here it lends against coins its counterparty issues.
\fi

\iffull
A mutual credit line is a swap of its coins for the coins of the counterparty it chooses---a bank, a city, a firm, a cooperative or a household, any counterparty the architecture admits and its mandate allows---in the amount and at the time it chooses, with no bank between them and no account at the central bank.
\fi

\iffull
The counterparty's coins are redeemable against their holdings, and their coins in circulation are redeemable in full exactly when their holdings cover them~\cite{shapiro2024gc}.
\fi

\iffull
As the central bank's lending creates reserves today, its mutual credit lines create sovereign grassroots coins, and unlike reserves the public holds and pays them.  The sovereign grassroots coins and bonds in circulation are at most what the central bank has minted and parted with, redemption into fiat coins reducing them and no demand of a depositor increasing them; maturity turns one of its bonds into one of its coins and adds no debt.  They enter circulation only by transactions of the central bank, and thereafter circulate by payment, swap and redemption.
\fi

\iffull
A swap of coins is at par, on demand and bears no interest, so the central bank has the counterparty, the quantity and the timing of the liquidity it provides, and no rate.  Every other person creates credit independently of it.
\fi

Person-issued denominated grassroots coins are redeemable at par into sovereign grassroots coins along a chain of liquidity from their issuer to the central bank, and then by it into fiat coins.  The arbitrage-free price of a coin whose issuer redeems it on demand is one unit of the fiat currency, so, while redemptions are met, the currencies of the persons and of the sovereign are one currency, with the fiat currency as its unit of account.

\iffull
Credit is created by every person, as inside money~\cite{cavalcanti1999inside,mcleay2014money} secured by social collateral~\cite{besley1995social}, which none of the designs compared allows; a bank issues its coins in place of deposits.
\fi

\iffull
The central bank holds no retail accounts and takes part in no payment but in its own grassroots currency.  Households, firms and banks accept payments in their own currencies, in the currency of the bank they are customers of, or in the sovereign grassroots currency.  Banks settle among themselves by set-off and chain redemption, and need not settle every payment between them in central bank money.
\fi

\iffull
Sovereign grassroots currencies add no on-demand operation that turns a bank deposit into newly issued central bank money: no holder of a bank deposit has a right to newly issued sovereign grassroots coins, and a bank's reserves are the bank's own.  A depositor obtains sovereign grassroots coins from a holder, which reallocates coins already in circulation, or from the central bank against cash their bank pays them, where the central bank elects to make that exchange, which changes the form of central bank money and not its quantity (Lemma~\ref{lem:conservation}).  No limit, tier or waterfall is therefore needed to bound the demand-driven creation of central bank money, which the central bank issues at its own election; whether a limit is wanted for other reasons --- concentration, transition or usability --- is a separate question, which this architecture leaves open.  Banks continue to issue and lend in their own denominated grassroots currencies beside every other issuer.
\fi

\iffull
A run is on one issuer and is met from their holdings.  The central bank can always meet presentations of its own sovereign coins by issuing fiat coins, so there is no open-ended digital run into central bank money.
\fi

\full{\mypara{Interest}}{}
\iffull
With grassroots bonds~\cite{shapiro2026bonds}, credit acquires time and a price.  A bond is a unit of its issuer's debt due at a later date.  Once the issuer's date reaches that date, its bearer may take a coin of the issuer for it.  The discount at which coins are swapped for a bond, with its maturity, determines its rate of interest.
\fi

\iffull
Every person can lend at interest, sell debt, open credit lines, post collateral, take deposits, and form the further financial instruments, each a swap of bonds or an escrow agreement.  A bank's bond takes the place of its time deposit.
\fi

\iffull
The central bank's instruments of monetary policy are the same operations, and it chooses the counterparty, amount, rate, maturity and collateral of its credit operations.  The policy rate is the rate at which it lends its coins against bonds.  Targeted liquidity is a term credit line to a city, a bank or a community, whose tranches are draws it consents to, each falling due at its maturity.  It lends against escrowed collateral, and the lender of last resort~\cite{bagehot1873lombard} is such lending at a penalty rate to an issuer whose bonds cover their liabilities while their coins do not.
\fi

\iffull
Interest is charged on the borrowing of sovereign grassroots coins and paid on the bonds the central bank issues to the counterparties it chooses; the coins bear none, as cash does not.  A remunerated CBDC pays its rate to whoever holds it, and so competes with deposits, which is one reason neither the digital euro nor the digital pound is to be remunerated~\cite{ecb2025closing,boe2023digitalpound}.
\fi

\full{}{\mypara{Interest}}
The central bank lends to a counterparty at one rate and pays another on the bonds it sells them. That counterparty will not pay more than the lending rate, since it can borrow from the central bank instead, and will not accept less than the bond rate, since it can buy a central bank bond instead. The two rates therefore bracket the rates it faces, at like maturity, collateral and risk. That bracket is the corridor~\cite{abad2025cbdc}, and the central bank chooses which counterparties it admits to it.

\iffull
The retail CBDC proposals compared add a retail form of central bank money and leave the central bank's lending, its asset purchases and the implementation of the policy rate in reserves.  In a sovereign grassroots currency they are transactions in the coins and bonds the public holds.
\fi

\begin{eg}
\mypara{A central bank with sovereign grassroots currencies}\\
A central bank that issues sovereign grassroots currencies can therefore lend, absorb liquidity, set its rates and buy and sell securities in the coins and bonds the public holds, choosing the counterparties and terms of its credit operations, and without converting bank deposits into newly issued central bank money on demand.
\end{eg}

\mypara{What we show}
We prove that the arbitrage-free price of any non-sovereign grassroots coin whose issuer redeems it on demand is one Dollar.  The central bank can choose to deal with any counterparty, not just banks, and we argue that the central bank's interest rates on lending and bonds bound from above and below the corresponding interest rates of its counterparties.  Sovereign and non-sovereign grassroots coins and bonds have been implemented and tested on a small scale (Appendix~\ref{app:run}).

We verify sovereign grassroots coins against the three principles and fourteen features.  We then compare them with the four CBDC architectures, the CBDCs in operation and in design, the system designs, and the private alternatives\full{: who is the retail holder's debtor, who keeps the retail record, who takes part in a payment, whether a payment can be made offline, who sees it, whether the central bank holds retail accounts, whether the central bank processes retail payments, who creates credit, what a payment costs, and how a lost key or device is recovered}{}.

\full{

}{}We also set sovereign grassroots coins, denominated grassroots coins and grassroots bonds among the types of money proposed as a CBDC or in its place\full{---account-based CBDC, token-based CBDC, stablecoins and tokenised deposits---by whose liability each is, what backs it, its record, how it is created, whether it bears interest, and what can be done with it}{}.

\iffull
\mypara{Status}
The designs in operation have measured throughput, scale and ease of use; grassroots currencies run as a prototype \cite{shapiro2026gsg,lewis2023grassroots,eitan2026securing} specified by a grassroots social contract \cite{shapiro2026formalising,lewis2026volitional}, and have yet to be similarly measured.

Moreover, a digital economy of denominated grassroots currencies can form without the central bank of the denomination, spontaneously or by spilling over from a neighbouring economy, and the central bank can join it once formed by issuing its own denominated grassroots coins, which are then the sovereign ones.
\fi

\section{Money: Sovereign Grassroots Coins}\label{sec:money}

A sovereign grassroots coin is a digital debt of one unit of a fiat currency, issued by the central bank of that currency.  Here we define it, describe how it enters and leaves circulation, and define its payment.  It is the money of the architecture and the central bank digital currency proposed here; the credit of the persons is Section~\ref{sec:credit}, and maturity and interest Section~\ref{sec:interest}.

\subsection{Denominated Grassroots Coins and Bonds}\label{sec:denominated}

\mypara{Persons, denominations and fiat coins}
A \emph{person} is natural or legal, with the gender-neutral pronoun ``they'': a person, a household, a merchant, a cooperative, a bank, a city, a government, a central bank.  We assume a potentially infinite set of persons $\Pi$ and consider only its finite subsets.  Let $\calF$ be a countable set of \emph{denominations}, disjoint from $\Pi$, the identifiers of the fiat currencies (USD, EUR, KES, \ldots).  A \emph{fiat coin} is a unit of a fiat currency itself: cash, or a unit of central bank money on the central bank's conventional books, held by a person it admits to hold it; a sovereign grassroots coin is not a fiat coin.  Anyone may hold cash, and reserves are held by the institutions the central bank admits, so the fiat coin a depositor is paid is cash, and their bank's reserves are the bank's own.

\mypara{Bonds and coins}
A grassroots coin is a unit of its issuer's debt, due now, backed by the issuer's offerings and by the coins and bonds they hold, and a grassroots bond is such a unit due at a later date~\cite{shapiro2024gc,shapiro2026bonds}.  The two are distinct instruments: a coin carries no date, and a bond is minted with a date later than its issuer's own.  Here each carries a denomination and its obligation is a fixed sum of it.

\begin{definition}[Denominated Grassroots Coin and Bond]\label{def:fiat-bonds}
An \temph{$f$-denominated grassroots $p$-coin}, denoted \textcent$_{f,p}$, is a unit of debt of the currency $f \in \calF$ issued by the person $p \in \Pi$, due now.  An \temph{$f$-denominated grassroots $p$-bond with maturity date $d$}, denoted \textcent$_{f,p,d}$, is such a unit due at the date $d \in \calN$.  The \temph{$f$-denominated grassroots currency of $p$} consists of the $f$-denominated $p$-coins and $p$-bonds.  We write \textcent$^k_{f,p}$ and \textcent$^k_{f,p,d}$ for multisets of $k$ of each.  Amounts are in the smallest unit of $f$, so every count is an integer.
\end{definition}

\mypara{Holdings and dates}
Each person holds a multiset of coins and bonds and keeps their own calendar; $d_p^*$ denotes the local current date of $p$, advanced by the calendar, and there is no global clock.  A bond is minted with a date later than its issuer's own, so no bond is born due.  A bond \textcent$_{f,q,d}$ is \emph{mature} iff $d \le d_q^*$, and its bearer may then take a $q$-coin for it, at their own election: maturity is judged by the date of the issuer and the dates of the holders play no part~\cite{shapiro2026bonds}.  A coin carries no date, so no advance of an issuer's date makes or unmakes one.

\mypara{The obligations}
An $f$-denominated grassroots $p$-coin is a promissory note signed by $p$, stating its issuer and denomination, and a $p$-bond states a maturity date besides.  Each carries three obligations:
\begin{enumerate}
    \item $p$ prices their offerings in their own $f$-denominated coins and accepts them in payment at those prices;
    \item $p$ prices every $f$-denominated coin and bond they hold at par against their own $f$-denominated coins, on standing offer; and
    \item an $f$-denominated grassroots coin is a debt of one unit of $f$, intended to be legally binding under the applicable law (Section~\ref{sec:payment}), and its bearer may present it to its issuer, who must redeem it at face value; a $p$-bond is the same debt due at its date, and once mature its bearer may take a $p$-coin for it.
\end{enumerate}
The issuer fulfils a presentation by an $f$-denominated coin or bond they hold, of the bearer's choice --- set-off, when the instrument taken is the bearer's own --- or by a fiat coin, performed outside the operations of the currency and legally enforceable.  A bearer may take a bond rather than a coin: a debt due later of a person they trust can be worth more to them than a debt due now of one they trust less.\full{  A presentation left unfulfilled evidences the issuer's default in $f$.  The issuer assumed the three obligations voluntarily, by minting the coin or bond, and declining a payment or a presentation is a breach of its terms, actionable at law where the obligation is enforceable (Section~\ref{sec:payment}).}{}

\mypara{The transactions}
A person mints coins of their own, of any denomination, and bonds of any later maturity, at their own election; two persons swap coins and bonds of one denomination, by a transaction of both; a person accepts in payment, up to a limit of their own setting, the coins of an issuer whose coin or bond they hold; a holder pays the payee coins the payee accepts --- the payee's own, the payer's own, or those of a third person, who takes part in the payment; the bearer of a coin presents it to its issuer and takes for it any coin or bond of that denomination the issuer holds, and the bearer of a matured bond takes a coin of its issuer; and a person may deposit coins or bonds with an escrow agent for a third person, who releases them or returns them as the agreed condition is met or fails.  These are the transactions of the grassroots social contract of Appendix~\ref{app:schemas}: the contract of grassroots currencies~\cite{shapiro2024gc}, with the dates and the escrow of grassroots bonds~\cite{shapiro2026bonds}, with denomination, and with the acceptance of named currencies in payment, which the contract of grassroots currencies leaves to an extension~\cite{shapiro2024gc}.  A swap in which each party gives coins of their own issue opens a mutual credit line; one in which a party gives coins and the other their own bonds of larger face value maturing later is a loan.

\iffull The clauses compile to volition-guarded multiagent atomic transactions~\cite{lewis2026volitional,shapiro2026formalising}, which we recall from~\cite{shapiro2024gc,shapiro2026bonds} with the denomination and the acceptances added.  Each agent is a person operating a machine, and the machine state of a person $p$ is their holdings $c_p$, a multiset of coins and bonds, initially empty; their date $d_p^*$, initially $0$; the warranties $w_p$ they have received, a set of pairs of an issuer and a date, initially empty; and the acceptances $A_p$ they have granted, a multiset of pairs of an issuer and a denomination, initially empty.  A \emph{configuration} $c$ records the state of every person.  A transaction changes the machine states of its participants and no one else's, is enabled when its precondition holds, and is \emph{guarded} by the persons among its participants whose consent it takes: it is taken only when every guard is willing.  A run is \emph{safe} if every step is a transaction of the protocol, \emph{live} if a class of equivalent transactions that is enabled and willed by its guards is eventually taken, and \emph{correct} if it is both; a protocol is \emph{volitionally grassroots} if any group of persons may operate it on its own, and two groups become connected only by a transaction willed by a member of each~\cite{lewis2026volitional}.  Multiset sum and difference are written $\uplus$ and $\setminus$.

\begin{definition}[The Transactions of Denominated Grassroots Currencies]\label{def:sgc-transactions}
For persons $p \ne q \in \Pi$, $e \in \Pi$ distinct from $p$ and $q$, $u, v, r \in \Pi$, a denomination $f \in \calF$, maturity dates $d, d' \in \calN$, a date $t \in \calN$, amounts $k, k' \ge 1$, and a limit $L \ge 1$, the \temph{transactions of denominated grassroots currencies} are the volition-guarded transactions $(c \rightarrow c', Q')$ below, over the participants named, in which $c$ satisfies the precondition stated and every component of a participant's state not named is unchanged.  We write \textcent$^k_{f,u,\ast}$ for $k$ instruments of $u$ in $f$ of one kind and maturity, coins or bonds, where the transaction does not distinguish them:
\begin{enumerate}
    \item \temph{Mint$_p(f,k)$} and \temph{Mint$_p(f,d,k)$}, over $\{p\}$, guarded by $\{p\}$: the second with precondition $d > d^*_p$, so no bond is born due; $c'_p = c_p \uplus \text{\textcent}^k_{f,p}$ and $c'_p = c_p \uplus \text{\textcent}^k_{f,p,d}$ respectively.
    \item \temph{Advance$_p(t)$}, over $\{p\}$, guarded by $\emptyset$: precondition $t > d^*_p$; $d'^*_p = t$.
    \item \temph{Swap$_{p,q}(\text{\textcent}^k_{f,u,\ast}, \text{\textcent}^{k'}_{f,v,\ast})$}, over $\{p,q\}$, guarded by $\{p,q\}$: precondition $\text{\textcent}^k_{f,u,\ast} \subseteq c_p$ and $\text{\textcent}^{k'}_{f,v,\ast} \subseteq c_q$; $c'_p = c_p \setminus \text{\textcent}^k_{f,u,\ast} \uplus \text{\textcent}^{k'}_{f,v,\ast}$ and $c'_q = c_q \setminus \text{\textcent}^{k'}_{f,v,\ast} \uplus \text{\textcent}^k_{f,u,\ast}$.
    \item \temph{Accept$_p(f,u,L)$}, over $\{p\}$, guarded by $\{p\}$: precondition $\text{\textcent}_{f,u,\ast} \in c_p$; $A'_p = A_p \uplus L\cdot\{(u,f)\}$.
    \item \temph{Pay$_{p,q}(\text{\textcent}^k_{f,u})$}, over $\{p,q,u\}$, guarded by $\{p\}$: precondition $\text{\textcent}^k_{f,u} \subseteq c_p$ and $u = q$ or $k\cdot\{(u,f)\} \subseteq A_q$; $c'_p = c_p \setminus \text{\textcent}^k_{f,u}$, $c'_q = c_q \uplus \text{\textcent}^k_{f,u}$, and $A'_q = A_q \setminus k\cdot\{(u,f)\}$ if $u \ne q$.  Only coins are paid.
    \item \temph{Redeem$_{p,q}(f;\,r,\ast)$}, over $\{p,q\}$, guarded by $\{p\}$: precondition $\text{\textcent}_{f,q} \in c_p$ and $\text{\textcent}_{f,r,\ast} \in c_q$; $c'_p = c_p \setminus \text{\textcent}_{f,q} \uplus \text{\textcent}_{f,r,\ast}$ and $c'_q = c_q \setminus \text{\textcent}_{f,r,\ast} \uplus \text{\textcent}_{f,q}$.  The bearer takes a coin or a bond, at their own choice.
    \item \temph{Mature$_{p,q}(f,d)$}, over $\{p,q\}$, guarded by $\{p\}$: precondition $\text{\textcent}_{f,q,d} \in c_p$ and $d \le d^*_q$; $c'_p = c_p \setminus \text{\textcent}_{f,q,d} \uplus \text{\textcent}_{f,q}$ and $w'_p = w_p \cup \{(q,d^*_q)\}$.  The bearer of a matured bond takes a coin of its issuer, at their own election.
    \item \temph{Deposit$_{p,e}(\text{\textcent}^k_{f,u,\ast})$}, over $\{p,e\}$, guarded by $\{p,e\}$: precondition $\text{\textcent}^k_{f,u,\ast} \subseteq c_p$; $c'_p = c_p \setminus \text{\textcent}^k_{f,u,\ast}$ and $c'_e = c_e \uplus \text{\textcent}^k_{f,u,\ast}$.
    \item \temph{Release$_{e,q}(\text{\textcent}^k_{f,u,\ast})$} and \temph{Return$_{e,p}(\text{\textcent}^k_{f,u,\ast})$}, over $\{e,q\}$ and $\{e,p\}$, guarded by both participants: precondition $\text{\textcent}^k_{f,u,\ast} \subseteq c_e$; the instruments are deleted at $e$ and added at the other participant.
\end{enumerate}
Two transactions are \temph{equivalent} iff they have the same name and arguments, differing only in the configurations $c$ and $c'$.
\end{definition}

Each is a schema of Appendix~\ref{app:schemas} at a binding: the participants are the roles, the guards are the guarding roles, the precondition collects what the roles require, and the effect collects what they add and delete; Pay is three schemas, with the issuer $u$ the payee, the payer, or a third role.  Mint, accept, pay, redeem and mature are guarded by the initiator, a swap and each escrow transfer by both parties, and advance-date by no one, as local time advances mechanically; the issuer of the coins paid is a participant of every payment in their coins and not a guard of it.  Denomination adds the argument $f$ to every transaction of grassroots bonds and changes no guard.  Acceptance is a quota and not a cap on exposure: Accept$_p(f,u,L)$ grants $L$, each $u$-coin paid to $p$ uses one, and $p$ may grant again, so $p$ consents in advance, and in a bounded quantity, to every rise in their exposure to $u$; the per-issuer limit of grassroots currencies~\cite{shapiro2024gc} bounds the exposure itself.

\begin{lemma}[Conservation of Debt~\cite{lewis2026volitional,shapiro2024gc,shapiro2026bonds}]\label{lem:conservation}
In any safe run, the $f$-denominated debts of $p$ in any configuration are exactly those $p$ minted in the prefix of the run ending in it, each of them a $p$-coin or a $p$-bond of one unit of $f$.
\end{lemma}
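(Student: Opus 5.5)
The plan is an induction on the length of the safe run, with an invariant that accounts for the instruments of $p$ in $f$ wherever they sit.  Let $D_{f,p}(c)$ be the multiset sum over all persons $r$ of the $f$-denominated $p$-coins and $p$-bonds in $c_r$, counting bonds by issuer and denomination regardless of date.  Let $M_{f,p}$ be the multiset of instruments minted by Mint$_p(f,k)$ and Mint$_p(f,d,k)$ in the prefix, with one refinement: a bond minted and later matured is recorded as the coin it became.  The claim is then $D_{f,p}(c)=M_{f,p}$, and every element is a unit of $f$ by Definition~\ref{def:fiat-bonds}.  In the initial configuration every $c_r$ is empty and no mint has occurred, so both sides are empty.

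The inductive step is a case analysis over the transactions of Definition~\ref{def:sgc-transactions}.  First I would note that only $p$ can create a $p$-instrument: Mint$_p$ is over $\{p\}$ and adds only $p$-instruments of the minted denomination, so it extends both sides equally.  Mint$_q$ for $q\ne p$, and any mint in a denomination other than $f$, leaves both sides unchanged.  Next, Swap, Pay, Redeem, Deposit, Release and Return each remove a multiset from one participant and add the same multiset to another.  They therefore preserve the global sum $D_{f,p}$; in Redeem$_{p',q}$ two single instruments trade places.  Accept and Advance do not touch holdings.

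The one case that needs care is Mature$_{r,p}(f,d)$.  It replaces a $p$-bond \textcent$_{f,p,d}$ by a $p$-coin \textcent$_{f,p}$ in $c_r$.  This is still exactly one unit of $p$'s $f$-debt, and its provenance traces back to the minted bond.  This is why $M_{f,p}$ is read as ``minted, modulo maturation''; equivalently, the count of $p$'s $f$-units is conserved.  I expect this bookkeeping point to be the main obstacle, together with one more case.  A $p$-coin redeemed back to $p$ sits in $c_p$ and is then a claim of $p$ on itself.  I would still count it in $D_{f,p}$, consistent with the paper treating such holdings as set-off.  No transaction destroys an instrument, so the invariant needs no deletion case.

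Combining the cases, every step preserves $D_{f,p}=M_{f,p}$, and the lemma follows for every configuration of a safe run.  This matches the cited proofs in~\cite{shapiro2024gc,shapiro2026bonds}, with the denomination $f$ carried as an inert index through every case.
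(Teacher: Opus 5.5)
Your proposal is correct and follows essentially the paper's own proof: the paper likewise argues transaction by transaction that Mint alone creates $p$-debt, that swap, pay, redeem, deposit, release and return only move instruments, that Mature replaces a $p$-bond by a $p$-coin of the same debtor and face value, and that Advance and Accept touch no holding. Your explicit induction on the run, with a global invariant that reads ``minted'' modulo maturation and counts self-held instruments (as the paper's reconciliation of the village run also does), just makes precise the induction the paper leaves implicit.
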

\begin{proof}
Mint adds new debts of $p$ at $p$; swap, pay, redeem, deposit, release and return move instruments between participants without creating or destroying them; mature replaces a matured $p$-bond by a $p$-coin, the same debtor and the same face value; advance and accept change no holding.
\end{proof}

\begin{corollary}[No Cross-Denomination Transactions]\label{cor:no-cross-arbitrage}
Every transaction of an $f$-denominated grassroots currency is over $f$-denominated coins and bonds.
\end{corollary}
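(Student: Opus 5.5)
The corollary should follow by inspecting Definition~\ref{def:sgc-transactions} clause by clause: every transaction schema there is parametrised by one denomination $f$, and each instrument it names, consumes or produces carries that same $f$. The plan is to go through the transactions in order and check, for each, that every instrument appearing in its precondition and effect is of the form \textcent$_{f,u}$, \textcent$_{f,u,d}$ or \textcent$^k_{f,u,\ast}$ for the single $f$ given as its argument.

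Mint$_p(f,k)$ and Mint$_p(f,d,k)$ add only \textcent$_{f,p}$ or \textcent$_{f,p,d}$. Advance touches no instrument, and Accept touches only the acceptance multiset, with pairs $(u,f)$; its precondition refers to \textcent$_{f,u,\ast}$. So all three are trivially over $f$-denominated instruments or none. Swap$_{p,q}$ exchanges \textcent$^k_{f,u,\ast}$ for \textcent$^{k'}_{f,v,\ast}$ under the single $f$ in its signature, which is exactly the point the corollary makes: no swap pairs instruments of two denominations. Pay moves \textcent$^k_{f,u}$ and consumes acceptances $(u,f)$. Redeem$_{p,q}(f;r,\ast)$ exchanges \textcent$_{f,q}$ for \textcent$_{f,r,\ast}$. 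Mature replaces \textcent$_{f,q,d}$ by \textcent$_{f,q}$. Deposit, Release and Return move \textcent$^k_{f,u,\ast}$ only. In each case the instruments moved, created or replaced share the transaction's $f$.

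I would then combine this with Lemma~\ref{lem:conservation}. That lemma guarantees that instruments only move, or are replaced by same-denomination coins, so no transaction can turn an $f$-debt into a debt of another denomination.

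The one subtle point is the fiat-coin fulfilment of a presentation. I would address it by observing that the paper places this ``outside the operations of the currency'', so it is not a transaction of Definition~\ref{def:sgc-transactions} and falls outside the scope of the statement. With that settled, the proof is a routine case check.
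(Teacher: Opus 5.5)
Your proposal is correct and is essentially the paper's argument: the paper observes that every act schema of Appendix~\ref{app:schemas} carries a single denomination variable $f$, which every atom of the schema carries, so no binding moves instruments of two denominations. Your clause-by-clause check of Definition~\ref{def:sgc-transactions}, whose transactions are those schemas at a binding, is the same argument. The appeal to Lemma~\ref{lem:conservation} is unnecessary, since the per-transaction syntactic check already suffices, and your remark that fiat fulfilment lies outside the contract matches the paper's treatment.
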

\iffull
\begin{proof}
Every act schema of the contract carries one denomination variable, which every atom of the schema carries (Appendix~\ref{app:schemas}), so no binding of any schema moves instruments of two denominations.
\end{proof}
\fi
\else The contract compiles to volition-guarded multiagent atomic transactions~\cite{lewis2026volitional,shapiro2026formalising}, which Definition~\ref{def:sgc-transactions} in Appendix~\ref{app:added-money} recalls from~\cite{shapiro2024gc,shapiro2026bonds} with the denomination and the acceptances added: mint, advance-date, swap, accept, pay, redeem, mature, deposit, release and return, with pay three transactions according to whether the issuer of the coins paid is the payee, the payer or a third party.  Two consequences are recalled with them: in any safe run the $f$-denominated debts of $p$ in a configuration are exactly those $p$ has minted, each a coin or a bond of one unit of $f$ (Lemma~\ref{lem:conservation}), and no transaction spans two denominations (Corollary~\ref{cor:no-cross-arbitrage}).\fi

\iffull Denomination requires no permission from the central bank of the denomination: a person may issue bonds of several denominations at once, as denominating a private debt in dollars requires no permission from the Federal Reserve.  Issuing instruments that circulate in payment may engage banking, electronic-money, securities, money-transmission and consumer-protection law, which is a matter of each jurisdiction.  A denominated grassroots currency therefore operates before and without the sovereign of its denomination, on the obligations of its issuers, and the sovereign may join it once it is formed (Section~\ref{sec:the-sovereign}).
\else Denomination requires no permission from the central bank, though issuing instruments that circulate in payment may engage other law (Appendix~\ref{app:added-money}).\fi

\subsection{The Sovereign}\label{sec:the-sovereign}

Each fiat denomination has a sovereign issuer, or \emph{sovereign}: the central bank of the currency.  We assume a total function $\sigma: \calF \rightarrow \Pi$ that maps a denomination $f \in \calF$, say the Dollar, to its sovereign, written $f_\sigma \in \Pi$, say the central bank of the Dollar.

\begin{definition}[Sovereign Grassroots Currency]\label{def:sovereign}
The \temph{sovereign grassroots currency} of $f \in \calF$ is the $f$-denominated grassroots currency of the sovereign $f_\sigma$.  Its coins are the \temph{sovereign grassroots coins} of $f$, and its bonds the \temph{sovereign grassroots bonds} of $f$.
\end{definition}

A grassroots currency comprises the coins and the bonds of its issuer, and the central bank digital currency proposed here is the sovereign grassroots coins.  The sovereign bears the obligations of Definition~\ref{def:fiat-bonds} as every issuer does, and only it can issue a fiat coin to fulfil a presentation of its own coins.  It is otherwise an ordinary party: it mints, swaps, accepts, pays and redeems by the same transactions, the contract does not name it (Appendix~\ref{app:schemas}), and it may join a denominated grassroots currency already in operation as one more issuer.

\subsection{Issuance and Redemption}\label{sec:issuance}

\mypara{The obligation runs one way}
Redemption is the return of a coin to its issuer, and it is obligatory: the bearer presents the coin at their own election and the issuer is bound to fulfil the presentation (obligation~3).  Issuance is minting together with a swap or a payment, and both are transactions the issuer takes at their own election: no obligation requires an issuer to mint, or to part with what they have minted.  The debts of an issuer in circulation, coins and bonds together, therefore grow only by transactions the issuer consents to, and redemption only reduces them (Lemma~\ref{lem:conservation}).  Maturity moves face value from the issuer's bonds to their coins, at the bearer's election, and adds no debt: the bearer takes a coin for a bond the issuer had already minted and parted with.

\mypara{How sovereign grassroots coins enter circulation}
The central bank mints its coins and parts with them by a swap or a payment: it swaps them for fiat coins presented to it, and it pays them out to persons who accept them (Definition~\ref{def:sgc-transactions}).  It need not swap them for bank deposits on demand, and here it does not: no guaranteed on-demand convertibility of bank deposits into the central bank digital currency is one of Kumhof and Noone's four principles~\cite{kumhof2018cbdc}.  The central bank holds no retail account, and a person who holds no sovereign grassroots coin obtains one from the central bank against a fiat coin, or from a person who holds one, by payment, by swap or by redemption.  Buying and selling bonds and other securities for its coins belongs to the third component, where it is an open market operation (Section~\ref{sec:policy}).

\mypara{How they leave circulation}
On presentation the central bank fulfils by a fiat coin, which it issues, or by any $f$-denominated coin or bond it holds.  Fulfilling by a fiat coin is performed outside the contract, and its effect within the currency is the payment of the coin to its issuer, a transaction of Definition~\ref{def:sgc-transactions}, so Lemma~\ref{lem:conservation} covers it.  Redemption into fiat coins reduces the sovereign grassroots coins in circulation, and no demand for conversion increases them.  The sovereign grassroots coins and bonds in circulation are therefore bounded by what the central bank has issued and not taken back, and maturity moves face value from its bonds to its coins and adds nothing to the two together (Lemma~\ref{lem:conservation}).  What an issuer has \emph{issued} is what they have minted and parted with, coin or bond, here and below.

\mypara{A direct central bank digital currency}
Retail designs are classified by who holds the retail claim and who keeps the retail record~\cite{auer2020technology,auer2020rise}.  By that classification sovereign grassroots coins are a direct central bank digital currency: the public holds a claim on the central bank and on no intermediary, the central bank takes part in every payment in its coins, and no intermediary keeps a retail record.  The central bank keeps a log of the transactions in its coins rather than a table of balances by identity, and that log determines who holds each coin, so it serves as the retail record of the classification.  They differ in the machinery from the direct designs in which the central bank keeps every retail balance and executes every retail payment on its own ledger~\cite{lovejoy2022hamilton,ecb2023stocktake,boe2023digitalpound}: each person keeps their own holdings, the central bank keeps the record of its own coins, a payment is disseminated over the social graph~\cite{lewis2023grassroots}, and it is final when the central bank has countersigned it, with no consensus.

\iffull\mypara{What the money component does not do}
The coins create no credit and open no lending channel for the central bank, and they add no instrument of policy: remuneration and limits on holdings can be added, as in the other designs, and are not part of them.  Credit is the second component (Section~\ref{sec:credit}) and interest the third (Section~\ref{sec:interest}).
\else The coins neither create credit, nor open a lending channel, nor add an instrument of policy; credit is Section~\ref{sec:credit} and interest Section~\ref{sec:interest}.\fi

\subsection{A Payment and Its Digital Realisation}\label{sec:payment}

\iffull
The realisation is that of grassroots currencies~\cite{shapiro2024gc,lewis2023grassroots}: a coin is a digitally signed promissory note, the identity of a person is their signature key, and the issuer, a party to every payment in their coins (Definition~\ref{def:sgc-transactions}), countersigns every transaction in their own currency and may not withhold the countersignature from one the contract permits.  Every transaction is volition-guarded and multiagent atomic~\cite{lewis2026volitional,shapiro2026formalising}, and denomination restricts the bonds of a transaction to one denomination (Corollary~\ref{cor:no-cross-arbitrage}) and changes neither the transactions nor their guards.

\mypara{A payment}
A payment is a signed message from the payer to the payee, countersigned by the issuer of the coins paid and recorded in the issuer's log.  The countersignature proves holdership and prevents double spending~\cite{lewis2023grassroots}.  The log is held by machines the issuer designates: for a natural person, state custodians chosen among their friends~\cite{eitan2026securing}; for an institution, its own machines.  A payment is final when the issuer has countersigned it and the machines holding the issuer's log have recorded it: no shared ledger is updated, no block is awaited, and no third party agrees.

\mypara{Who takes part, and who sees}
A payment needs its payer, its payee, the issuer of the coins paid and those machines.  It is confidential to the three parties: the persons of the custodians have no access to the log their machines hold~\cite{eitan2026securing}, and there is no shared ledger, no operator and no intermediary.  A payment in sovereign grassroots coins therefore proceeds over any network that reaches the central bank and the machines holding its log, and not offline from them, and the central bank sees it.  The central bank does not see a payment in the coins of another issuer unless it is itself a party to it.

\mypara{Cost and throughput}
No intermediary takes part, so the architecture requires no fee, and the cost to a person is that of operating their device; an issuer is free to charge for their services.  The throughput of payments in sovereign grassroots coins is bounded by the central bank and by the machines holding its log, as in a direct central bank digital currency.

\mypara{The legal instrument}
A denominated grassroots coin is a cryptographically signed debt instrument, and its issuer is identified by their signature key.  In the United States a signature, contract or record may not be denied legal effect, validity or enforceability solely because it is in electronic form~\cite{esign2000}; in the European Union an electronic signature shall not be denied legal effect and admissibility as evidence solely on the grounds that it is in electronic form~\cite{eidas2014}; and the UNCITRAL-model acts say the same where they are adopted~\cite{uncitral2001esignatures}.  Three propositions are distinct, and only the first follows from those statutes: that the record is admissible and not denied effect for being electronic; that the obligation it states --- one unit of its fiat currency (Definition~\ref{def:fiat-bonds}) --- is an enforceable debt of its issuer, which rests on the ordinary rules of contract and on the jurisdiction; and that the instrument is a negotiable note within the bills and notes acts~\cite{boe1882,ucc3,geneva1930}, which it does not claim.  A denominated grassroots bond is analogous to a promissory note, and closer to one than an undenominated grassroots coin~\cite{shapiro2024gc}: it promises a sum certain in money at a date certain, which those acts require and the undenominated coin lacked.  Whether it is one within them is a question for each jurisdiction, open because the promise is surrounded by the other undertakings of Definition~\ref{def:fiat-bonds} and because a presentation may be fulfilled in kind.  The protocol records the provenance of each coin and prevents its double spending, serving the purpose that the rules of transfer and of proof of holdership serve for paper notes.  The UNCITRAL Model Law on Electronic Transferable Records~\cite{uncitral2017mletr} addresses this directly.  Where it is enacted, an electronic record is the functional equivalent of a transferable document or instrument if a reliable method identifies it as that record, keeps it capable of being subject to control, and retains its integrity, and the requirement of possession is met by exclusive control, identified with the person in control.  Whether the protocol's signatures, log and countersignature are such a method, and whether the Model Law is in force, are questions for each jurisdiction.  The Model Law is technology-neutral and does not alter the substantive law of the instrument.  The authority of a central bank to issue this liability is a fourth proposition, statutory and jurisdiction-specific: the Bank of England states that a digital pound would be introduced only once Parliament had passed the primary legislation for it~\cite{boe2026progress}.  The central bank's mandate we take as given.

\mypara{Recovery of a lost key or device}
A person who loses their key or their device recovers their identity from their identity custodians and their log, exactly, from their state custodians, and resumes without double spending~\cite{eitan2026securing}.  An institution recovers its log from its own machines.

\mypara{Status}
Grassroots Flash~\cite{lewis2023grassroots} is a protocol with proofs of safety, liveness and the grassroots property; recovery from the loss of a key or a machine is specified and proved in~\cite{eitan2026securing}.  The transactions of Definition~\ref{def:sgc-transactions} are abstract; a program is an executable realisation of them among parties who run it; and the secure coins of~\cite{eitan2026securing} are a third level, in which the issuer signs every transaction in their coins and state custodians hold copies of the log: that paper implements them by communicating volitional agents and proves that the implementation realises the specification at quiescence, conserves money, recovers the log exactly after the loss of a key or a machine, never double-spends after recovery, and finalises a requested payment while the issuer and a supermajority of its custodians stay live.  The transactions of the currency have a prototype implementation in GLP~\cite{shapiro2025glp,shapiro2026bonds}, with grassroots currencies running as a certified mini-app of the Grassroots Super-App on a physical smartphone~\cite{shapiro2026gsg,shapiro2024gc}; the denominated agent is a certified mini-app too, since GLP \texttt{a6bac767}, and the bond agent it is derived from keeps a network entry and is not certified.  The denominated agent implements the transactions of Definition~\ref{def:sgc-transactions} over the two instruments of Definition~\ref{def:fiat-bonds}, with the denomination as the argument after the issuer of every coin and bond.  The two extensions are separate: denomination adds the argument $f$ to every transaction of grassroots bonds and adds no transaction, while the acceptance of named currencies in payment, which the contract of grassroots currencies leaves to an extension~\cite{shapiro2024gc}, adds the three transactions accept, payout and spend (Appendix~\ref{app:schemas}).  The operations of the sovereign are the same operations, performed by the sovereign.  The denominated transactions run as a separate program, derived from the bond agent: the village market of~\cite{shapiro2026bonds} denominated in one fiat currency, with the central bank as an eighth party that mints sovereign grassroots coins and opens a mutual credit line with the community bank by a swap of coins for coins, a chain redemption that leaves a household holding a sovereign grassroots coin, which they pay to the central bank for a unit of the fiat currency, and every other transaction of the contract (Appendix~\ref{app:run}).  The central bank runs the same agent, and the contract does not name it.  Denomination itself changed no message type and no step of the protocol; the argument itself reached every line that names a coin or a bond, and the unit of the fiat currency, which the contract leaves outside itself, is handed over by the central bank's person and not by its agent.
\else A payment is a signed message from the payer to the payee, countersigned by the issuer of the coins paid and recorded in the issuer's log, which is held by machines the issuer designates: state custodians chosen among their friends for a natural person~\cite{eitan2026securing}, its own machines for an institution.  The issuer is a party to every payment in their coins (Definition~\ref{def:sgc-transactions}), and a payment is final when the issuer has countersigned it and those machines have recorded it, with no shared ledger and no consensus~\cite{lewis2023grassroots}; it needs its payer, its payee, the issuer and those machines, so it is confidential to the three parties and bears no fee required by the architecture, and a payment in sovereign grassroots coins proceeds over any network that reaches the central bank and the machines holding its log.  A person who loses their key or device recovers their identity and their log from their custodians~\cite{eitan2026securing}.  Electronic-signature laws~\cite{eidas2014,esign2000,uncitral2001esignatures} provide that a record or a signature is not denied legal effect for being electronic, from which it does not follow that the obligation is an enforceable debt, which rests on the ordinary rules of contract and on the jurisdiction, nor that the instrument is a negotiable note within the bills and notes acts~\cite{boe1882,ucc3,geneva1930}, nor that the central bank has the statutory authority to issue it~\cite{boe2026progress}.  The realisation, its volition-guarded specification~\cite{lewis2026volitional,shapiro2026formalising} and its provenance and double-spending guarantees are in Appendix~\ref{app:added-money}.  Sovereign and non-sovereign grassroots coins and bonds have been implemented in GLP and tested on a small scale, in a village market with the central bank as a party (Appendix~\ref{app:run}).\fi

\section{Sovereign Grassroots Coins as a Central Bank Digital Currency}\label{sec:cbdc}

\full{Seven central banks and the Bank for International Settlements have stated what a retail central bank digital currency must satisfy~\cite{bis2020cbdc}: three foundational principles --- do no harm to monetary and financial stability, coexistence with cash and other forms of money, and innovation and efficiency --- and fourteen core features in three groups.  \full{The instrument features are convertibility at par, convenience, acceptance and availability including offline, and low or no cost to end users; the system features are security, instant settlement, resilience, continuous availability, high throughput, scalability, interoperability and flexibility; the institutional features are a robust legal framework and conformity with regulatory standards.  }{}The same group has analysed the financial-stability implications of a central bank digital currency~\cite{bis2021finstab} and its system design~\cite{bis2021design}.}{Seven central banks and the Bank for International Settlements have stated what a retail central bank digital currency must satisfy~\cite{bis2020cbdc}, and the same group has analysed its financial-stability implications~\cite{bis2021finstab} and its system design~\cite{bis2021design}.}

Tables~\ref{tab:requirements} and~\ref{tab:features}\full{}{, in Appendix~\ref{app:tables},} check sovereign grassroots coins against the three principles and the fourteen features, naming for each the mechanism of Section~\ref{sec:money} it rests on and its status.  The coins are checked by themselves, before the credit of the persons and the bonds; where a later component strengthens a row, the cell names its section.  The paragraphs below add what a cell cannot hold.

\iffull\mypara{Disintermediation}
Deposits converted into central bank money leave the banks to fund themselves in wholesale markets, at higher cost and less stably, and to lend less or at higher rates~\cite{bis2021finstab}.  The mitigants proposed are limits on holdings, limits on transactions, no or tiered remuneration, and a waterfall that sweeps holdings above a limit into a bank account~\cite{bis2021finstab}; the digital euro adopts non-remuneration, a holding limit and a reverse waterfall that draws on a linked bank account~\cite{ecb2025closing}.  Each of them caps the central bank digital currency a person may hold, and each is needed because the design converts deposits into it on demand.  Here the obligation runs one way (Section~\ref{sec:issuance}): the central bank is bound to redeem its coins and is bound to issue them to no one, so the public holds what the central bank has issued and not taken back, coins and bonds together.  No operation turns a bank deposit into newly issued central bank money on demand: no holder of a bank deposit has a right to newly issued sovereign grassroots coins, and a bank's reserves are the bank's own, so a depositor obtains sovereign grassroots coins from a holder, which reallocates coins already in circulation, or from the central bank against cash their bank pays them, where the central bank elects to make that exchange --- which changes the form of central bank money and not its quantity (Lemma~\ref{lem:conservation}).  Here the central bank's issuance does what the mitigants do from outside, and no limit, tier or waterfall is added to bound the demand-driven creation of central bank money; whether one is wanted for concentration, transition or usability is a separate question, which this architecture leaves open.

\mypara{Runs}
A run into central bank money costs less than a run into cash, so a systemic run may be faster and larger; how much so depends on the design of the central bank digital currency and on the credibility of deposit insurance~\cite{bis2021finstab}.  A run reaches sovereign grassroots coins only through the holders of them, and the central bank can always fulfil a presentation of its own coins by issuing a fiat coin, so no presentation goes unmet and no run creates a debt of the central bank.  What the public can obtain of them is bounded by what the central bank has issued and not taken back, coins and bonds together, and maturity turns one of its bonds into one of its coins and creates no debt of it.  So a run on a bank is bounded by what that bank can pay its depositors, as it is today, and no run brings newly issued central bank money into being.

\mypara{Offline payment}
A payment in sovereign grassroots coins needs the central bank and the machines holding its log to be reachable (Section~\ref{sec:payment}), so it cannot be made offline from them, and the feature is not met.  The reports reach offline payment through the risk-taking of intermediaries~\cite{auer2020technology} or through value stored on the device~\cite{ecb2023stocktake}; neither is adopted here.  A payment in the coins of a person who is present proceeds over a local network with no connection to the Internet, which is the second component (Section~\ref{sec:credit}); it is not offline from that issuer either.

\mypara{Conformity with regulatory standards}
Full anonymity is not plausible for a central bank digital currency, which must meet the requirements against money laundering and the financing of terrorism~\cite{bis2020cbdc,bis2021design}.  Identities here are keypairs, and the central bank's log records every payment in its own coins and the counterparty of every transaction it takes, so that log is a complete transaction trail by protocol identity.  A key identifies a party to the protocol and not, by itself, a verified natural or legal person: binding the two, and applying the know-your-customer, anti-money-laundering and record-keeping rules to the party so identified, is a function of the deployment and its jurisdiction, as it is today.
\else
The arguments behind rows 1.a, 1.b, 2.c and 4.b --- why no mitigant is needed, where a run goes, why offline payment is not met, and how the record supports the regulatory requirements --- are in Appendix~\ref{app:added-cbdc}.\fi

\mypara{What the list does not ask for}
\full{The fourteen features ask what the money is and how it is paid; none asks who creates credit, and none asks whether the central bank's instruments of policy operate in the instrument the public holds.  Sovereign grassroots coins are checked against the list and leave credit creation and the operations of monetary policy where every other design leaves them, in bank money and central bank reserves.  That is the second limitation of Section~\ref{sec:introduction}, and the two components that follow answer it: the credit of the persons (Section~\ref{sec:credit}), and maturity and interest (Section~\ref{sec:interest}).}{The fourteen features ask what the money is and how it is paid; none asks who creates credit, or whether the central bank's instruments of policy operate in the instrument the public holds.  That is the second limitation of Section~\ref{sec:introduction}, which the credit of the persons (Section~\ref{sec:credit}) and maturity and interest (Section~\ref{sec:interest}) answer.}

\iffull\begin{footnotesize}
\setlength{\tabcolsep}{3pt}
\renewcommand{\arraystretch}{1.05}
\newcolumntype{M}[1]{>{\raggedright\arraybackslash}p{#1}}
\begin{longtable}{M{2.5cm}M{7.4cm}M{1.5cm}}
\caption{The three foundational principles of a retail central bank digital currency~\cite{bis2020cbdc} against sovereign grassroots coins: the mechanism each rests on and its status.  The coins are checked by themselves, before the credit of the persons (Section~\ref{sec:credit}) and the bonds (Section~\ref{sec:interest}); a cell names a later section where that component strengthens the row.  ``By construction'' is claimed for what the architecture fixes, ``argued'' for what rests on an argument under stated assumptions.}
\label{tab:requirements} \\
\hline
\textbf{Requirement}~\cite{bis2020cbdc} & \textbf{How sovereign grassroots coins meet it} & \textbf{Status} \\
\hline\hline
\endfirsthead
\hline
\textbf{Requirement} & \textbf{How sovereign grassroots coins meet it} & \textbf{Status} \\
\hline
\endhead
\hline
\multicolumn{3}{r}{\emph{continued on the next page}} \\
\endfoot
\hline
\endlastfoot
\multicolumn{3}{l}{\textbf{1.\ Foundational principles}} \\
\emph{1.a\ Do no harm: disintermediation} & No retail account, and no conversion of a bank deposit on demand~\cite{kumhof2018cbdc}: the central bank issues its coins in exchange for fiat coins and by its own payments, so no deposit becomes newly issued central bank money; with the credit component it issues them by mutual credit lines (Section~\ref{sec:credit}) & argued \\
\emph{1.b\ Do no harm: runs} & The central bank can always fulfil a presentation of its own coins by issuing a fiat coin, and what the public can obtain of them by redemption is bounded by what it has issued, so there is no open-ended digital run into central bank money & argued \\
\emph{1.c\ Do no harm: monetary policy} & The coins bear no interest, as cash does not, so they introduce no interest-rate competition with deposits, and policy is conducted as it is today; with the bonds the policy rate becomes the rate at which the central bank lends its coins (Section~\ref{sec:interest}) & argued \\
\emph{1.d\ Coexistence} & Redeemable at face value for a unit of the fiat currency by the central bank's obligation (obligation~3); cash and bank money continue; the central bank joins a non-sovereign grassroots currency already in operation as one more issuer & by construction \\
\emph{1.e\ Innovation and efficiency} & A payment needs no infrastructure beyond the parties' devices and the machines holding the issuer's log; every person may issue a currency of their own beside the central bank's (Section~\ref{sec:credit}) & by construction \\
\hline
\end{longtable}
\end{footnotesize}
\fi

\iffull\begin{footnotesize}
\setlength{\tabcolsep}{3pt}
\renewcommand{\arraystretch}{1.05}
\newcolumntype{M}[1]{>{\raggedright\arraybackslash}p{#1}}
\begin{longtable}{M{2.5cm}M{7.4cm}M{1.5cm}}
\caption{The fourteen core features of a retail central bank digital currency~\cite{bis2020cbdc} against sovereign grassroots coins, as Table~\ref{tab:requirements}.}
\label{tab:features} \\
\hline
\textbf{Requirement}~\cite{bis2020cbdc} & \textbf{How sovereign grassroots coins meet it} & \textbf{Status} \\
\hline\hline
\endfirsthead
\hline
\textbf{Requirement} & \textbf{How sovereign grassroots coins meet it} & \textbf{Status} \\
\hline
\endhead
\hline
\multicolumn{3}{r}{\emph{continued on the next page}} \\
\endfoot
\hline
\endlastfoot
\multicolumn{3}{l}{\textbf{2.\ Instrument features}} \\
\emph{2.a\ Convertible at par} & Redeemable by the central bank for one unit of the fiat currency, which it issues (obligation~3) & by construction \\
\emph{2.b\ Convenient} & A payment is a signed message countersigned by the central bank, on the parties' own devices; no account beyond a keypair; ease of use unmeasured & specified \\
\emph{2.c\ Accepted and available, offline included} & Acceptance by the central bank is enforced by the contract; a payment needs the payer, the payee, the central bank and the machines holding its log, so it proceeds over any network that reaches them, and not offline from the central bank & not met \\
\emph{2.d\ Low or no cost to end users} & No intermediary, so no fee is required by the architecture; what a person pays is the cost of operating their device, and what an issuer charges for its services, neither measured & by construction; cost unmeasured \\
\hline
\multicolumn{3}{l}{\textbf{3.\ System features}} \\
\emph{3.a\ Secure} & The central bank countersigns every payment in its coins, proving holdership and excluding double spending; a holder who loses their key or device recovers their identity from their identity custodians and their sovereign grassroots coins from the central bank's log~\cite{eitan2026securing}; these are properties of the protocol, and the resistance of a deployment to attack is not established by them & proved; to be deployed \\
\emph{3.b\ Instant settlement} & Final when the central bank has countersigned it and the machines holding its log have recorded it: no consensus and no intermediary; the time this takes is that of the network and the machines & by construction; latency unmeasured \\
\emph{3.c\ Resilient} & Payments in sovereign grassroots coins need the central bank and the machines holding its log, and no other component; a holder's loss of key or device is remedied from their custodians~\cite{eitan2026securing} & specified \\
\emph{3.d\ Continuously available} & No opening hours and no maintenance window; the availability of the coins is that of the central bank and of the machines holding its log & specified \\
\emph{3.e\ High throughput, scalable} & The central bank and the machines holding its log bound the throughput of payments in its coins, as in a direct central bank digital currency; the currencies of the persons add throughput of their own (Section~\ref{sec:credit}) & unmeasured \\
\emph{3.f\ Interoperable} & Meets the existing systems at redemption, into central bank money or a fiat coin through the banking tools of the currency; exchange across denominations by escrow (Appendix~\ref{app:fx}) & specified; to be deployed \\
\emph{3.g\ Flexible and adaptable} & Instruments are swaps and escrow agreements under conditions; a new instrument is a new condition, with no change to the contract (Section~\ref{sec:interest}) & by construction; to be deployed \\
\hline
\multicolumn{3}{l}{\textbf{4.\ Institutional features}} \\
\emph{4.a\ Robust legal framework} & Three distinct propositions: the signed record is not denied legal effect for being electronic~\cite{esign2000,eidas2014}; the obligation it states is an enforceable debt of the central bank of one unit of the fiat currency, which rests on the ordinary rules of contract and on the jurisdiction; and the central bank's authority to issue this liability is statutory, the digital pound to be introduced only once Parliament has passed the primary legislation~\cite{boe2026progress}.  The mandate this paper takes as given & argued \\
\emph{4.b\ Conformity with regulatory standards} & The central bank's log records every payment in its coins and the counterparty of every transaction it takes, a complete trail by protocol identity, which supports know-your-customer, anti-money-laundering and record-keeping requirements; binding a key to a verified legal person, and conformity itself, are the deployment's and the jurisdiction's & argued \\
\end{longtable}
\end{footnotesize}
\fi

\section{Credit and Liquidity: Non-Sovereign Grassroots Coins}\label{sec:credit}

Every person, natural or legal, issues an $f$-denominated grassroots currency of their own (Definition~\ref{def:fiat-bonds}), on the same three obligations the central bank bears.  Here we add them to the money of Section~\ref{sec:money}: what their credit is and how it is created, how the central bank issues its coins against theirs, why a coin of theirs is worth one unit of the fiat currency, and what a run on one of them reaches.  With them, sovereign grassroots coins are the base money of the denomination.

\subsection{Every Person an Issuer}\label{sec:every-person}

\iffull\mypara{Credit by mutual credit}
Two persons open a mutual credit line by swapping coins of their own issue: each thereafter holds coins of the other, and each may pay the other in the other's coins or present them for redemption.  The credit so created is inside money~\cite{cavalcanti1999inside,mcleay2014money}, secured by social collateral~\cite{besley1995social}: the borrower's standing in their community, the acceptance of their currency, and the mutual credit lines they hold.  Default costs the issuer that standing --- an unfulfilled presentation evidences it, holders redeem what the issuer still holds, and no one extends them new credit~\cite{besley1995social,shapiro2024gc}.  Credit is thus created by every person, which none of the designs compared allows, and a bank's coins play the economic role of its demand deposits.

\mypara{Who pays in what}
Households, firms and banks accept payments in their own currencies, in the currency of the bank they are customers of, or in the sovereign grassroots currency (Definition~\ref{def:sgc-transactions}).  The central bank holds no retail account and takes part in no payment but in its own coins, to every one of which it is a party (Definition~\ref{def:sgc-transactions}).  Banks settle among themselves by set-off and by chain redemption (Section~\ref{sec:peg}), and need no central bank money to do so.

\mypara{A payment in the coins of a person}
The payment of Section~\ref{sec:payment} is the same transaction with another issuer: signed by the payer, countersigned by the issuer of the coins paid, recorded in the issuer's log by the machines holding it.  It needs its payer, its payee, the issuer and those machines, so it proceeds over any network that reaches them --- Bluetooth, or a local network with no connection to the Internet --- and it is confidential to the three parties.  No component is shared by all issuers, so none bounds the throughput of the whole, which grows with the issuers and their custodians.  The central bank does not see these payments unless it is itself a party to one.
\else Two persons open a mutual credit line by swapping coins of their own issue, and the credit so created is inside money~\cite{cavalcanti1999inside,mcleay2014money} secured by social collateral~\cite{besley1995social}; a bank's coins play the economic role of its demand deposits, the central bank holds no retail account and takes part in no payment but in its own coins, and banks settle among themselves by set-off and chain redemption.  A payment in a person's coins is the payment of Section~\ref{sec:payment} with that issuer, so it proceeds over a local network with no connection to the Internet, is confidential to the three parties, and is bounded by no component shared with other issuers (Appendix~\ref{app:added-credit}).\fi

\subsection{The Central Bank's Mutual Credit Lines}\label{sec:credit-lines}

\mypara{Issuance against the counterparty's coins}
A counterparty who issues coins of their own gives the central bank something to lend against.  Here the central bank issues its coins by mutual credit lines, and guarantees no conversion of bank deposits into them on demand.  A mutual credit line is a swap of its coins for the coins of the counterparty it chooses --- a bank, a city, a firm, a cooperative or a household, any counterparty the architecture admits and its mandate allows --- in the amount and at the time it chooses, with no bank between them and no account at the central bank.  The counterparty's coins are redeemable against their holdings, and their coins in circulation are redeemable in full exactly when their holdings cover them (Section~\ref{sec:runs})~\cite{shapiro2024gc}.

\iffull\mypara{The precedent, and the difference}
Issuance that is not by conversion is Kumhof and Noone's principle~\cite{kumhof2018cbdc}: no guaranteed on-demand convertibility of bank deposits into the central bank digital currency, and issuance only against eligible securities, principally government securities.  There the central bank buys securities its counterparty already holds; here it lends against coins its counterparty issues.  The counterparty of an asset purchase is whoever holds the security; the counterparty of a mutual credit line is whoever the central bank chooses.  We adopt that principle and not their separation of the currency from reserves: a sovereign grassroots coin is redeemed into a fiat coin on presentation, without restriction (Definition~\ref{def:fiat-bonds}), and the quantity is governed by issuance rather than by a bar on convertibility.  A mutual credit line is at par, on demand and without interest; the term credit line of Section~\ref{sec:instruments}, an escrowed facility whose draws carry maturities and interest, belongs to the third component~\cite{shapiro2026bonds}.
\fi

\mypara{Creation and the quantity of base money}
As the central bank's lending creates reserves today, its mutual credit lines create sovereign grassroots coins, and unlike reserves the public holds and pays them.  The sovereign grassroots coins and bonds in circulation are at most what the central bank has issued and not taken back (Section~\ref{sec:issuance}): they enter circulation only by transactions of the central bank, and thereafter circulate by payment, swap and redemption, redemption into fiat coins reducing them and no demand of a depositor increasing them.  Maturity moves face value from its bonds to its coins and adds nothing to the two together.

\iffull
\mypara{Quantity and allocation, and no rate}
\fi
A swap of coins is at par, on demand and bears no interest, so what the central bank sets by a mutual credit line is the counterparty, the quantity and the timing of the liquidity it provides, and not a rate.  The rate arrives with maturity (Section~\ref{sec:interest}).  Every other person creates credit independently of the central bank, by their own mutual credit lines.

\mypara{Disintermediation and runs}
\iffull Sovereign grassroots currencies add no on-demand operation that turns a bank deposit into newly issued central bank money: no holder of a bank deposit has a right to newly issued sovereign grassroots coins, and a bank's reserves are the bank's own.  A depositor obtains sovereign grassroots coins from a holder, which reallocates coins already in circulation, or from the central bank against cash their bank pays them, where the central bank elects to make that exchange, which changes the form of central bank money and not its quantity (Lemma~\ref{lem:conservation}).  No limit, tier or waterfall is therefore needed to bound the demand-driven creation of central bank money, which the central bank issues at its own election; whether a limit is wanted for other reasons --- concentration, transition or usability --- is a separate question, which this architecture leaves open\full{; the Bank of England has measured what a limit is worth where deposits are converted on demand~\cite{boe2025limits}}{}.  Banks continue to issue and to lend in their own grassroots currencies beside every other issuer.  A run is a run on one issuer and is met from their holdings (Section~\ref{sec:runs}).  The central bank can always meet presentations of its own coins by issuing fiat coins, and what the public can obtain of them by redemption is bounded by what it has issued, so there is no open-ended digital run into central bank money.\else A run is a run on one issuer and is met from their holdings (Section~\ref{sec:runs}), and no demand of a depositor creates central bank money; the argument is Appendix~\ref{app:added-cbdc}.\fi

\subsection{The Peg and the Price of a Coin}\label{sec:peg}

Redemption reaches across chains of holdings, coin by coin.

\begin{proposition}[Chain Redemption~\cite{shapiro2024gc,shapiro2026bonds}]\label{prop:chain-redemption}
Let $p_0, p_1, \ldots, p_k$ be a sequence of distinct persons with $k \ge 1$ such that each $p_i$, $0 \le i < k$, holds an $f$-denominated $p_{i+1}$-coin.  Then from that configuration there is a sequence of $k-1$ redemptions by $p_0$ whose effect is that $p_0$ holds an $f$-denominated $p_k$-coin.
\end{proposition}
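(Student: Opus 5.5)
The proof is by induction on $k$, using only the transaction Redeem$_{p,q}(f;\,r,\ast)$ of Definition~\ref{def:sgc-transactions} with the bearer choosing a coin. The key observation is that Redeem is guarded by the bearer $p$ alone. Its precondition is only that $p$ holds an $f$-denominated $q$-coin and $q$ holds an $f$-denominated $r$-instrument. So any redemption whose precondition holds can be taken at $p_0$'s election, and no consent of the intermediaries is needed.

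For $k=1$ the hypothesis already says $p_0$ holds a $p_1$-coin, and zero redemptions suffice.

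For $k \ge 2$, I would take the step Redeem$_{p_0,p_1}(f;\,p_2)$, choosing a coin. Its precondition holds because $p_0$ holds a $p_1$-coin and $p_1$ holds a $p_2$-coin. Afterwards $p_0$ holds a $p_2$-coin, and $p_1$ holds the returned $p_1$-coin in place of the $p_2$-coin it gave up.

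Next I would check that the rest of the chain is untouched. The transaction changes only the states of $p_0$ and $p_1$. Because the persons are distinct, each $p_i$ with $2 \le i < k$ still holds its $p_{i+1}$-coin. The shorter sequence $p_0, p_2, p_3, \ldots, p_k$ is therefore again a chain of distinct persons, now of length $k-1$.

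By the induction hypothesis, this shorter chain admits $k-2$ further redemptions by $p_0$ ending with $p_0$ holding a $p_k$-coin. Together with the first step this gives $k-1$ redemptions, as claimed.

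The main obstacle is this preservation step. One must check that removing the $p_2$-coin from $p_1$ cannot break a link that is still needed. Distinctness guarantees this: $p_1$ appears in the chain only as the holder for the first link and as the issuer $p_1$, and neither role is used again in the shortened chain. A possible subtlety is when $p_0$ itself equals some issuer, but distinctness excludes this too.

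Finally, I would remark that each step is an instance of a transaction of the contract, so the resulting run is safe. By Lemma~\ref{lem:conservation}, every intermediate $p_i$ holds, in place of the $p_{i+1}$-coin, one of its own coins (a claim on itself, so set-off). Its net position is therefore unchanged, and the chain redemption creates no debt.
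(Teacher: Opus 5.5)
Your proof is correct and is the standard argument for this result, which the paper cites from its predecessors rather than reproving: you iterate the bearer-guarded Redeem, taking a coin each time, so that $p_0$ trades its $p_i$-coin for $p_i$'s $p_{i+1}$-coin, and distinctness ensures that each step involves only $p_0$ and $p_i$ and so leaves the remaining links intact. One small point: your closing remark that each intermediate ends up holding one of its own coins follows directly from the effect of Redeem, not from Lemma~\ref{lem:conservation}, and the proposition does not need it.
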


\begin{definition}[$f$-Holding Graph, Sink, Sovereign-Connected]\label{def:holding-graph}\label{def:sovereign-connected}
Given a configuration, the \temph{$f$-holding graph} is the directed multigraph with the persons as vertices and an edge $u \rightarrow v$ for each $f$-denominated $v$-coin and each $f$-denominated $v$-bond that person $u \ne v$ holds; its \temph{sink} is the sovereign $f_\sigma$; and the \temph{$f$-coin holding graph} is its subgraph of the edges of coins.  A person $p$ is \temph{sovereign-connected in $f$} if the $f$-coin holding graph has a directed path from $p$ to $f_\sigma$.
\end{definition}

The out-degree and in-degree of $p$ in the $f$-holding graph are their assets $a_p$ and liabilities $\ell_p$ in $f$, counted in coins and bonds~\cite{shapiro2024gc,shapiro2026bonds}; a coin or bond an issuer holds of their own carries no value and no edge.  A directed path from $p$ to $q$ in the $f$-coin holding graph is a chain of mutual liquidity: by Proposition~\ref{prop:chain-redemption}, after a sequence of redemptions along it $p$ holds an $f$-denominated $q$-coin, and the coins along it are consumed.  In particular $f_\sigma$ is sovereign-connected in $f$, by the empty path.

\begin{proposition}[Peg Propagation]\label{prop:peg}
Let $q$ hold an $f$-denominated $p$-coin, where $p$ is sovereign-connected in $f$.  Then there is a sequence of redemptions by $q$, in the denomination $f$, whose effect is that $q$ holds a sovereign grassroots coin.
\end{proposition}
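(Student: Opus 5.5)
The plan is to reduce Proposition~\ref{prop:peg} to Proposition~\ref{prop:chain-redemption} by building a chain of distinct persons from $q$ to $f_\sigma$. There are three cases.

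\emph{The issuer is the sovereign.} If $p = f_\sigma$, the empty sequence of redemptions suffices: $q$ already holds a sovereign grassroots coin.

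\emph{The holder is the sovereign.} If $q = f_\sigma$, again nothing is needed. Every coin $q$ holds is then in the hands of the sovereign, and the statement asks only that $q$ hold a sovereign coin, which its own coins provide. More carefully, one can take the empty sequence when $p = q$. The statement reads naturally with $q \ne f_\sigma$, and I would note this degenerate case explicitly.

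\emph{The general case.} Since $p$ is sovereign-connected in $f$ (Definition~\ref{def:sovereign-connected}), the $f$-coin holding graph has a directed path from $p$ to $f_\sigma$. I would take a shortest such path, $p = v_1 \to v_2 \to \cdots \to v_m = f_\sigma$. Being shortest, it is simple, so the $v_i$ are distinct. Each edge $v_i \to v_{i+1}$ means $v_i \ne v_{i+1}$ holds an $f$-denominated $v_{i+1}$-coin. Setting $v_0 = q$, the edge $q \to p$ holds by hypothesis, so $v_0, v_1, \ldots, v_m$ satisfies the hypothesis of Proposition~\ref{prop:chain-redemption} with $k = m \ge 1$. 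That proposition then gives $m-1$ redemptions by $q$ after which $q$ holds an $f_\sigma$-coin in $f$, that is, a sovereign grassroots coin. By Corollary~\ref{cor:no-cross-arbitrage}, or directly from the form of Redeem, every redemption is in the denomination $f$.

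The main obstacle is distinctness, because Proposition~\ref{prop:chain-redemption} requires distinct persons and $q$ might already lie on the path. I would handle this by a shortcut. If $q = v_j$ for some $j \ge 1$, then $q$ itself holds a $v_{j+1}$-coin, so I restart the chain from $q$ using the suffix $v_j \to v_{j+1} \to \cdots \to v_m$ instead of the coin of $p$. If $j = m$, then $q = f_\sigma$, and this is the degenerate case above. If $j < m$, then $q$ holds a $v_{j+1}$-coin and $v_{j+1}$ is sovereign-connected, so I apply the argument to that coin. Equivalently, I would choose $j$ to be the last index with $v_j = q$, or simply take a shortest path in the coin graph from $q$ to $f_\sigma$ that begins with an edge to $p$ or uses any other edge out of $q$.

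The cleanest version sidesteps the whole issue. Because $q \to p$ is an edge, $q$ is itself sovereign-connected, so I would take a shortest coin path from $q$ to $f_\sigma$; it is automatically simple and begins at $q$. The one remaining point is that the statement is about \emph{some} sequence of redemptions by $q$, not necessarily one that uses the given $p$-coin, so starting from a shortest path out of $q$ is legitimate.
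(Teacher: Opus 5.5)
Your final, ``cleanest'' version is the paper's proof. The edge $q \to p$ followed by $p$'s path makes $q$ sovereign-connected, a simple (for you, shortest) path from $q$ to $f_\sigma$ is extracted, and Proposition~\ref{prop:chain-redemption} is applied along it. The shortcut through $q = v_j$ is then unnecessary, as you say.

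\textbf{The gap is the case $q = f_\sigma$ with $p \ne f_\sigma$.} Both your shortcut (at $j = m$) and your cleanest version reduce to it, and there your argument fails.
\begin{itemize}
\item Your claim that the sovereign's own coins provide the conclusion is false. The sovereign may hold none of its own coins, for instance after swapping all it has minted with a bank for the bank's coins. The statement allows only redemptions, so a fresh mint is not available.
\item The shortest path does not help either. From $f_\sigma$ to itself it is empty, while Proposition~\ref{prop:chain-redemption} needs $k \ge 1$. The paper's short proof passes over this case in the same way.
\end{itemize}

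\textbf{The repair is direct.} Take a simple path $p = v_1 \to \cdots \to v_m = f_\sigma$ with $m \ge 2$, and let $f_\sigma$ redeem in turn at $v_1, \ldots, v_{m-1}$.
\begin{itemize}
\item These persons are distinct from one another and from $f_\sigma$, so each redemption is between two distinct persons.
\item Each $v_i$ still holds its $v_{i+1}$-coin when $f_\sigma$ arrives.
\item The last redemption takes from $v_{m-1}$ a coin of $f_\sigma$.
\end{itemize}
This is the chain-redemption argument run on the cycle $f_\sigma, v_1, \ldots, v_{m-1}, f_\sigma$. That argument needs $p_0, \ldots, p_{k-1}$ distinct but not $p_k$. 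So either prove it in that form, or cite Proposition~\ref{prop:chain-redemption} with that remark. Alternatively, exclude $q = f_\sigma$ explicitly, since a coin an issuer holds of their own carries no value.

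\textbf{A smaller point: the case $q = p$.} There is then no edge $q \to p$, since Definition~\ref{def:holding-graph} draws none for a self-held coin. But $q$ is sovereign-connected by hypothesis, so your main argument still applies unless $q = p = f_\sigma$, which is your trivial first case.
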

\begin{proof}
$q$ holding an $f$-denominated $p$-coin is an edge $q \rightarrow p$ of the $f$-coin holding graph, and $p$ being sovereign-connected in $f$ is a directed path from $p$ to the sovereign, so the graph has a directed path from $q$ to $f_\sigma$, and with it a simple such path, whose persons are distinct and each of which holds an $f$-denominated coin of the next.  By Proposition~\ref{prop:chain-redemption}, after a sequence of redemptions along it $q$ holds an $f$-denominated coin of the sovereign.
\end{proof}

\mypara{The two legs of the peg}
The \emph{direct leg} is the issuer's obligation (Definition~\ref{def:fiat-bonds}): an $f$-denominated grassroots coin is a claim of one unit of $f$ on its issuer, worth the issuer's credit, fulfilled in kind or by a fiat coin.  The \emph{mechanised leg} is Proposition~\ref{prop:peg}: for a sovereign-connected issuer, redemption within the currency realises the claim in the sovereign's base money, whatever fiat the issuer holds.  A person's coins trade at par with the unit of $f$ while they meet redemptions as these arrive, and their capacity to meet them is measured by the ratios of Section~\ref{sec:runs}.  Proposition~\ref{prop:peg} moves one coin and consumes the coins along its chain, so being sovereign-connected secures the price of a coin and not the volume redeemable at once; what a group can obtain now is the maximum flow of Section~\ref{sec:runs}.

\mypara{The price of a coin}
Arbitrage makes par the price of the coins of an issuer who redeems them on demand, which we call meeting presentations, whether by fiat coins or by coins of issuers who meet theirs.  The persons who meet presentations in $f$ are defined in stages: those who fulfil every presentation by a fiat coin, then those who fulfil every presentation by a fiat coin or by a coin of a person of an earlier stage.

\begin{definition}[Meeting Presentations]\label{def:meets}
Let $M^0_f$ be the set of persons who fulfil every presentation of their $f$-denominated coins by a fiat coin, and $M^{k+1}_f$ the set of persons who fulfil every such presentation by a fiat coin or by an $f$-denominated coin of a person in $M^k_f$.  A person \temph{meets presentations in $f$} if they belong to $M_f = \bigcup_k M^k_f$.
\end{definition}

The argument that follows is the model's, and frictionless: a swap or a redemption costs nothing, the obligations of Definition~\ref{def:fiat-bonds} are met as assumed, newly minted coins can be sold at the quoted price while that price stands, and the arbitrage itself does not exhaust the issuer's capacity to meet presentations.

\begin{proposition}[Par by Arbitrage]\label{prop:par}
Let $p$ meet presentations in $f$.  Then the arbitrage-free price of an $f$-denominated $p$-coin is one unit of $f$.
\end{proposition}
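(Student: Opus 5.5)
The proof goes by induction on the stage $k$ of Definition~\ref{def:meets}: for every $p \in M^k_f$, the arbitrage-free price of an $f$-denominated $p$-coin is one unit of $f$. Since $M_f = \bigcup_k M^k_f$, every $p$ that meets presentations lies in some stage $M^k_f$, and the claim for $M_f$ follows. Throughout we work under the frictionless assumptions stated before the proposition: swaps and redemptions cost nothing, the obligations of Definition~\ref{def:fiat-bonds} are met, newly minted coins can be sold at the quoted price, and arbitrage does not exhaust the issuer's capacity to meet presentations.

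At each stage there are two one-sided arguments. For the lower bound, suppose a $p$-coin trades at a price $x < 1$. An arbitrageur buys it for $x$ and presents it to $p$ (obligation~3). Under the stage hypothesis, $p$ fulfils the presentation either by a fiat coin, worth $1$ by definition, or, for $k \ge 1$, by an $f$-denominated coin of some person in $M^{k-1}_f$, which the induction hypothesis prices at $1$. Either way the arbitrageur gains $1 - x > 0$, a riskless profit, so $x \ge 1$.

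For the upper bound, suppose the price is $x > 1$. Then $p$ mints a coin at no cost (Mint), sells it for $x$, and stands ready to meet its later presentation at a cost of one unit: either a fiat coin, or a coin of an earlier-stage issuer bought at par. The profit is $x - 1 > 0$. Equivalently, any holder of a fiat coin or of an earlier-stage coin can acquire a $p$-coin from $p$ at par, since under obligation~2 $p$ prices held coins at par against its own on standing offer, and then sell it at $x$. So $x \le 1$.

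The base case $M^0_f$ uses only fiat fulfilment, and the inductive step replaces "fiat coin" by "coin of $M^{k}_f$, priced at $1$ by hypothesis."

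I expect the main obstacle to be the upper bound. Here I must be careful about which party's action closes the arbitrage. Minting and selling is at $p$'s election, and obligation~3 binds only redemption, not issuance, so an outside arbitrageur cannot force $p$ to sell coins at par.

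My first approach is to rest the upper bound on obligation~2. Since $p$ prices held $f$-coins at par against its own on standing offer, an outside party holding a fiat-backed or earlier-stage coin can swap it with $p$ for a $p$-coin at par.

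The fallback is the modelling assumption that newly minted coins can be sold at the quoted price. This lets $p$'s own profit motive close the gap, with the final assumption ensuring that meeting the induced presentations stays within $p$'s capacity.

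I would note explicitly that this step relies on the model's assumptions rather than on any transaction of Definition~\ref{def:sgc-transactions}.
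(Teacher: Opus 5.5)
Your proof is correct and follows the paper's. Below par, the buyer presents the coin; above par, $p$ mints and sells; both rest on the frictionless assumptions stated before the proposition. The one difference is in the lower bound. You induct on the stage and value the coin received at par by the induction hypothesis. The paper instead has the buyer present each coin received in turn, so that for $p \in M^k_f$ at most $k+1$ presentations end in a fiat coin. The paper's chain realises the gain in fiat without appealing to the market price of any intermediate coin; your induction is that chain folded up.

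Drop the obligation-2 route to the upper bound, because it misreads the obligation. Obligation~2 is a standing offer of the coins and bonds $p$ \emph{holds}, priced at par in $p$'s own coins. It runs in the redemption direction: it lets the bearer of a $p$-coin take a held instrument for it. It gives no one a right to obtain a $p$-coin from $p$ against a fiat coin or another issuer's coin. Section~\ref{sec:issuance} is explicit that no obligation requires an issuer to mint or to part with what they have minted. Your ``Equivalently'' sentence is therefore false. The upper bound rests, as in the paper, only on $p$'s own election to mint and sell, under the assumption that newly minted coins sell at the quoted price. What you call the fallback is the argument itself.
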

\begin{proof}
Suppose an $f$-denominated $p$-coin trades at $\pi < 1$ units of $f$.  A buyer acquires one and presents it to $p$, receiving a fiat coin or a coin of a person of an earlier stage, which they present in turn; with $p \in M^k_f$, at most $k+1$ presentations end in a fiat coin, one unit of $f$ for $\pi$: a gain of $1 - \pi$ per coin, without risk, repeated while the price stands.  Suppose it trades at $\pi > 1$.  Minting costs $p$ nothing and obliges them for one unit of $f$ per coin (Definition~\ref{def:fiat-bonds}), so $p$ mints and sells, gaining $\pi - 1$ per coin, likewise while the price stands.  Only $\pi = 1$ admits neither trade.
\end{proof}

\iffull The sovereign can always issue a fiat coin, so it is in $M^0_f$, and a chain of mutual liquidity to it carries par along the chain with no fiat coin paid but the sovereign's.

\begin{corollary}[Par by the Sovereign]\label{cor:par-sovereign}
Let $p$ be sovereign-connected in $f$, and let each of $p$ and the persons on a simple path from $p$ to $f_\sigma$ in the $f$-coin holding graph fulfil every presentation of their $f$-denominated coins by an $f$-denominated coin of the next person on the path, or by a fiat coin.  Then the arbitrage-free price of an $f$-denominated $p$-coin is one unit of $f$, and only the sovereign pays a fiat coin.
\end{corollary}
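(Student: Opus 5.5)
The plan is to reduce the corollary to Proposition~\ref{prop:par} by showing that $p\in M_f$, and then to read off the claim about fiat coins from the way the stages are built. First I fix the simple path $p=p_0,p_1,\ldots,p_k=f_\sigma$ in the $f$-coin holding graph given by the hypothesis; it exists because $p$ is sovereign-connected. The sovereign can always issue a fiat coin to fulfil a presentation of its own coins (Section~\ref{sec:the-sovereign}), so, taking it to do so as the hypothesis allows, $f_\sigma\in M^0_f$.

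Next I induct backwards along the path, proving that $p_{k-j}\in M^j_f$ for $j=0,\ldots,k$. The base case $j=0$ is the sovereign, just handled. For the step, suppose $p_{i+1}\in M^j_f$. By hypothesis $p_i$ fulfils every presentation of their $f$-denominated coins either by an $f$-denominated $p_{i+1}$-coin or by a fiat coin. This is exactly the defining condition of $M^{j+1}_f$ in Definition~\ref{def:meets}, so $p_i\in M^{j+1}_f$. Since $p_i$ holds a $p_{i+1}$-coin, the edge $p_i\rightarrow p_{i+1}$, such a fulfilment is available within the currency, and Proposition~\ref{prop:chain-redemption} confirms that the sequence of presentations along the path is executable. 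The induction gives $p=p_0\in M^k_f\subseteq M_f$. Proposition~\ref{prop:par} then yields an arbitrage-free price of one unit of $f$, under the frictionless assumptions stated before it.

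For the clause that only the sovereign pays a fiat coin, I interpret the hypothesis as saying that each non-sovereign $p_i$ fulfils by the next person's coin; the disjunct ``or by a fiat coin'' is needed only at $f_\sigma$. I would make this interpretation explicit in the write-up. Under it, the arbitrage chain in the proof of Proposition~\ref{prop:par} runs as follows: the buyer presents the $p$-coin and receives a $p_1$-coin, presents that and receives a $p_2$-coin, and so on. After $k$ presentations the chain ends with $f_\sigma$ paying the fiat coin, so no other person pays one.

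The main obstacle is this last clause together with the stage bookkeeping. If an intermediate person is allowed to pay fiat as the hypothesis literally permits, the statement ``only the sovereign pays'' fails. I would therefore either state the stricter reading or weaken the clause to say that the chain can be completed with only the sovereign paying fiat, which is what Proposition~\ref{prop:peg} supplies. A second, minor point is that membership in $M^{j+1}_f$ needs every presentation to be fulfillable. This requires that $p_i$ keep holding enough $p_{i+1}$-coins, which is the same capacity assumption already made for Proposition~\ref{prop:par}; I would invoke that assumption rather than prove it.
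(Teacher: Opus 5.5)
Your proposal is correct and follows the paper's own proof: $f_\sigma\in M^0_f$, a backward induction along the simple path that places each person one stage above the next, then Proposition~\ref{prop:par}, with every presentation but the last fulfilled by a coin and the last by the sovereign's fiat coin. Your explicit reading, that only $f_\sigma$ uses the fiat disjunct, is a sound clarification and not a departure: the paper's closing sentence (``every presentation but the last is fulfilled by a coin'') assumes it tacitly.
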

\begin{proof}
$f_\sigma \in M^0_f$, and a person on the path who fulfils every presentation by a coin of the next person is in $M^{k+1}_f$ when the next is in $M^k_f$; by induction from the sovereign backwards along the path, $p \in M_f$, and Proposition~\ref{prop:par} applies.  Every presentation but the last is fulfilled by a coin, and the last by the sovereign's fiat coin.
\end{proof}
\else The sovereign can always issue a fiat coin, so it is in $M^0_f$, and a chain of mutual liquidity to it carries par along the chain with no fiat coin paid but the sovereign's (Corollary~\ref{cor:par-sovereign}, Appendix~\ref{app:added-credit}).\fi

So, while redemptions are met, the currencies of the persons and of the sovereign are one currency, with the fiat currency as its unit of account.  The coins of a person who leaves presentations unfulfilled are still debts of $f$, and they trade on that person's credit, as a failing bank's deposits do.  Pegging each of two denominations at par imposes no constraint on the exchange rate between the two fiat currencies: by Corollary~\ref{cor:no-cross-arbitrage} instruments of two denominations never meet in a transaction, so no arbitrage links the two pegs, and each sovereign pegs its own denomination independently (Appendix~\ref{app:fx}).

\iffull\mypara{The hierarchy of money}
A denomination whose sovereign has issued its grassroots coins reproduces the hierarchy of money in grassroots form: the sovereign grassroots coins are the base money of the denomination, redeemable for the fiat currency; the coins of sovereign-connected persons are inside money, in the relation commercial bank deposits bear to cash~\cite{mcleay2014money}, without banks.  A commercial bank pegs its deposits to cash by its obligation to redeem at par on demand, backed by reserves; a grassroots issuer bears the same obligation for their coins, fulfilled in kind or by a fiat coin, and, when sovereign-connected, realised mechanically in base money.  Par follows from denomination and access to the sovereign's base money while redemption obligations are met, with no separate stabilisation mechanism.
\else A denomination whose sovereign has issued its grassroots coins reproduces the hierarchy of money without banks: the sovereign grassroots coins are its base money and the coins of sovereign-connected persons are inside money, in the relation deposits bear to cash~\cite{mcleay2014money} (Appendix~\ref{app:added-credit}).\fi

\subsection{Liquidity, Solvency and Runs}\label{sec:runs}

\iffull
The measures are per person and the results per community; both are established for grassroots currencies and bonds in~\cite{shapiro2024gc,shapiro2026bonds}, and here every coin and bond is $f$-denominated, the graph is the $f$-holding graph, and the sink is the sovereign $f_\sigma$ (Definition~\ref{def:holding-graph}).

\mypara{The ratios}
The cash, quick and current liquidity ratios of corporate finance apply to grassroots coins and bonds, with maturity in the role of asset class, and the creditworthiness measures of international monetary economics --- foreign debt, current account and velocity --- apply to grassroots currencies~\cite{shapiro2024gc,shapiro2026bonds}.  Under denomination they are stated in units of $f$, at face value, comparable across persons and with conventional balance sheets, and for a sovereign-connected issuer who meets redemptions the face value is the price (Corollary~\ref{cor:par-sovereign}).  Ratios are computed per denomination: coins and bonds of distinct denominations never meet in a transaction (Corollary~\ref{cor:no-cross-arbitrage}), so one ratio over several denominations would count assets that cannot meet the liabilities in its denominator.  The ratios separate insolvency, which no loan remedies, from illiquidity, which a loan does~\cite{shapiro2026bonds}.

\mypara{Solvency}
A \emph{computation} is a sequence of configurations, each reached from the previous by a transaction of Definition~\ref{def:sgc-transactions}; it is \emph{safe} if every step is one.  The results are stated against the sink, the member with whom a community's net position is left.  A set of persons containing $f_\sigma$ is \emph{sink-closed} if every edge of the $f$-holding graph between two vertices other than $f_\sigma$ has both endpoints in the set or neither: a community whose only claims across its boundary, in $f$, are with the sovereign.  A redemption in which the debtor is not $f_\sigma$ and the instrument taken is not the debtor's own is a \emph{clearing redemption}, and a bond returns to its issuer only as a coin, by Mature and then by a clearing redemption.  The holdings of a sink-closed set $V$ \emph{clear} if some finite safe computation of clearing redemptions, Mature transactions and advance-date transitions of members of $V$ leads to a configuration in which every $f$-denominated coin and bond of an issuer in $V$ other than $f_\sigma$ is held by its issuer~\cite{shapiro2024gc,shapiro2026bonds}.

\begin{proposition}[Graph Solvency~\cite{shapiro2024gc,shapiro2026bonds}]\label{prop:graph-solvency}
The holdings of a sink-closed set $V$ clear if and only if $a_p \ge \ell_p$ for every member $p$ of $V$ other than $f_\sigma$, and in cleared holdings each such member holds exactly $a_p - \ell_p$ sovereign grassroots coins and bonds.
\end{proposition}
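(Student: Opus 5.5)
The plan is to prove both directions of Proposition~\ref{prop:graph-solvency} using two invariants that clearing transactions preserve, and then to settle sufficiency with a constructive procedure. Write $a_p$ and $\ell_p$ for the out- and in-degree of $p$ in the $f$-holding graph.

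\emph{The invariant $a_p-\ell_p$.} A clearing redemption by $p$ from $q$ moves one edge $p\to q$ and one edge $q\to r$, with $r\ne q$, into $p\to r$ and $q\to q$. The edge $q\to q$ is no edge, since a self-held instrument carries none. So $a_p$ is unchanged, $q$ loses one asset and one liability, and $r$ keeps its in-degree. A Mature step replaces an edge $p\to q$ (bond) by an edge $p\to q$ (coin), and advance-date changes no holding. Hence $a_p-\ell_p$ is invariant for every member of $V$ under the transactions allowed in clearing. Sink-closure ensures that no edge of a member other than $f_\sigma$ leaves $V$. Every transaction therefore stays inside $V$, and degrees are computed entirely within $V\cup\{f_\sigma\}$.

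\emph{Necessity and the final count.} In cleared holdings every instrument of a member $p\ne f_\sigma$ is held by its issuer, so $\ell_p=0$. By invariance, $a_p-\ell_p$ now equals $a_p$. Every asset of $p$ is an instrument of some issuer that is either in $V\setminus\{f_\sigma\}$, hence held by its issuer and not by $p$, or is $f_\sigma$. So $p$ holds exactly $a_p-\ell_p$ sovereign coins and bonds, and this number must be nonnegative, which gives $a_p\ge\ell_p$ in the initial configuration.

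\emph{Sufficiency.} This is the main obstacle. Assume $a_p\ge\ell_p$ for all $p\ne f_\sigma$ in $V$, and use the potential $\Phi=\sum_{p\ne f_\sigma}\ell_p$; each clearing redemption lowers $\Phi$ by one, so termination is not the issue. The difficulty is to show that whenever $\Phi>0$ some clearing redemption, possibly preceded by Mature steps, is enabled. Take $q$ with $\ell_q>0$. Then $a_q\ge\ell_q>0$, so $q$ holds some instrument of some $r\ne q$, and some $p$ holds a $q$-instrument.

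If that $q$-instrument is a coin, Redeem$_{p,q}(f;r,\ast)$ is a clearing redemption, and $p$ may take the bond or coin of $r$.

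If all $q$-instruments held by others are bonds, advance $d^*_q$ past their maturity dates and apply Mature to turn one into a coin. Advance-date is unguarded and only members of $V$ advance, so this step is permitted.

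Thus a step exists whenever $\Phi>0$. When $\Phi=0$ every instrument of an issuer in $V\setminus\{f_\sigma\}$ is held by its issuer, so the holdings have cleared. The remaining care is to check that the Mature and advance steps do not increase $\Phi$ (they leave every degree fixed) and that finitely many are interleaved, which holds because each is followed by a redemption that lowers $\Phi$.
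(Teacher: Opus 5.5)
Your proof is correct and follows the argument the paper recalls from the cited works: the net position $a_p-\ell_p$ is invariant under clearing redemptions, Mature and advance-date, and each clearing redemption lowers total liabilities. While liabilities remain, a redemption is enabled, preceded by a date advance and a Mature if only bonds are outstanding. One small slip: in a set-off redemption ($r=p$) the edge $p\to r$ is also no edge, so $a_p$ and $\ell_p$ each fall by one and $\Phi$ can fall by two rather than one, which leaves the invariant and termination intact.
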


Clearing is by redemptions, in any order and with no coordination: when the condition holds, every computation in which the dates of the members advance beyond any bound and a member who can redeem or take a matured bond for a coin eventually does reaches cleared holdings, after at most $\sum_{p \in V \setminus \{f_\sigma\}} \ell_p$ redemptions and as many Mature transactions~\cite{shapiro2024gc,shapiro2026bonds}.  The propositions quantify over safe computations and do not assume a holder can inspect the debtor's holdings; in practice a debtor incapable of set-off discloses their holdings to the redeemer.

\mypara{Liquidity}
For clearing, any coin or bond of the debtor may be taken; here a group needs sovereign grassroots coins, now, with no date advancing.  Given a set $A$ of persons not containing $f_\sigma$, the \emph{liquidity of $A$} is the maximum, over finite safe computations of redemptions of members of $A$, of the number of sovereign grassroots coins members of $A$ hold in the last configuration~\cite{shapiro2024gc}.

\begin{proposition}[Graph Liquidity~\cite{shapiro2024gc,shapiro2026bonds}]\label{prop:graph-liquidity}
The liquidity of $A$ is the maximum flow from $A$ to $f_\sigma$ in the $f$-coin holding graph.
\end{proposition}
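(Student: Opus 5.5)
We prove the two inequalities separately, treating the $f$-coin holding graph as a flow network with capacities given by edge multiplicities. The source is $A$; add a super-source $s$ with infinite-capacity edges to each member of $A$. The sink is $f_\sigma$.

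\emph{Upper bound: liquidity is at most the maximum flow.} We use the max-flow/min-cut theorem and argue by invariance. Fix a minimum cut, that is, a set $S$ with $A \subseteq S$ and $f_\sigma \notin S$, whose crossing coin edges number $C$. The key claim is that no redemption increases the number of $f_\sigma$-coins held in $S$ plus the number of coin edges from $S \setminus \{f_\sigma\}$ to outside $S$. Redeem$_{p,q}(f;r,\ast)$ with $p \in A$ only changes two edges: the edge $p \to q$ becomes $p \to r$, and $q \to r$ becomes $q \to q$, which is a self-holding and carries no edge. We check the potential case by case:
\begin{itemize}
\item If $q \notin S$, one crossing edge $p\to q$ is lost. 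At most one is gained, namely $p \to r$ when $r \notin S$. If $r = f_\sigma$, the edge $p\to f_\sigma$ is itself the $f_\sigma$-coin now held in $S$.
\item If $q \in S$ and $r \notin S$, the crossing edge $q \to r$ is replaced by the crossing edge $p \to r$.
\end{itemize}
Bonds only leave the coin graph, and only redemptions by members of $A$ are allowed, so these cases exhaust the transitions. The potential starts at most $C$, plus any $f_\sigma$-coins already held in $A$; these we model as edges to $f_\sigma$. Hence the final count of sovereign coins in $A$ is at most $C$, which equals the max flow. The main care needed here is the bookkeeping: which endpoint lies in $S$, and the case $r = p$, which is set-off.

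\emph{Lower bound: the maximum flow is achievable.} Take an integral maximum flow and decompose it into $F$ edge-disjoint paths from members of $A$ to $f_\sigma$, after cancelling flow cycles. Each path is simple, its first person lies in $A$, and each person on it holds a coin of the next. For each path in turn we apply Proposition~\ref{prop:chain-redemption}, which leaves its initial person in $A$ holding a sovereign coin. The obstacle is that Proposition~\ref{prop:chain-redemption} is stated for redemptions by $p_0$ alone: it consumes exactly the coins along the path, edge-disjointness keeps the other paths intact, and the redemptions are by $p_0 \in A$. We must verify that the chain procedure takes only the designated coin at each step. It does: the bearer chooses the coin taken, and the rule is Redeem$_{p_0,p_i}(f;p_{i+1})$, so $p_0$ holding a $p_i$-coin takes $p_i$'s $p_{i+1}$-coin, which is exactly the path edge. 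Each path therefore removes only its own edges, apart from $p_0$'s own transient holdings. The result is $F$ sovereign coins in $A$.

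Combining the two bounds gives equality. The step I expect to need the most attention is the upper-bound potential argument: specifying its invariant exactly, including coins redeemed back to their issuer and sovereign coins already held at the start.
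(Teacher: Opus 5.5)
\textbf{The proof is correct in substance, but the invariant you state for the upper bound is false as written.} Your case analysis proves a different, correct invariant.

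\textbf{Lower bound.} This is the route the paper points to when it says a member realises a path by exercising their own redemption rights one after another. An integral maximum flow splits into simple paths from $A$ to $f_\sigma$ that use distinct coins. Proposition~\ref{prop:chain-redemption}, run by each path's first person, consumes only that path's coins. The paper states the proposition on the authority of its sources and gives no proof, so that remark is the only point of comparison. The cut idea for the upper bound is also right.

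\textbf{Upper bound.} You take as potential the number of $f_\sigma$-coins held in $S$ plus the number of coin edges leaving $S$. Since $f_\sigma \notin S$, the set $S \setminus \{f_\sigma\}$ is just $S$. Every sovereign coin held in $S$ is therefore already an edge leaving $S$, so it is counted twice, and the sum can rise.

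Take $S = A = \{a\}$, with $a$ holding one $x$-coin, $x \notin S$ holding one $f_\sigma$-coin, and nothing else. Then $S$ is a minimum cut of capacity $1$. After $a$ redeems the $x$-coin for $x$'s sovereign coin, the edge $a \to x$ has become $a \to f_\sigma$. The crossing count is still $1$, but the sovereign term goes from $0$ to $1$, so your potential goes from $1$ to $2$. For the same reason, a potential that starts at $C$ plus the sovereign coins already held in $A$ bounds the final count by that sum, not by $C$.

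\textbf{The repair.} Drop the first summand. Let $X(S)$ be the number of $f$-denominated coins held by members of $S$ whose issuers lie outside $S$. Your two bullets, together with the cases you leave implicit, show that no redemption by a member of $S$ increases $X(S)$. The implicit cases are:
a bond is taken, where only $p \to q$ can be lost;
$r = q$, with no net change;
$r = p$, where the lost edge $q \to p$ never leaves $S$ because $p \in S$;
$q, r \in S$, where no crossing edge changes.
$X(S)$ starts at $C$, and the sovereign coins held in $A$, including those held at the start, are among the edges it counts. So the liquidity is at most $C$.

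Alternatively, you can avoid fixing a cut by taking the maximum flow itself as the monotone quantity. A redemption either replaces the coin edges $p \to q$ and $q \to r$ by $p \to r$, or at most deletes coin edges. Any flow through the new edge can be rerouted through the two edges it replaced.
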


The edges are coins already outstanding, not capacities to borrow or to route, and a member realises a path by exercising their own redemption rights one after another, with no intermediary asked to consent; a redemption that takes a bond obtains no coin and consumes the coin surrendered, so the flow is over the coins~\cite{shapiro2026bonds}, and not over the matured bonds a holder could first take for coins: whether a bond is mature is subject to its issuer advancing its date, and the liquidity of a group does not depend on that.  For a single person the maximum flow is the number of sovereign grassroots coins they can reach by themselves, and Proposition~\ref{prop:peg} is the case of one coin.

\mypara{Runs on the bank}
A \emph{run on} $p$ is a maximal finite safe computation of redemptions, each surrendering an $f$-denominated $p$-coin to $p$ and taking a coin of an issuer other than $p$, within a day in which no bond is taken for a coin; it is \emph{met in full} if no $f$-denominated $p$-coin is in circulation in its last configuration~\cite{shapiro2024gc,shapiro2026bonds}.  Let $\hat a_p$ and $\hat\ell_p$ be the out-degree and in-degree of $p$ in the $f$-coin holding graph: the coins $p$ holds and the coins $p$ owes, their bonds being neither an asset nor a liability in the run.

\begin{proposition}[Run on the Bank~\cite{shapiro2024gc,shapiro2026bonds}]\label{prop:run}
Every run on $p$ is met in full if $\hat a_p \ge \hat\ell_p$, and otherwise ends with exactly $\hat\ell_p - \hat a_p$ $f$-denominated $p$-coins in circulation, their presentations unfulfilled.
\end{proposition}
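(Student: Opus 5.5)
The plan is to track a single quantity through the run: the number $n$ of $f$-denominated $p$-coins in circulation (held by persons other than $p$), together with $\hat a_p$, the $f$-coins of issuers other than $p$ that $p$ holds. Each step of a run on $p$ is a Redeem$_{q,p}(f;\,r,\ast)$ with $r \ne p$ in which the bearer takes a coin. Its effect, by Definition~\ref{def:sgc-transactions}, is that one $p$-coin returns to $p$, where it is the issuer's own and carries no edge, and one coin of $r$ leaves $p$. So each step decreases both $n$ and the count of non-$p$ coins held by $p$ by exactly one. No other transaction occurs within the run. Bonds are excluded because the run takes no bond for a coin and the bearer takes a coin, so $p$'s bonds and bond holdings are untouched. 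Lemma~\ref{lem:conservation} guarantees that no $p$-coin is created or destroyed in the process. Hence after $j$ steps $p$ owes $\hat\ell_p - j$ coins and holds $\hat a_p - j$ non-$p$ coins.

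Next I will use maximality to pin down the length of the run. A run step is enabled exactly when some bearer holds a $p$-coin ($n \ge 1$) and $p$ holds a coin of an issuer other than $p$ (the remaining asset count is $\ge 1$). So a maximal finite run stops precisely when one of the two counts reaches zero, and its length is $j = \min(\hat a_p, \hat\ell_p)$. I also need that the run is finite, which holds because each step strictly decreases $n \ge 0$.

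The two cases then follow. If $\hat a_p \ge \hat\ell_p$, then $j = \hat\ell_p$, no $p$-coin remains in circulation, and the run is met in full. Otherwise $j = \hat a_p$, and exactly $\hat\ell_p - \hat a_p$ $p$-coins remain. Their bearers may still present them, but $p$ holds no coin of another issuer and the run admits no bond, so these presentations are unfulfilled within the run.

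I expect the main obstacle to be the enabledness claim. I must check that any remaining bearer can redeem against any remaining coin held by $p$, with no restriction such as $r$ differing from the bearer, or the bearer declining to take their own coin. Taking one's own coin is set-off, which the definition allows as long as the issuer is not $p$, so I believe every such pair is enabled. Because the statement quantifies over every run, I also need that no step exists that is a run step but changes the counts differently. This holds since Redeem has a fixed one-for-one effect. Finally, I need to confirm that $p$ cannot replenish its holdings mid-run, which is true because a run consists only of redemptions presented to $p$.
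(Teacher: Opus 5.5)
Your argument is correct: each step of a run is a Redeem that takes a coin of an issuer other than $p$, so it lowers both $\hat a_p$ and $\hat\ell_p$ by exactly one. A further step is enabled exactly when both counts are still at least one. Every maximal run therefore has length $\min(\hat a_p,\hat\ell_p)$ and leaves $\max(0,\hat\ell_p-\hat a_p)$ $p$-coins in circulation.

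The paper gives no proof of its own here, importing the proposition from the cited grassroots-currencies and bonds papers, and your counting argument is essentially the reasoning it relies on.
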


A shortfall stands as unfulfilled presentations, evidencing the issuer's default in $f$.  A bearer who instead takes a bond the issuer holds is met by it, as obligation~2 allows, and the run is shorter by that presentation.  A run therefore stops at the issuer's holdings, and reaches the central bank only through the sovereign grassroots coins the issuer holds.
\else The liquidity ratios of corporate finance and the creditworthiness measures of international monetary economics apply to grassroots bonds per denomination, and the solvency, liquidity and run results of~\cite{shapiro2024gc,shapiro2026bonds} hold on the $f$-holding graph with the sovereign as the sink: a community whose only claims across its boundary are with the sovereign clears its debts by redemptions exactly when every member but the sovereign holds at least as many coins and bonds as they owe, the sovereign grassroots coins a group can obtain now by its own redemptions are the maximum flow from the group to the sovereign in the $f$-coin holding graph, and a run on one issuer is met in full exactly when the coins they hold cover the coins they owe (Appendix~\ref{app:added-credit}).\fi

\subsection{Among the Contenders}\label{sec:contenders}

\iffull
Tables~\ref{tab:contenders} and~\ref{tab:contenders-econ} compare the four retail architectures~\cite{auer2020technology,auer2020rise}, the private alternatives, and the two kinds of coin of the proposed architecture, on the structure of a payment and on credit, cost, throughput, runs, disintermediation, the legal instrument and recovery.  Bonds take no part in a retail payment and are out of both.
\fi

\iffull Every design compared but the last two places one record and one party inside every retail payment: the central bank, an intermediary, an issuer or a chain.  Here every person is an issuer and keeps their own record, so the party inside a payment is the issuer of the coins paid --- the central bank for a payment in sovereign grassroots coins, as in a direct central bank digital currency, and a community bank, a merchant or a neighbour for a payment in theirs.  The central bank is thereby out of retail operations without an intermediary tier, and credit creation, which every other design leaves to the banks outside it, is a transaction of the currency.

Designs that preserve privacy make payments unlinkable to the operator and regulate them without revealing them: Platypus~\cite{wust2022platypus} combines the transaction processing of electronic cash with an account-based fund model, and PEReDi~\cite{kiayias2022peredi} distributes the record among authorities holding shares of it.  Here the issuer of the coins paid is the one party to the payment beside the payer and the payee, and what the central bank sees is the circulation of its own coins.
\else Every design compared but the last two places one record and one party inside every retail payment; here every person is an issuer and keeps their own record, so the party inside a payment is the issuer of the coins paid, the central bank is out of retail operations with no intermediary tier, and credit creation is a transaction of the currency (Appendix~\ref{app:added-credit}).\fi

\iffull\begin{sidewaystable}[p]
\scriptsize
\setlength{\tabcolsep}{3pt}
\renewcommand{\arraystretch}{1.05}
\newcolumntype{N}[1]{>{\raggedright\arraybackslash}p{#1}}
\renewcommand{\thetable}{\arabic{table}A}
\begin{tabular}{N{1.6cm}N{1.9cm}N{2.4cm}N{2.1cm}N{1.9cm}N{2.0cm}N{2.5cm}N{2.8cm}}
\hline
 & \textbf{A.\ Direct} & \textbf{B.\ Hybrid} & \textbf{C.\ Intermediated} & \textbf{D.\ Indirect} & \textbf{E.\ Private DC} & \textbf{F.\ Sovereign grassroots coins} & \textbf{G.\ Non-sovereign grassroots coins} \\
\hline
\textbf{1.\ Retail holder's debtor} & the central bank & the central bank & the central bank & an intermediary, backed by central bank money & the issuer, holding reserves (stablecoin); a bank (tokenised deposit) & the central bank & the person who issued the coins held, natural or legal \\
\textbf{2.\ Retail record kept by} & the central bank & intermediaries, with a copy at the central bank; the digital euro's ledger is the Eurosystem's own, centralised across three regions~\cite{ecb2025closing}, and the digital pound's is the central bank's~\cite{boe2023digitalpound} & intermediaries; the central bank keeps a wholesale ledger & the intermediary & the issuer's ledger or a blockchain; the banks & each holder, their own holdings; the central bank the record of its coins & each holder, their own holdings; the issuer the record of their coins \\
\textbf{3.\ Retail payment processed by} & the central bank & an intermediary; the digital euro and the digital pound settle at the central bank~\cite{ecb2023stocktake,boe2023digitalpound} & an intermediary & the intermediary & the ledger's validators; the banks & the central bank & the issuer of the coins paid \\
\textbf{4.\ Participants in a payment} & payer, payee, the central bank & payer, payee, their intermediaries & payer, payee, their intermediaries & payer, payee, their intermediaries & payer, payee, the ledger or the banks & payer, payee, the central bank and the machines holding its log & payer, payee, the issuer and the machines holding their log: a person's state custodians~\cite{eitan2026securing}, an institution's own \\
\textbf{5.\ Offline payment} & by value stored on the device, at the holder's or the operator's risk & the same & the same & the same & no & no: the central bank and the machines holding its log must be reachable & not offline from the issuer; over Bluetooth or a local network that reaches the issuer and the machines holding their log, including one with no connection to the Internet \\
\textbf{6.\ Who sees a payment} & the central bank & the intermediary and the central bank & the intermediary; the central bank in aggregate & the intermediary & the ledger's operator, or everyone on a public chain; the banks & payer, payee and the central bank & the three parties; the machines holding the issuer's log record it, and their persons cannot read it~\cite{eitan2026securing} \\
\textbf{7.\ Retail balances by identity at the central bank} & not necessarily; the central bank keeps the retail record & no & no & no & no & no: the central bank keeps the record of its own coins & no \\
\hline
\end{tabular}
\caption{CBDC architectures~\cite{auer2020technology,auer2020rise}: part A, the structure of a payment.  Columns F and G are the two kinds of coin of the proposed architecture, the central bank's and every other person's; a grassroots currency comprises the coins and the bonds of its issuer, and bonds take no part in a retail payment.  Of the CBDCs compared, the e-CNY~\cite{pboc2021ecny}, the Sand Dollar~\cite{sanddollar}, the eNaira~\cite{cbn2021enaira} and JAM-DEX~\cite{boj2022jamdex} are hybrid~\cite{bis2024cgide}, and so is Chaumian eCash~\cite{chaum2021cbdc,bis2023tourbillon}, with an anonymous payer; the digital euro~\cite{ecb2023stocktake} and the digital pound~\cite{boe2023digitalpound} are two-tier, distributed by private intermediaries with the ledger kept and payments settled at the central bank; Project Hamilton~\cite{lovejoy2022hamilton} is a system design for a direct core; the intermediated~\cite{auer2021quest} and the indirect~\cite{adrian2019rise,auer2020technology} architectures are proposals; the private alternatives are regulated stablecoins~\cite{genius2025act,mica2023,gorton2023taming} and tokenised deposits~\cite{garratt2023singleness}.  Rows 5 and 6 are properties of the implementations cited and not of the architecture class, offline support and who sees a payment being design choices within each.}
\label{tab:contenders}
\end{sidewaystable}

\begin{sidewaystable}[p]
\scriptsize
\setlength{\tabcolsep}{3pt}
\renewcommand{\arraystretch}{1.05}
\newcolumntype{N}[1]{>{\raggedright\arraybackslash}p{#1}}
\addtocounter{table}{-1}\renewcommand{\thetable}{\arabic{table}B}
\begin{tabular}{N{1.6cm}N{1.9cm}N{2.4cm}N{2.1cm}N{1.9cm}N{2.0cm}N{2.5cm}N{2.8cm}}
\hline
 & \textbf{A.\ Direct} & \textbf{B.\ Hybrid} & \textbf{C.\ Intermediated} & \textbf{D.\ Indirect} & \textbf{E.\ Private DC} & \textbf{F.\ Sovereign grassroots coins} & \textbf{G.\ Non-sovereign grassroots coins} \\
\hline
\textbf{8.\ Credit creation within the architecture by} & no one: a central bank liability & no one & no one & no one: fully backed & no one (stablecoin); the banks (tokenised deposit) & the central bank, by its mutual credit lines & every person, by issuing their own coins \\
\textbf{9.\ Cost of a payment} & the central bank's processing & the intermediary's, fees by policy & the same & the same & the chain's fees; the banks' fees & operating a device; no fee required by the architecture & operating a device; no fee required by the architecture \\
\textbf{10.\ Throughput bounded by} & the central bank's processor & the central bank's ledger and the intermediaries & the intermediaries & the intermediaries & the chain; the banks' systems & the central bank and the machines holding its log & no component shared by all issuers; grows with the issuers and their custodians \\
\textbf{11.\ Where a run goes} & from bank deposits into the CBDC, at the speed of conversion & the same & the same & into the intermediary's backing & onto the issuer's reserves; onto the bank & \multicolumn{2}{N{5.5cm}}{onto one issuer, met from their holdings; what the public can obtain of central bank money is bounded by what the central bank has issued and not taken back} \\
\textbf{12.\ Disintermediation mitigant} & holding limits, tiered remuneration, waterfall & the same & the same & none needed: the claim is on the intermediary & a tokenised deposit needs none, the claim staying on the bank~\cite{garratt2023singleness}; for a stablecoin none is proposed, the mitigants proposed addressing runs---issuance through banks, or full backing with safe assets~\cite{gorton2023taming} & \multicolumn{2}{N{5.5cm}}{none needed to bound demand-driven creation: no holder of a bank deposit has a right to newly issued sovereign grassroots coins, which enter in exchange for fiat coins, by the central bank's payments and by its mutual credit lines, all at its own election} \\
\textbf{13.\ Legal instrument} & a new central bank liability, by legislation & the same & the same & e-money or deposit law & stablecoin regulation; deposit law & a signed debt of the central bank, analogous to a promissory note; the authority to issue it is statutory~\cite{boe2026progress} & a signed debt of its issuer, analogous to a promissory note, under the ordinary rules of contract \\
\textbf{14.\ Recovery of a lost key or device} & by identity, from the central bank's record, where access is account-based; as the token design provides, where token-based~\cite{auer2020technology,bis2024cgide} & by identity, from the intermediary's record or the central bank's copy, where account-based (the eNaira, the e-CNY); as the token design provides, where token-based (the Sand Dollar, JAM-DEX)~\cite{bis2024cgide} & by identity, from the intermediary's record, where account-based; as the token design provides, where token-based & the same & a bearer instrument~\cite{garratt2023singleness}, held by whoever holds the key; a tokenised deposit from the bank's record & the holder's coins from the central bank's log, their identity from their identity custodians~\cite{eitan2026securing} & holdings from the issuers' logs; a person's own log, exactly, and their identity from their state and identity custodians~\cite{eitan2026securing}; an institution's log from its own machines \\
\hline
\end{tabular}
\caption{CBDC architectures, part B: credit, cost, throughput, runs, disintermediation, the legal instrument, and recovery from the loss of a key or device; the columns are those of part A.  Rows 11 and 12 hold of the two kinds of coin together and are not properties of either.  Rows 9 and 14 are properties of the implementations cited and not of the architecture class, the cost of a payment and the recovery of a lost key being design choices within each.}
\label{tab:contenders-econ}
\end{sidewaystable}
\renewcommand{\thetable}{\arabic{table}}
\fi

\section{Interest: Grassroots Bonds}\label{sec:interest}

With grassroots bonds, credit acquires time and a price.  Here we add bonds to the coins of Sections~\ref{sec:money} and~\ref{sec:credit}, present the instruments formed from coins and bonds, and the central bank's instruments of monetary policy as transactions in its coins and bonds.

\subsection{Maturity and Interest}\label{sec:maturity}

A coin and a bond are distinct instruments (Section~\ref{sec:denominated}): a coin is a unit of its issuer's debt due now and carries no date, and a bond is that unit due at its date.  A coin may be paid, and presented to its issuer for redemption.  A bond may be minted, swapped, escrowed and sold, and an issuer may fulfil a presentation with one; it is neither paid nor presented, and once it is mature its bearer may take a coin of its issuer for it~\cite{shapiro2026bonds}.

Coins are demand instruments and bear no interest.  Interest is borne by bonds, as the difference between the face value of a bond and the coins swapped for it: a lender who advances $k$ coins against bonds of face value $k(1+\rho)$ maturing at $d$ lends at the rate $\rho$ for that maturity; $\rho$ is the return over the term, not an annual rate, and rates of unlike terms are compared after annualisation.  A swap of coins for coins is at par and on demand, so it carries no interest --- prompt redemption after it would return any premium (Section~\ref{sec:credit-lines})~\cite{shapiro2026bonds} --- and the discount at which coins are swapped for a bond, with its maturity, is the rate.

\full{Maturity is judged by the date of the issuer, which the issuer warrants when a matured bond of theirs is taken for one of their coins; an issuer whose date lags behind the calendar breaches that warranty, and the record of the transaction witnesses the breach against them~\cite{shapiro2026bonds}.  The obligation of a coin is enforced by the contract; that of a bond as a whole is attested, and enforceable at law on the record (Appendix~\ref{app:schemas}).}{Maturity is judged by the date of the issuer, which they warrant when a matured bond of theirs is taken for one of their coins, and the record witnesses a breach of that warranty against them~\cite{shapiro2026bonds}.}

\subsection{The Instruments}\label{sec:instruments}

\iffull
Every person can lend at interest, sell debt, open term credit lines, post collateral, issue deposit-like claims, and form the further financial instruments; each is a swap of bonds or an escrow agreement~\cite{shapiro2026bonds}.  A zero-coupon loan is a swap of freshly minted coins for the borrower's bonds of larger face value maturing later; a sale of debt before maturity is a swap of a third person's bonds for coins at a discount; a forward is a swap of bonds of the two parties maturing at one date.  With an escrow agent, trusted by the parties for that agreement and holding the bonds it transfers by the agreed conditions, come loans with payment schedules, term credit lines with draws, options, collateral, guarantees, insurance, credit default swaps and letters of credit.  Every instrument is stated in Appendix~\ref{app:escrow}, recalled from~\cite{shapiro2026bonds}.

A term credit line is the escrow instrument the central bank uses, and differs from the mutual credit line of Section~\ref{sec:credit-lines}, which is a swap of coins at par and on demand: the lender deposits its coins with the escrow agent up to the agreed limit, the borrower draws by depositing principal and coupon bonds, the agent releases the two sides crosswise, and at expiry the undrawn balance returns to the lender~\cite{shapiro2026bonds}.  A draw is therefore a transaction the lender has consented to in advance and the borrower elects, and each tranche falls due at its own maturity.

A bank's coins play the economic role of its demand deposits (Section~\ref{sec:every-person}) and its bonds that of its term deposits, and it lends by swapping its coins for its borrowers' bonds, at rates it sets.  A saver who holds a bank's bond holds a dated claim on the bank, and sells it before maturity at a discount rather than breaking a deposit.
\else Every person can lend at interest, sell debt, open term credit lines, post collateral, issue deposit-like claims and form the further financial instruments, each a swap of bonds or an escrow agreement~\cite{shapiro2026bonds}; a bank's coins play the economic role of its demand deposits and its bonds that of its term deposits.  A term credit line, the instrument the central bank uses below, is an escrowed facility: the lender deposits its coins with the escrow agent up to the agreed limit, the borrower draws by depositing principal and coupon bonds, the agent releases the two sides crosswise, and the undrawn balance returns at expiry (Appendices~\ref{app:added-interest} and~\ref{app:escrow}).\fi

\subsection{The Central Bank's Instruments}\label{sec:policy}

The central bank's instruments of monetary policy are the same transactions, and it chooses the counterparty, amount, rate, maturity and collateral of its credit operations; the counterparty of an open market operation is whoever holds the security.  Lending to a named city, firm or household is credit policy as much as monetary policy, and which it is, and whether the central bank may do it at all, is a matter of its mandate, which we take as given.

\iffull Goodfriend~\cite{goodfriend2011credit} draws the distinction: an open market operation in government securities returns its revenue to the fiscal authorities, while lending to particular borrowers allocates public funds and is debt-financed fiscal policy.  He would therefore confine a central bank to last-resort lending to solvent supervised institutions, unless the fiscal authorities agree in advance.  C\'urdia and Woodford~\cite{curdia2011balance} treat such lending as a dimension of policy distinct from the interest rate, worth using when private intermediation is impaired but costly in itself; the case for it depends on the market lent into, and the size of a spread does not settle it.  The central bank can make such loans under this architecture; whether it should do so is a question of its mandate.
\else Goodfriend and C\'urdia and Woodford~\cite{goodfriend2011credit,curdia2011balance} draw the distinction between credit policy and monetary policy, and say what each commits of public funds (Appendix~\ref{app:added-interest}).\fi

\mypara{The policy rate}
The policy rate is the rate at which the central bank lends its coins against bonds: it advances $k$ of its coins against a counterparty's bonds of face value $k(1+\rho)$ maturing at $d$, and $\rho$, the return over that term, is the rate.  A change of policy changes the terms of new swaps, and no bond already issued.

\mypara{Targeted liquidity}
A mutual credit line is targeted liquidity without a rate (Section~\ref{sec:credit-lines}); with maturity it acquires one.  Targeted liquidity at a rate is a term credit line to a city, a bank, a firm or a community, whose tranches are draws the central bank has consented to, each falling due at its maturity.  A city struck by a disaster draws the central bank's coins against its own city bonds and funds relief in coins redeemable at face value in sovereign money.  As the city repays, it acquires sovereign grassroots coins and redeems them from the central bank against its principal bonds, which the central bank cannot refuse, and the undrawn balance returns to the central bank at expiry; a tranche already drawn falls due at its own maturity, and the central bank does not recall it.  The recipient, the amount, the maturity and the rate are the parameters of the transactions, and who drew, who repaid and what remains outstanding is their record.  Where conventional policy transmits unevenly across time and sectors, and weakly in recessions~\cite{olivei2007timing,tenreyro2016pushing}, here the central bank contracts with the intended recipient and no intermediation stands between.

\mypara{Lender of last resort}
A run exhausts the coins an issuer holds, and their remaining bonds may cover their liabilities while their coins do not: the failure is of liquidity and not of solvency, and the ratios distinguish the two (Section~\ref{sec:runs}).  In a panic no one lends even against good bonds, so the central bank does: it opens an emergency term credit line with the distressed issuer, advancing its coins against their bonds, collateralised by their remaining holdings, at a penalty rate above the policy rate.  This is Bagehot's prescription~\cite{bagehot1873lombard}, to lend freely against good collateral at a penalty rate, as a swap of bonds; it differs from the lending above in the counterparty's condition, the collateral and the rate.

\iffull\mypara{The central bank's own bonds}
With its own bonds --- interest-bearing liabilities it issues to the counterparties it chooses --- the central bank absorbs liquidity from those counterparties, on terms set at issuance, with no tier and no cap imposed by the architecture.  A counterparty swaps the central bank's coins now for its bonds maturing later, and holds the safest dated claim in the denomination.

\mypara{Open market operations}
The central bank buys and sells government and bank bonds for its coins, which are its open market operations; a purchase puts its coins into circulation against a bond it then holds, and a sale takes them back.  Securities that are not grassroots bonds are bought and sold for its coins in the same way, by a payment against delivery outside the currency.
\else With its own bonds, interest-bearing liabilities it issues to the counterparties it chooses, the central bank absorbs liquidity on terms set at issuance, with no tier and no cap imposed by the architecture, and it buys and sells government and bank bonds, and other securities, for its coins, which are its open market operations (Appendix~\ref{app:added-interest}).\fi

\iffull
\mypara{Where interest is borne}
\fi
Interest is charged on the borrowing of sovereign grassroots coins and paid on the bonds the central bank issues to the counterparties it chooses.  The coins themselves bear none, as cash does not, so they introduce no interest-rate competition with deposits; a claim that bears no interest can still be preferred for its safety or its convenience\full{ --- India's digital rupee, which bears none and is capped, reduced the retail deposits of the banks exposed to it by 2.7\%, concentrated in liquid savings accounts~\cite{dimaggio2026rupee} --- }{, }and the central bank's issuance bounds the coins in circulation (Section~\ref{sec:credit-lines}).  A remunerated central bank digital currency pays its rate to whoever holds it and so competes with deposits, which is one reason neither the digital euro nor the digital pound is to be remunerated~\cite{ecb2025closing,boe2023digitalpound}; the Eurosystem states that holdings would not be remunerated and would be subject to holding limits, to preserve the role of banks in providing credit~\cite{ecb2025closing}.

\iffull
\mypara{What the other proposals leave outside}
\fi
The retail proposals compared add a retail form of central bank money and leave the central bank's lending, its asset purchases and the implementation of the policy rate in reserves, an instrument the public does not hold.  Here they are transactions in the coins and bonds the public holds.

\subsection{A Counterparty-Specific Rate Corridor}\label{sec:corridor}

The central bank sets two rates for a counterparty: the rate at which it lends its coins against that counterparty's bonds, and the rate it pays on its own bonds issued to them.  We argue that the two bound the rates the counterparty faces, at like maturity, collateral and risk.  The corridor is counterparty-specific.  We argue a bound on the rates one admitted counterparty faces; we do not argue the overnight interbank corridor of today's operational frameworks~\cite{abad2025cbdc}, nor the transmission of policy to output and prices.  The argument assumes that a person takes the cheaper of two credits available to them on like terms, and the dearer of two claims.

\iffull\mypara{The upper bound}
A counterparty who can still borrow from the central bank does not pay more than its lending rate.  A lender who asks more offers, at like maturity, collateral and risk, a credit the counterparty already has on better terms, and the counterparty draws on the central bank instead.  So the rate the counterparty pays is at most the central bank's lending rate.

\mypara{The lower bound}
A counterparty who can still acquire the central bank's bonds does not accept less than the rate on them.  A borrower who offers less offers, at like maturity, a dated claim no safer than the central bank's --- whose issuer can always issue a fiat coin (Section~\ref{sec:the-sovereign}) --- at a lower rate, and the counterparty buys the central bank's bond instead.  So the rate the counterparty accepts is at least the rate on the central bank's own bonds.
\else A counterparty who can still borrow from the central bank does not pay more than its lending rate, since a lender who asks more offers, at like maturity, collateral and risk, a credit the counterparty already has on better terms.  A counterparty who can still acquire the central bank's bonds does not accept less than the rate on them, since a borrower who offers less offers, at like maturity, a dated claim no safer than the central bank's bond, whose issuer can always issue a fiat coin (Appendix~\ref{app:added-interest}).\fi

\iffull\mypara{Whom the corridor binds}
Both bounds hold while the counterparty can act on them: the term credit line is not exhausted and its collateral suffices, and the central bank is still issuing its bonds to them.  Both conditions are quantities the central bank sets, so it admits to the corridor the counterparties it chooses, and a counterparty it does not admit borrows and lends on their own credit.  Today the standing facilities~\cite{abad2025cbdc} bound the rate at which banks lend to one another, one corridor for the interbank tier; here the corridor is counterparty-specific and the counterparties are whoever the central bank admits --- a bank, a city, a firm, a cooperative, a household --- and the instruments are the coins and bonds the public holds, so the corridor binds at the tier the central bank chooses, not solely at the interbank tier.
\fi

\subsection{Among the Types of Money}\label{sec:money-types}

\iffull\else Table~\ref{tab:money}, in Appendix~\ref{app:tables}, sets the three components beside the types of money proposed as a retail central bank digital currency or in its place: the two kinds of coin are the only ones whose holder's claim is on a person who issued it as a debt of the fiat unit and who redeems it at par, and the only ones created by a transaction of the currency itself.\fi
\iffull
Table~\ref{tab:money} sets the three components beside the types of money proposed as a retail central bank digital currency or in its place --- account-based and token-based designs, regulated stablecoins, and tokenised deposits --- by whose liability each is, what backs it, where its record is kept, how it is created, whether it bears interest, and what can be done with it.\full{  Cash, deposits, cryptocurrencies and community currencies are not proposed as a central bank digital currency and are out of it.}{}  A grassroots bond is a dated claim and is not paid; its bearer takes a coin of its issuer for it once it is mature, so its column is credit and not money.
\fi

\iffull\begin{sidewaystable}[p]
\scriptsize
\setlength{\tabcolsep}{3pt}
\renewcommand{\arraystretch}{1.05}
\newcolumntype{N}[1]{>{\raggedright\arraybackslash}p{#1}}
\begin{tabular}{N{1.5cm}N{2.3cm}N{2.5cm}N{1.9cm}N{1.8cm}N{2.2cm}N{2.4cm}N{2.4cm}}
\hline
 & \textbf{A.\ Account-based} & \textbf{B.\ Token-based} & \textbf{C.\ Regulated stablecoins} & \textbf{D.\ Tokenised deposits} & \textbf{E.\ Sovereign grassroots coins} & \textbf{F.\ Non-sovereign grassroots coins} & \textbf{G.\ Grassroots bonds} \\
\hline
\textbf{1.\ Whose liability} & the central bank's~\cite{bis2020cbdc} & the central bank's~\cite{bis2020cbdc} & the issuer's~\cite{gorton2023taming} & the bank's~\cite{garratt2023singleness} & the central bank's~\cite{shapiro2024gc} & each person's, natural or legal~\cite{shapiro2024gc} & its issuer's, the central bank or any other person~\cite{shapiro2026bonds} \\
\textbf{2.\ Backing and redemption} & central bank money itself; convertible at par with cash~\cite{bis2020cbdc} & central bank money itself; convertible at par with cash~\cite{bis2020cbdc} & reserves of cash and short-term government debt; redeemable at par~\cite{genius2025act,mica2023}, though not always accepted at par in the market~\cite{gorton2023taming} & as deposits; transfers settled in central bank money~\cite{garratt2023singleness} & the fiat currency the central bank issues; redeemed at par into fiat coins on presentation & the issuer's goods and services and the coins they hold; redeemed at par against any coin they hold, and along chains of liquidity into sovereign grassroots coins~\cite{shapiro2024gc} & as its issuer's coins: once the date of its issuer has reached its maturity, its bearer may take a coin of the issuer for it~\cite{shapiro2026bonds} \\
\textbf{3.\ Record} & balances by identity, on the central bank's ledger or intermediaries'~\cite{auer2020technology,bis2024cgide}: the eNaira, the e-CNY, the digital euro and the digital pound~\cite{bis2024cgide,ecb2025closing,boe2026progress} & tokens held by key~\cite{auer2020technology,bis2024cgide}: blind-signed coins with a spent list at the central bank~\cite{chaum2021cbdc,bis2023tourbillon}, an unspent-output set~\cite{lovejoy2022hamilton}; the Sand Dollar and JAM-DEX~\cite{bis2024cgide} & a blockchain; a bearer instrument~\cite{garratt2023singleness} & the bank's ledger, or a unified ledger~\cite{garratt2023singleness} & the central bank's log, on its own machines & each person's log of their own coins, backed up by their custodians~\cite{shapiro2024gc,eitan2026securing}; an institution's log on its own machines & the log of its issuer, as its coins \\
\textbf{4.\ Creation and entry into circulation} & in the implementations compared, issued by the central bank against bank money converted into it~\cite{bis2024cgide} & in the implementations compared, issued by the central bank against bank money converted into it~\cite{bis2024cgide} & issued against fiat received & created by the bank's lending, as deposits~\cite{mcleay2014money} & minted by the central bank, entering in exchange for fiat coins, by its payments, and by its mutual credit lines against the counterparty's coins or, with bonds, their bonds & minted by each person, entering by mutual credit, coins for coins & minted by its issuer, entering by loans, coins for bonds \\
\textbf{5.\ Interest} & a design choice; the implementations compared bear none~\cite{pboc2021ecny,boe2023digitalpound} & a design choice; the implementations compared bear none~\cite{boj2022jamdex,chaum2021cbdc} & the token bears none & as deposits & none, as on cash & none & the discount at which coins are swapped for them~\cite{shapiro2026bonds} \\
\textbf{6.\ Operations} & payment and conversion, under holding limits, tiers and waterfalls~\cite{bis2021finstab} & payment and conversion & payment on the chain; redemption by the issuer's direct customers~\cite{gorton2023taming} & payment and transfer; the bank's lending beside it & payment and redemption; the central bank's mutual credit lines & payment, mutual credit and redemption & credit: for the central bank the policy rate, targeted liquidity, the lender of last resort, its own bonds and open market operations; for every other person loans at interest, sale of debt, term credit lines, collateral and the escrow instruments~\cite{shapiro2026bonds} \\
\hline
\end{tabular}
\caption{Types of money proposed as a retail central bank digital currency or in its place, and the credit instrument beside them: whose liability each is, what backs it and how it is redeemed, where its record is kept, how it is created and enters circulation, whether it bears interest, and what can be done with it; all have the fiat currency as their unit of account.  Columns A and B are the designs of Tables~\ref{tab:contenders} and~\ref{tab:contenders-econ} by the form of the money.  Columns E, F and G are the three components of the proposed architecture; a grassroots bond is a dated claim: it is not paid, and its bearer takes a coin of its issuer for it once the date of its issuer has reached its maturity, so column G is credit and not money.}
\label{tab:money}
\end{sidewaystable}
\fi

\section{Related Work}\label{sec:related-work}

\iffull\mypara{Central bank digital currency}
The requirements of a retail central bank digital currency~\cite{bis2020cbdc}, its financial-stability implications~\cite{bis2021finstab} and its system design~\cite{bis2021design} are the reports this paper is checked against in Section~\ref{sec:cbdc}; the four architectures~\cite{auer2020technology,auer2020rise}, the currencies in operation and in preparation, the system designs, and the private alternatives are compared with the proposal in Section~\ref{sec:contenders} and in Tables~\ref{tab:contenders} and~\ref{tab:contenders-econ}, and surveyed in~\cite{senn2026sok}.  Kumhof and Noone~\cite{kumhof2018cbdc} are the precedent for issuance that is not by conversion: their four principles include no guaranteed on-demand convertibility of bank deposits into the currency, and issuance only against eligible securities.  Their central bank buys securities its counterparty already holds; here it lends against coins its counterparty issues, and the counterparty is whoever it chooses (Section~\ref{sec:credit-lines}).

\mypara{The currency as an instrument of monetary policy}
A central bank digital currency that bears interest has been proposed as an instrument of policy.  Bordo and Levin~\cite{bordo2017cbdc} would make it account-based and interest-bearing, make its rate the main tool of policy, and remove the effective lower bound by a graduated schedule of fees on transfers between cash and the currency.  Davoodalhosseini~\cite{davoodalhosseini2022cbdc} studies optimal monetary policy where agents hold cash, such a currency, or both, and finds that where the cost of using it is not too high it implements more efficient allocations than cash and can attain the first best, and the central bank can make what it pays contingent on the balance held; Agur, Ari and Dell'Ariccia~\cite{agur2022designing} derive the optimal design where agents sort into cash, the currency and deposits by their preferences over anonymity and security, and find that a deposit-like currency depresses bank credit and output while a cash-like one may drive cash out, and an interest-bearing currency eases the tradeoff where network effects matter.  The second limitation of Section~\ref{sec:introduction} is therefore of the designs in operation and in preparation, in which the currency is a payment liability and policy operates through reserves, and not of the concept.  The coins here bear no interest.  The architecture adds the central bank's lending, its bond issuance and its open market operations, as transactions in the coins and bonds the public holds (Section~\ref{sec:policy}).

\mypara{Deposits, recycling and disintermediation}
Whether the substitution of public for private money matters depends on what the central bank does with the funding that reaches it.  Brunnermeier and Niepelt~\cite{brunnermeier2019equivalence} give conditions under which a swap of private for public money leaves the allocation and the price system unchanged, when the central bank passes the funds back to the banks that lost them, so that a central bank digital currency need imply neither a credit crunch nor a loss of financial stability; Kim and Kwon~\cite{kim2023cbdc} show that where the deposits taken are not lent back, credit supply falls, and that lending them back to the banks raises both credit supply and stability; Fern\'andez-Villaverde, Sanches, Schilling and Uhlig~\cite{fernandezvillaverde2021central} have the central bank intermediate through investment banks: its contract with them is not callable, so it cannot be run on, and it becomes the deposit monopolist, which may endanger maturity transformation; and the review of Infante, Kim, Orlik, Silva and Tetlow~\cite{infante2024retail} finds that the effect on banks depends on how the central bank recycles the new liability, by asset purchases or by lending.  Keister and Sanches~\cite{keister2023should} set the efficiency of exchange against the crowding out of deposits, and show that a currency targeted to compete only with cash or only with deposits can raise welfare where a single universal one need not; Chiu, Davoodalhosseini, Jiang and Zhu~\cite{chiu2023bank} show that where banks have market power in the deposit market a currency remunerated in an intermediate range raises the deposit rate and with it bank lending and output, and its rate acts as a floor under the deposit rate.  Bidder, Jackson and Rottner~\cite{bidder2025fastslow} separate the two horizons: slow disintermediation, which shrinks a fragile banking system in normal times and stabilises it, and fast disintermediation in a run, which a currency invites because, unlike cash, it can be held at scale at no cost, so a holding limit keeps the first without the second.  The one live retail currency studied bears this out at the slow horizon: Di Maggio, Ghosh, Ghosh, Vats and Wu~\cite{dimaggio2026rupee} find that eligibility for India's digital rupee, which bears no interest and is capped, lowered retail deposits by 2.7\%, concentrated in liquid savings accounts and not in term deposits, that credit fell and then recovered as banks changed their funding mix, and that the currency displaced deposit-backed digital payments rather than cash.  Each of these concerns a currency the public obtains by converting deposits into it.  Here there is no such flow to recycle: no holder of a deposit has a right to newly issued sovereign coins, the central bank issues them to the counterparties it elects against coins those counterparties issue (Section~\ref{sec:credit-lines}), and the credit a bank loses is created by the persons themselves.  The coins bear no interest, so they introduce no interest-rate competition with deposits; a claim that bears none may still be preferred for its safety or its convenience, and the central bank's issuance bounds the quantity in circulation.

\mypara{The operational framework of policy}
Abad, Nu\~no and Thomas~\cite{abad2025cbdc} analyse what a central bank digital currency does to the framework in which policy is implemented: its adoption contracts deposits, and the contraction is absorbed first by a fall in excess reserves and, if it is large enough, by recourse to central bank credit; the system moves from a floor to a corridor and then to a ceiling, and the effect on credit, investment and output differs with the framework it reaches.  The corridor of Section~\ref{sec:corridor} is of another kind.  It is counterparty-specific, and bounds the rates one admitted counterparty faces at like maturity, collateral and risk; theirs is the corridor between the standing facilities, which bounds the rate at which banks lend to one another.  Lending to a named city, firm or household is credit policy as much as monetary policy, and whether a central bank may allocate credit in that way is a question of its mandate, which this paper takes as given (Section~\ref{sec:policy}).

\mypara{Synthetic central bank digital currency}
A synthetic central bank digital currency~\cite{adrian2019rise} has private firms issue digital liabilities backed one for one by central bank reserves, so that private provision rests on a public anchor.  Denominated grassroots coins take the opposite choice.  Their issuers create credit, bear their own default, and hold no reserve against their coins, and their holders reach base money by redemption along chains of liquidity (Proposition~\ref{prop:peg}) rather than by backing.  The two put different things behind a privately issued claim of a fiat unit: reserves held against it, or an obligation realised by redemption.

\mypara{What the mitigants bound}
The measures proposed against deposit flight --- limits on holdings and on transactions, tiered remuneration and waterfalls~\cite{bis2021finstab,ecb2025closing}, Bindseil's two-tier remuneration, which he offers as the simpler alternative to Kumhof and Noone's principles, where the first tier stays attractive and never negative and the second is priced to discourage the currency as a store of value~\cite{bindseil2020tiered}, and Kumhof and Noone's own refusal of guaranteed convertibility between the currency and reserves and between deposits and the currency~\cite{kumhof2018cbdc} --- bound a demand that the design admits.  The Bank of England has measured the effect of such a limit: under its severe illustrative stress, with no individual limit some 21\% of the banks modelled fall below a 100\% liquidity coverage ratio and the demand for central bank lending reaches about \pounds 252 billion, against 9 to 12\% and \pounds 21 to \pounds 112 billion at individual limits of \pounds 5{,}000 to \pounds 20{,}000, and the limit is traded off against what households can do with the money~\cite{boe2025limits}.  Here there is no such demand to bound: the outstanding coins and bonds are those the central bank has issued and has not taken back (Lemma~\ref{lem:conservation}), so the quantity is settled at issuance and maturity only changes its composition; a limit for other reasons is left to the deployment.  Each ingredient has a literature: the currency as an instrument of policy, interest-bearing central bank liabilities, disintermediation mitigated by design, central bank lending in place of displaced deposit funding, a hierarchy of public and private digital money, a rate corridor, and privately issued fiat-denominated digital liabilities anchored to central bank money.  Here we proposed their conjunction (Section~\ref{sec:conclusion}).

\mypara{Grassroots currencies and bonds}
Grassroots coins and bonds~\cite{shapiro2024gc,shapiro2026bonds} are units of debt issued by any person, with mutual credit formed by their exchange, redemption pegging mutually-liquid currencies to each other, financial instruments as swaps of bonds and escrow agreements, and the solvency, liquidity and run results this paper specialises.  They are undenominated, with no external unit of account.  This paper adds denomination, which makes every coin a debt of a fiat currency, the sovereign, and par by arbitrage.

\mypara{Stablecoins}
Fiat-backed stablecoins peg a token to a fiat currency by the issuer's obligation to redeem at par, backed by reserves of cash and short-term government debt~\cite{gorton2023taming}; redemption is off-chain and typically restricted to direct customers, other holders realising the peg by selling at the market rate.  Regulation now mandates the obligation: the GENIUS Act in the United States~\cite{genius2025act} and MiCA in the European Union~\cite{mica2023} require at-par redemption rights and full reserve backing.  The obligation of Definition~\ref{def:fiat-bonds} parallels that statutory right, borne by every issuer rather than by one.  A stablecoin has one issuer, whose pooled reserves back all tokens, and its holder's claim is on those reserves; here every person issues their own currency and bears the obligation for it, and the holder's claim is on the issuer, in fiat, fulfilled in kind and, through chains of liquidity, in the sovereign's base money (Proposition~\ref{prop:peg}).

\mypara{Commercial bank money and free banking}
Commercial bank deposits are privately issued money pegged at par to fiat by the bank's obligation to redeem in cash on demand~\cite{mcleay2014money}, and in modern systems by settlement in central bank money, prudential regulation, deposit insurance and the lender of last resort beside it~\cite{garratt2023singleness,bagehot1873lombard}; the note-issuing banks of the free-banking episodes had the obligation and no such backstops behind their notes~\cite{white1984free,selgin1988theory}.  A denomination whose sovereign has issued its grassroots coins reproduces the deposit-to-cash peg without banks: the sovereign's coins take the place of cash, a sovereign-connected person's coins the place of deposits, and the redemption chain the place of the teller (Section~\ref{sec:peg}).  Grassroots coins are inside money~\cite{cavalcanti1999inside,cavalcanti1999model}; the sovereign's coins are the outside money grassroots currencies lacked; and in the framework of digital currency competition~\cite{brunnermeier2019digitalization} such a denomination is a currency area whose unit of account is the fiat unit, joined by holding, issuing or accepting its coins rather than by treaty~\cite{mundell1961theory,alesina2002optimal}.

\mypara{Bills of exchange and promissory notes}
Before banks dominated payment, debts circulated as bills of exchange and promissory notes --- signed promises to pay a fixed sum of a currency at a set date --- traded and discounted before maturity~\cite{rogers1995early,santarosa2015financing}; a grassroots bond is such an instrument in digital form~\cite{shapiro2026bonds}.  Denomination fixes the sum promised in a unit of a fiat currency, which those instruments had and grassroots bonds lacked (Definition~\ref{def:fiat-bonds}).

\mypara{Financial networks}
Eisenberg and Noe~\cite{eisenberg2001systemic} model an economy of interbank obligations and prove that a clearing payment vector exists as a fixed point, grounding the analysis of contagion and systemic risk~\cite{glasserman2016contagion}; Karlan, Mobius, Rosenblat and Szeidl~\cite{karlan2009trust} define network-based trust as a maximum flow of social collateral over trust links, and in credit networks a payment goes through exactly when the maximum credit flow from payer to payee covers it~\cite{dandekar2011liquidity}.  How grassroots clearing differs --- unit bonds redeemed in kind at the holder's initiative rather than paid pro rata in a common num\'eraire, a comparison of counts person by person, and a maximum flow over bonds already outstanding rather than over capacities to borrow or to route --- is set out in~\cite{shapiro2024gc}.  Under denomination the counts and the flow are in the fiat unit and the designated member is the sovereign, so what a group can realise now is base money (Section~\ref{sec:runs}).

\mypara{Monetary transmission}
Olivei and Tenreyro~\cite{olivei2007timing} show that the output effect of a monetary policy shock depends on its timing within the year, as staggered wage-setting absorbs shocks arriving just before contracts reset, and Tenreyro and Thwaites~\cite{tenreyro2016pushing} that the effects on output and inflation are much smaller in recessions.  Both concern one economy-wide instrument whose incidence is set by the economy's structure.  Here the incidence is a contract: a term credit line names its counterparty, amount, rate and timing, and the lender of last resort lends to the distressed issuer directly (Section~\ref{sec:policy}), while the policy rate remains, as the rate at which the central bank lends its coins against bonds.  The architecture thereby bypasses intermediary allocation for the named recipient; the aggregate effect on output and prices is a separate question, which this paper does not settle.

\mypara{Community and complementary currencies}
Mutual credit systems such as Sardex~\cite{iosifidis2018sardex} and community currencies such as Sarafu~\cite{mattsson2022sarafu} create liquidity from trust within a community, and their units are typically pegged by convention to the national currency, a peg maintained by the operator's policy rather than by redemption.  Personal-currency systems~\cite{fugger2004money,circles-UBI,trustlines} and credit networks~\cite{dandekar2011liquidity} carry liquidity over trust edges as grassroots currencies do.  A denomination whose sovereign has issued its grassroots coins gives such systems what convention cannot: redemption at face value, terminating at the sovereign issuer of the unit they already price in.

\mypara{Hawala}
Hawala transfers value across borders without moving money: a customer pays a broker locally, a counterpart broker pays the recipient in their own locale, and the brokers along the path carry the resulting obligations on open mutual account, settled later by reverse transfers, trade or netting, at rates they quote; trust and reputation within the brokers' network take the place of contracts and collateral~\cite{maimbo2003informal}.  Grassroots currencies realise the mechanism: brokers are persons with mutual credit, a transfer is a chain payment across mutually-liquid currencies, and settlement among brokers is set-off.  Denomination adds what hawala lacks: the obligations are legally binding debts of the fiat units and are recorded, where hawala's informality and opacity ground the regulatory concern, and exchange across denominations needs no broker pair, as persons exchange directly, by escrow, both legs completing or neither (Appendix~\ref{app:fx}).

\mypara{Community credit institutions}
Community credit sustains lending among neighbours without courts or physical collateral: credit cooperatives rely on peer monitoring and local sanctions~\cite{banerjee1994neighbor,guinnane2001cooperatives}, group lending on joint liability --- \emph{social collateral}, repayment secured by the borrower's standing with their community~\cite{besley1995social,ghatak1999joint} --- and rotating savings and credit associations on community-based mutual commitment~\cite{besley1993roscas}.  Caselli and Gennaioli~\cite{caselli2013dynastic} model the failure of meritocracy brought about by imperfect contract enforcement, which prevents the transfer of control over productive assets from the untalented rich to the talented poor; credit secured by social collateral and enforced by the protocol reaches borrowers that courts and asset-based lending do not.  How grassroots credit differs, the sanction carried by the record and the liability individual, each person issuing their own currency and bearing their own default, is set out in~\cite{shapiro2026bonds}.  Under denomination such an institution's currency has the fiat unit its books are already kept in, and a claim on it is a debt of that unit (Definition~\ref{def:fiat-bonds}).
\else Retail CBDC architectures, the currencies in operation and the system designs are surveyed in~\cite{auer2020technology,auer2020rise,bis2024cgide,senn2026sok}; the intermediated~\cite{auer2021quest} and indirect~\cite{adrian2019rise} architectures are proposals, and the requirements, the financial-stability analysis and the system design are the reports of~\cite{bis2020cbdc,bis2021finstab,bis2021design}.  Kumhof and Noone~\cite{kumhof2018cbdc} are the precedent for issuance that is not by conversion: their central bank buys securities the counterparty already holds; here it lends against coins the counterparty issues, and their separation of the currency from reserves is not taken.  An interest-bearing central bank digital currency has been proposed as an instrument of policy~\cite{davoodalhosseini2022cbdc,agur2022designing}, so the second limitation of Section~\ref{sec:introduction} is of the designs in operation and in preparation and not of the concept; whether the substitution of public for private money matters depends on what the central bank does with the funding that reaches it: passing it back to the banks is neutral~\cite{brunnermeier2019equivalence}; lending it to them raises credit and stability, and not lending it lowers both~\cite{kim2023cbdc}; and intermediating through investment banks makes the central bank the deposit monopolist~\cite{fernandezvillaverde2021central}, so the effect on banks depends on how the central bank recycles it~\cite{infante2024retail}; the efficiency of exchange stands against the crowding out of deposits~\cite{keister2023should}, and under bank market power a remunerated currency can raise the deposit rate and with it lending and output~\cite{chiu2023bank}; a holding limit keeps the stabilising slow disintermediation without the fast disintermediation of a run~\cite{bidder2025fastslow}, and the one live retail currency studied lowered retail deposits by 2.7\%, concentrated in liquid savings, and displaced deposit-backed digital payments rather than cash~\cite{dimaggio2026rupee}; and adoption carries the operational framework from a floor to a corridor and then to a ceiling~\cite{abad2025cbdc}, where the corridor of Section~\ref{sec:corridor} is counterparty-specific.  Each of these concerns a currency obtained by converting deposits into it; here none is, and the coins bear no interest, so they introduce no interest-rate competition with deposits.  A synthetic central bank digital currency~\cite{adrian2019rise} has private issuers back their liabilities one for one with reserves, where a non-sovereign grassroots issuer creates credit, bears its own default, and reaches base money by redemption.

Grassroots coins and bonds~\cite{shapiro2024gc,shapiro2026bonds} are units of debt issued by any person, with mutual credit formed by their exchange and redemption pegging mutually-liquid currencies to each other; they are undenominated, and this paper adds denomination, the sovereign, and par by arbitrage.  Fiat-backed stablecoins peg a token by the issuer's obligation to redeem at par against reserves~\cite{gorton2023taming}, an obligation now required by statute~\cite{genius2025act,mica2023}, while a tokenised deposit leaves the claim on the bank~\cite{garratt2023singleness}; the obligation of Definition~\ref{def:fiat-bonds} is borne by every issuer rather than by one, and the holder's claim runs, through redemption chains, to the sovereign's base money.  Commercial bank money and free banking, monetary history and the law of promissory notes, currency areas, financial networks and clearing, community and complementary currencies, community credit institutions, hawala, and the uneven transmission of monetary policy are related work at length in Appendix~\ref{app:added-related}.
\fi

\section{Conclusions}\label{sec:conclusion}

A central bank digital currency built of grassroots currencies has three components and one contract: the sovereign joins the currency the persons operate as one more issuer, and does not need a separate protocol.  Sovereign grassroots coins are money: digital debts of one unit of the fiat currency, issued by the central bank, held and paid by the public, a direct central bank digital currency by themselves.  We check them against the three foundational principles and the fourteen core features of~\cite{bis2020cbdc}: offline payment is not met; several of the others hold by construction or by the protocols the architecture rests on; and the throughput, resilience and ease of use of a deployment, and the central bank's authority to issue, remain to be established (Sections~\ref{sec:money} and~\ref{sec:cbdc}).  The non-sovereign grassroots currencies add credit: the central bank issues its sovereign grassroots Dollar coins by mutual credit lines  against Dollar-denominated grassroots coins by persons of its choice.  Redemption runs the other way: redeeming a sovereign coin gives up the coin and takes a fiat coin, so the coins in circulation fall.  Nothing obliges the central bank to mint a new sovereign coin against a bank deposit: a depositor who wants one buys it from a holder, or their bank pays them out of its cash or the coins it already holds.  A flight from deposits therefore moves coins that already exist and creates none, so no holding limit, tier or waterfall is needed to bound the creation of central bank money (Section~\ref{sec:credit}).  Grassroots bonds add maturity and thus interest, the standard financial instruments, and the central bank's instruments of monetary policy (Section~\ref{sec:interest}).  With them the central bank lends, absorbs liquidity, sets its rates and buys and sells securities in the coins and bonds the public holds, choosing the counterparties and terms of its credit operations.

\iffull
The peg has two legs: the issuer's obligation, worth the issuer's credit, and redemption along chains of mutual liquidity into the sovereign's base money (Proposition~\ref{prop:peg}).  The arbitrage-free price of the coins of an issuer who redeems them on demand is then one unit of the fiat currency (Proposition~\ref{prop:par}, Corollary~\ref{cor:par-sovereign}), so while redemptions are met the currencies of the persons and of the sovereign are one currency, with the fiat currency as its unit of account.  The central bank can choose to deal with any counterparty, not just banks, and its interest rates on lending and bonds bound from above and below the corresponding interest rates of its counterparties (Section~\ref{sec:corridor}).
\fi

Each ingredient has a literature (Section~\ref{sec:related-work}).  Here we proposed their conjunction: a retail coin of the central bank, countersigned by it, in the same contract as the fiat-denominated coins of persons; credit created between two persons, with no shared ledger; redemption chains that carry the sovereign peg; and maturity and escrow, which extend the same instruments to lending and to the operations of monetary policy.  The design lacks what the currencies in operation have: measured throughput, deployed scale and a measured ease of use; it rests on a prototype realisation~\cite{shapiro2026gsg,lewis2023grassroots,eitan2026securing}.  Sovereign and non-sovereign grassroots coins and bonds have been implemented and tested on a small scale (Appendix~\ref{app:run}).  A digital economy of non-sovereign grassroots currencies can form without the central bank of the denomination, spontaneously or by spilling over from a neighbouring economy, and the central bank can join it once formed by issuing its own denominated grassroots coins, which are then the sovereign ones.

\ifanon\else
\mypara{AI Disclosure}
We used Anthropic's Claude models to assist with the prose of this paper, for drafting and proofreading: we specified the content, reviewed every passage against the sources and against the papers this one builds on, and made all design decisions.  We verified the correctness and originality of all content, including all references.  Responsibility for all content rests with us.
\fi

\bibliography{bib}

\newpage
\appendix
\iffull\else
\section{The Principles, the Features and the Comparison Tables}\label{app:tables}

Tables~\ref{tab:requirements} and~\ref{tab:features} check sovereign grassroots coins against the three foundational principles and the fourteen core features of~\cite{bis2020cbdc}.  Tables~\ref{tab:contenders} and~\ref{tab:contenders-econ} compare the retail architectures and the private alternatives with the two kinds of coin of the proposed architecture, and Table~\ref{tab:money} sets the three components among the types of money proposed as a central bank digital currency or in its place.

\section{Added Explanations to Section~\ref{sec:money}}\label{app:added-money}

\section{Added Explanations to Section~\ref{sec:cbdc}}\label{app:added-cbdc}

\section{Added Explanations to Section~\ref{sec:credit}}\label{app:added-credit}

\section{Added Explanations to Section~\ref{sec:interest}}\label{app:added-interest}

\section{Added Explanations to Section~\ref{sec:related-work}}\label{app:added-related}

\fi

\section{The Grassroots Social Contract, Its Act Schemas and Its Modalities}\label{app:schemas}

A grassroots social contract~\cite{shapiro2026formalising} is a finite list of clauses its parties undertake towards one another.  A clause that describes an act is written as an \emph{act schema}: the act's roles, which of them must will it, and the atoms it requires, forbids, adds and deletes at each role.  The volition-guarded multiagent atomic transactions~\cite{lewis2026volitional} realising the contract are compiled from the schemas; three syntactic conditions on the schemas suffice for the protocol so realised to be volitionally grassroots; and each clause is \emph{enforced}, \emph{attested} or \emph{undertaken} by the contract according to what its runs determine and record.  This appendix presents the contract in full, writes its clauses as act schemas, identifies the modality of each, and shows the contract syntactically grassroots.

\subsection{The Clauses}\label{app:clauses}

\mypara{Definitions}
A \emph{party} is a person, natural or legal, with the gender-neutral pronoun ``they'', and \emph{Alice}, \emph{Bob} and \emph{Carol} are variables naming distinct parties.  A \emph{grassroots coin} names its issuer and its denomination --- a fiat currency --- and is due now; a \emph{grassroots bond} names a maturity date besides and is due then.  \emph{Dollar} names one denomination, and an \emph{Alice Dollar-denominated coin} or \emph{bond}, \emph{Alice-coin} and \emph{Alice-bond} for short, is one issued by Alice.  Each party keeps a date.  A bond is \emph{mature} when its maturity date is at most the date of its issuer.

\mypara{The contract}
\begin{enumerate}
\item \emph{Minting.}  Alice may decide to issue their coins, of any denomination, and their bonds, of any denomination and of a maturity later than their own date.
\item \emph{Date.}  Alice keeps their date at the current date and warrants it whenever an Alice-coin is taken for a matured Alice-bond.
\item \emph{Swap.}  Alice and Bob may jointly decide to swap Dollar-denominated coins and bonds they hold.
\item \emph{Acceptance.}  Alice may decide to accept Carol-coins in payment up to a limit of their choosing, if they hold a Carol-coin or a Carol-bond; each Carol-coin paid to Alice counts against the limit, and Alice may decide again.
\item \emph{Payment.}  Alice may decide to pay Bob Dollar-denominated coins they hold that Bob accepts.
\item \emph{Redemption.}  Alice may decide to redeem a Bob-coin they hold against any Dollar-denominated coin or bond Bob holds, and to take a Bob-coin for a matured Bob-bond they hold.
\item \emph{Pricing.}  Alice prices their offerings in Alice-coins and accepts them in payment at those prices.
\item \emph{Denomination.}  An Alice-coin and an Alice-bond are each Alice's debt of one Dollar, the coin due now and the bond at its date; Alice redeems an Alice-coin on presentation at face value, by a Dollar-denominated coin or bond the bearer accepts or by one Dollar.
\item \emph{Escrow.}  Alice may deposit Dollar-denominated coins and bonds with Carol for Bob; Carol and Bob may jointly decide to release them to Bob, and Carol and Alice to return them to Alice, as the agreed condition is met or fails.
\end{enumerate}

\noindent Clauses 1--6 and 9 describe acts; 7 and 8 describe none.  Clauses 1, 3, 5, 6 and 7 are the contract of grassroots currencies~\cite{shapiro2024gc} with a bond in place of a coin where the act allows one, clauses 2 and 9 are the date and escrow clauses of grassroots bonds~\cite{shapiro2026bonds}, and clauses 4 and 8 are new here.  The two are of different kinds: clause~8 denominates, adding the variable $f$ to every act and adding no act, while clause~4 adds the acts accept, payout and spend.  Clause~4 is the extension of the contract of grassroots currencies by which a party undertakes to accept named currencies besides their own~\cite{shapiro2024gc}, and with it clause~5 pays coins of the payee, of the payer, or of a third party, who takes part in the payment.  A swap in which each party gives coins of their own issue opens a mutual credit line; one in which a party gives coins and the other their own bonds of larger face value maturing later is a loan.  Carol in clause~9 is the escrow agent.  No clause names the central bank, which as a party is bound by clauses 1--9 like any other.  The denomination of a debt is not its governing law: which law governs an obligation, and in which forum it is enforced, are settled outside the contract.  A swap across denominations is not an act of the contract: it is two swaps, one in each denomination, or an escrow whose transfers are each in one (Appendix~\ref{app:fx}).

\subsection{The Act Schemas}\label{app:sgc-schemas}

Four predicates.  \textcent, of arity three: \textcent$(u,f,d)$ in the state of $p$ is an $f$-denominated bond issued by $u$ with maturity date $d$, held by $p$, and \textcent$(u,f,\bot)$ is an $f$-denominated coin of $u$, due now; $d$ ranges over $\calN \cup \{\bot\}$.  $\mathit{date}$, of arity one: $\mathit{date}(t)$ in the state of $p$ is the date of $p$.  $\mathit{dated}$, of arity two: $\mathit{dated}(u,t)$ in the state of $p$ is $u$'s warranty to $p$ that their date was $t$.  $\mathit{accepts}$, of arity two: $\mathit{accepts}(u,f)$ in the state of $p$ is $p$'s acceptance of $u$'s $f$-denominated coins in payment.  The roles are $p$, $q$, $e$ and $s$, the last the issuer of the coins paid in a payment in the coins of a third party; the party variables are $u$, $v$ and $r$; $f$ is the denomination variable, over the countable set $\calF$, disjoint from $\Pi$; the numeric variables are $d$, $d'$ for maturity dates, $t$, $t'$ for dates, $k$, $k'$ for amounts, and $L$ for an acceptance limit.  The seven clauses that describe acts become twelve schemas, payment taking three, redemption two and escrow three.  Nine are the schemas of the undenominated contract~\cite{shapiro2026bonds} with the denomination added --- \textcent\ takes the denomination as a middle argument, and every schema carries the one denomination variable $f$ --- and three are new here: accept, and the two payments in coins the payee did not issue, payout by their issuer and spend by a holder, with the issuer as a third role.

\begin{center}
\footnotesize
\setlength{\tabcolsep}{3pt}
\resizebox{\textwidth}{!}{%
\begin{tabular}{lll}
\hline
\emph{schema} & \emph{at its first role} & \emph{at its second role} \\
\hline
$\mathit{mint}(p?;\,f,d,k,t) \mid d = \bot \lor d > t :$ & $\mathit{date}(t),\ +k\cdot\text{\textcent}(p,f,d)$ & \\
$\mathit{advance}(p;\,t,t') \mid t' > t :$ & $-\mathit{date}(t),\ +\mathit{date}(t')$ & \\
$\mathit{swap}(p?,q?;\,f,u,d,k,v,d',k') :$ & $-k\cdot\text{\textcent}(u,f,d),\ +k'\cdot\text{\textcent}(v,f,d')$ & $-k'\cdot\text{\textcent}(v,f,d'),\ +k\cdot\text{\textcent}(u,f,d)$ \\
$\mathit{accept}(p?;\,f,u,d,L) :$ & $\text{\textcent}(u,f,d),\ +L\cdot\mathit{accepts}(u,f)$ & \\
$\mathit{pay}(p?,q;\,f,k) :$ & $-k\cdot\text{\textcent}(q,f,\bot)$ & $+k\cdot\text{\textcent}(q,f,\bot)$ \\
$\mathit{payout}(p?,q;\,f,k) :$ & $-k\cdot\text{\textcent}(p,f,\bot)$ & $-k\cdot\mathit{accepts}(p,f),\ +k\cdot\text{\textcent}(p,f,\bot)$ \\
$\mathit{spend}(p?,q,s;\,f,k) :$ & $-k\cdot\text{\textcent}(s,f,\bot)$ & $-k\cdot\mathit{accepts}(s,f),\ +k\cdot\text{\textcent}(s,f,\bot)$ \\
$\mathit{redeem}(p?,q;\,f,r,d') :$ & $-\text{\textcent}(q,f,\bot),\ +\text{\textcent}(r,f,d')$ & $-\text{\textcent}(r,f,d'),\ +\text{\textcent}(q,f,\bot)$ \\
$\mathit{mature}(p?,q;\,f,d,t) \mid d \le t :$ & $-\text{\textcent}(q,f,d),\ +\text{\textcent}(q,f,\bot),\ +\mathit{dated}(q,t)$ & $\mathit{date}(t)$ \\
$\mathit{deposit}(p?,e?;\,f,u,d,k) :$ & $-k\cdot\text{\textcent}(u,f,d)$ & $+k\cdot\text{\textcent}(u,f,d)$ \\
$\mathit{release}(e?,q?;\,f,u,d,k) :$ & $-k\cdot\text{\textcent}(u,f,d)$ & $+k\cdot\text{\textcent}(u,f,d)$ \\
$\mathit{return}(e?,p?;\,f,u,d,k) :$ & $-k\cdot\text{\textcent}(u,f,d)$ & $+k\cdot\text{\textcent}(u,f,d)$ \\
\hline
\end{tabular}}
\end{center}

\noindent ``Dollar-denominated coins and bonds'' is the one variable $f$ of every schema, which every atom of the schema carries: no binding of any schema moves instruments of two denominations, which is Corollary~\ref{cor:no-cross-arbitrage}.  ``If they hold a Carol-coin or a Carol-bond'' of clause~4 is that accept requires at $p$ an instrument of the issuer accepted, of the denomination accepted, and ``up to a limit of their choosing'' is that accept adds $L$ acceptance atoms at $p$, one of which each coin so paid to $p$ deletes.  ``Dollar-denominated coins they hold that Bob accepts'' of clause~5 is the three schemas of payment: pay, in which the payee $q$ is the issuer, whose acceptance is clause~7; payout, in which the payer $p$ is the issuer, and $q$ deletes $k\cdot\mathit{accepts}(p,f)$; and spend, in which the issuer is the third role $s$, requiring nothing, and $q$ deletes $k\cdot\mathit{accepts}(s,f)$.  No payment reads or warrants a date: a coin carries none, and the issuer takes part to countersign the payment and to record it in their log.  The reading of the clauses' other quoted phrases --- the guarding roles, the amounts and maturities of mint, the issuer-judged maturity and the date warranty of mature, the unconstrained $r$ and $d'$ of redeem, the bindings of the mutual credit line and the loan, and the two receivers of release and return, fixed by the escrow agreement (Appendix~\ref{app:escrow}) --- is that of the undenominated schemas~\cite{shapiro2026bonds}, with $f$ carried through.

\subsection{The Modality of Each Clause}\label{app:modalities}

The definitions are~\cite{shapiro2026formalising}'s.  A \emph{clause} of a contract is a proposition about the conduct of its parties, kept or breached by that conduct, and a party it names as answerable is its \emph{obligor}; the runs of the protocol realising the contract \emph{determine} the clause if whether it is kept is a function of the run, and a transaction \emph{may breach} it if it occurs in some correct run that breaches it.  A transaction is \emph{witnessed against} a party if after it another of its participants holds an atom naming that party.  A clause with obligor $o$ is \emph{enforced} if the runs determine it and every correct run keeps it; \emph{attested} if some transaction may breach it and every transaction that may is witnessed against $o$; and \emph{undertaken} if no transaction may breach it, the runs do not determine it, and some transaction is witnessed against $o$ or $o$'s undertaking is held by another party.

Clauses 1, 3, 4, 5, 6 and 9 are enforced, clause~2 is attested, and clause~7 is undertaken, as for the undenominated contract~\cite{shapiro2026bonds}: an enforced clause permits an act and obliges nothing beyond it, with the obligations of clauses 5 and~6 self-executing in the schemas --- each schema of payment is guarded by the payer and adds the coins at the payee, who accepts them under clause~7 or under clause~4, and redeem is guarded by the bearer and deletes the instrument taken at the issuer; the issuer of the coins spent is a role of spend and not a guard, so they take part and are not asked; the date warranty of clause~2 is given in the course of mature, which adds $\mathit{dated}(q,t)$ in the state of the bearer, witnessed against the issuer $q$ and legally enforceable on that record --- no other schema warrants a date, so mature is the one act that may breach the clause; and the coin a swap adds in another party's holding is the record of the pricing undertaking of clause~7.

Clause~8 is undertaken.  Handing over a unit of a fiat currency is not an act of the contract, so no transaction of it may breach the clause and no run determines whether it is kept; and it is witnessed against the issuer: the coin \textcent$(p,f,d)$ in another party's holding is the record of a debt of one unit of $f$, enforceable under the law governing the obligation (Section~\ref{sec:denominated}).  The in-kind fulfilment of clause~8 is redeem, enforced, and its fiat fulfilment is performed outside the contract (Section~\ref{sec:payment}).  No act of the contract asks for a unit of $f$ either: the presentation for a fiat coin and its payment are both outside it, on the record the coin is; what the contract carries is the coin's return to its issuer, a payment to it.

The sovereign is not a role.  Act schemas cannot name a person, and every party begins in the same empty state, so no schema and no clause distinguishes the sovereign of $f$ from any other party~\cite{shapiro2026formalising}: the map $\sigma$ and the sovereign's ability to issue fiat coins (Section~\ref{sec:the-sovereign}), which set it apart, lie outside the contract.  The sovereign mints, swaps, accepts, pays and redeems under the same twelve schemas, and its warranty under clause~2 and its undertakings under clauses 7 and 8 are recorded against it as against anyone.

\subsection{The Contract Is Grassroots}\label{app:sgc-grassroots}

A contract is syntactically grassroots if it has an unobstructed introductory act, every predicate its acts require has traceable provenance, and every act not willed by all its parties runs only between parties that already record each other --- its role graph, with an edge between two roles when one requires an atom of traceable provenance naming the other, is connected; the protocol realising a syntactically grassroots contract is volitionally grassroots~\cite{shapiro2026formalising}.  The three conditions are checked for the undenominated schemas in~\cite{shapiro2026bonds}, and the argument $f$ is a denomination and names no person, so the check carries to the schemas shared with it unchanged: the swap at the binding of the mutual credit line is the unobstructed introductory act; \textcent\ and $\mathit{dated}$ have traceable provenance, and $\mathit{date}$ names no person; and pay, redeem and mature, guarded by $p$, require at $p$ a coin or a bond of the counterparty, an atom of traceable provenance naming them.  Of the three schemas new here, accept adds $\mathit{accepts}(u,f)$ at $p$ where $u$ is an argument of the instrument $p$ requires, so $\mathit{accepts}$ has traceable provenance; payout, guarded by $p$, has $q$ require $\mathit{accepts}(p,f)$, naming $p$; and spend, guarded by $p$, has $p$ require a coin of $s$ and $q$ require $\mathit{accepts}(s,f)$, both naming $s$, so its role graph is connected through the issuer.  A payment in the coins of a third party between two parties who do not record each other is therefore an act of the contract only with the issuer as a party, which is the countersignature of Section~\ref{sec:payment}.

The contract is therefore syntactically grassroots at every value of its numeric variables.  Its schemas carry numeric variables and the denomination variable, each over a countable set, so their instances are a countable set, and the theorem of~\cite{shapiro2026formalising} is for a finite contract; the conclusion for the countable set is through politeness, a condition on transactions and not on schemas, as for grassroots currencies~\cite{shapiro2024gc}.

\begin{lemma}[Amounts, Dates and Denominations]\label{lem:amounts}
The transactions of Definition~\ref{def:sgc-transactions} are a polite set~\cite{shapiro2026formalising}, and the protocol over them is volitionally grassroots.
\end{lemma}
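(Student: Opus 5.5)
The plan is to check the definition of a polite set of transactions from~\cite{shapiro2026formalising} directly against Definition~\ref{def:sgc-transactions}, and then invoke that paper's theorem that a polite protocol realising a syntactically grassroots contract is volitionally grassroots. The syntactic check of Appendix~\ref{app:sgc-grassroots} already supplies the three conditions on the schemas: an unobstructed introductory act, traceable provenance, and a connected role graph for every act not willed by all its parties. What remains is to show that passing from the finitely many schemas to their countably many instances, over amounts $k,k',L$, dates $d,d',t$ and denominations $f$, preserves what the theorem needs. As I understand it, politeness asks three things. The set of transactions must be closed under renaming of persons by any injection. It must be \emph{interleaving-closed}: a transaction enabled in the state of a group remains enabled, with the same effect, when that group's states are embedded in a larger configuration with disjoint others. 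Finally, every transaction must be an instance of finitely many schemas whose non-person parameters do not couple distinct groups.

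The first step is closure under person renaming. Each transaction of Definition~\ref{def:sgc-transactions} is determined by its participants and by parameters $f,d,d',t,k,k',L$. None of these parameters is a person: $\calF$ is disjoint from $\Pi$, and the numbers lie in $\calN$. Applying an injection on $\Pi$ therefore maps every transaction to a transaction of the same kind, and maps preconditions to preconditions. The sovereign map $\sigma$ lies outside the contract, as the subsection on modalities notes, so renaming cannot break anything that depends on the sovereign.

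The second step is locality. Every precondition mentions only the states of the participants: holdings $c_p$, dates $d^*_p$ and acceptances $A_p$. Every effect changes only those states. For example, Mature reads $d^*_q$ only for its participant $q$, and spend reads the holding of the payer and the acceptances of the payee. It follows that enabledness and effect are unchanged by adding persons outside the transaction, which gives interleaving-closure.

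The third step, which I expect to be the main obstacle, is the role of the countable parameters. The theorem of~\cite{shapiro2026formalising} is stated for finite contracts, so I must argue that instantiating a numeric or denomination variable never creates a dependency between groups that the schemas lack. The approach is to fix any group's run and any binding of $f$ and the numbers. By Corollary~\ref{cor:no-cross-arbitrage}, each instance is an instance of one of the twelve schemas at one denomination. Moreover, the numeric side conditions ($d>t$, $t'>t$, $d\le t$) are evaluated on the participants' own atoms. Each instance is thus a member of a finite, person-indexed family up to the choice of parameters, and politeness is closed under such unions.

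With politeness established, I apply the finite theorem to each finite subset of instances actually used in a given finite run. Since grassroots-ness is a property witnessed by finite runs, this yields the conclusion for the whole countable set.
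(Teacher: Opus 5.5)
There is a genuine gap. The conditions you attribute to politeness are closure under renaming of persons, interleaving-closure, and a finite-schema condition. These are not the ones the lemma rests on, so the two properties that actually need verifying are never proved; your renaming and locality observations are true but beside the point.

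As the paper recalls from~\cite{shapiro2026formalising}, a set of volition-guarded transactions is polite if it is \emph{open} and \emph{closed}. Open means any two persons can be brought to a configuration at which a transaction coupling them is enabled. Closed means every transaction enabled at the end of a safe run has all its participants in one instance of that run, or a guard of it in every instance holding one of them.

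Openness is immediate: each of $p \ne q$ mints an $f$-coin by a unary transaction, and then a swap of the two is enabled. You never state it.

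Closure is the substance, and your locality step does not deliver it. A transaction whose precondition reads only its participants' states can still join two persons from different instances with a guard in only one of them. For example, a payout guarded by the payer and requiring nothing at the payee would be local and would violate closure. What the proof needs is:
\begin{itemize}
\item[(i)] the invariant that whenever $p$ holds an atom naming $q$ (an instrument of $q$, $\mathit{dated}(q,t)$ or $\mathit{accepts}(q,f)$), then $p$ and $q$ lie in one instance, proved by induction on the run from traceable provenance;
\item[(ii)] a case check over Definition~\ref{def:sgc-transactions}. Mint, advance and accept are unary. Swap, deposit, release and return are guarded in all roles. Pay and redeem require $p$ to hold a $q$-coin, and mature requires $p$ to hold a $q$-bond. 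Payout requires $q$ to hold $\mathit{accepts}(p,f)$. Spend requires $p$ to hold an $s$-coin and $q$ to hold $\mathit{accepts}(s,f)$, which puts the unguarded payee and issuer in the payer's instance.
\end{itemize}
The acceptance atoms are exactly what make payout and spend closed, and your argument never uses them.

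Your third step then asserts that politeness survives uniting the instances over all values of the numeric and denomination variables. That is precisely what the paper says cannot be inherited. Closure quantifies over runs of the whole set, and a run of the union is not a run of any single instance, so closure must be proved of the union directly. It goes through because the provenance induction at each transition uses only the schema inducing that transition, whatever its arguments.

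Finally, the passage through finite subsets is both unjustified and unnecessary. It is unjustified because volitional grassrootsness quantifies over all runs of the protocol, and liveness depends on every transaction that is enabled, so the runs of a finite sub-protocol are not the runs of the full one. It is unnecessary because, once the full countable set is shown polite, the theorem that the protocol over a polite set is volitionally grassroots gives the conclusion directly. That is why the paper handles the countable case through politeness rather than through the finite theorem for syntactically grassroots contracts.
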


\begin{proof}
A set of volition-guarded transactions is \emph{polite} if it is open --- any two persons can always be brought to a configuration at which some transaction coupling them is enabled --- and closed --- every transaction enabled at the end of a safe run has all its participants in one instance of that run, or a guard of it in every instance holding one of them; the protocol over a polite set is volitionally grassroots~\cite{shapiro2026formalising}.  Both are conditions on the set of transactions, and neither depends on the number of schemas.  The checks above are per schema and per binding and hold at every value of the numeric variables, so each instance is a contract of~\cite{shapiro2026formalising} that is syntactically grassroots.  Closure quantifies over the runs of the protocol over the whole set, and a run of the union is not a run of a single instance, so it is proved here of the union, as for grassroots currencies~\cite{shapiro2024gc}.  \textcent, $\mathit{dated}$ and $\mathit{accepts}$ have traceable provenance in every schema of every instance, and the induction on a run by which~\cite{shapiro2026formalising} proves that no transition puts an edge whose ends lie in different instances uses only the schema inducing the current transition, so it carries to the union unchanged: whenever $p$ holds an atom naming $q$, $p$ and $q$ lie in one instance of the run.  Mint, advance and accept are unary; swap, deposit, release and return are guarded in all their roles; a pay or a redeem enabled between $p$ and $q$ requires $p$ to hold a $q$-coin, a mature enabled between $p$ and $q$ requires $p$ to hold a $q$-bond, a payout enabled between $p$ and $q$ requires $q$ to hold $\mathit{accepts}(p,f)$, and a spend enabled over $p$, $q$ and $s$ requires $p$ to hold an $s$-coin and $q$ to hold $\mathit{accepts}(s,f)$, so the participants of each already lie in one instance.  The union is therefore closed.  For openness take any two persons $p \ne q$: each mints one $f$-denominated coin of their own, by unary transactions, and a swap of the two is then enabled, and it is a transaction of the union.  The union is therefore polite, and the protocol over it is volitionally grassroots.
\end{proof}

\noindent The contract is therefore syntactically grassroots, and the protocol realising it is volitionally grassroots: any group of persons operates it on its own, and two groups become connected only by a swap willed by a member of each.

\section{Financial Instruments}\label{app:escrow}

This appendix presents the financial instruments of grassroots bonds~\cite{shapiro2026bonds}, recalled here so that the claim of Section~\ref{sec:instruments} rests on the constructions and not solely on the reference.  Every transaction of an instrument is over coins and bonds of one denomination (Corollary~\ref{cor:no-cross-arbitrage}), which we leave implicit below; \textcent$^k_{p,d}$ is a multiset of $k$ bonds issued by $p$ and maturing at $d$, and \textcent$^k_p$ is a multiset of $k$ $p$-coins.  Every transaction below is a transaction of the contract of Appendix~\ref{app:schemas} at a binding, so no instrument adds a construct to the currency; what an instrument adds is the escrow program that fixes what the agent consents to.

\subsection{Instruments That Are One Swap}\label{app:swaps}

A swap exchanges one lot --- coins of one issuer, or bonds of one issuer and maturity --- for another, by an act of both parties.  Each instrument below is a swap between $p$ and $q$, specified by the instruments $x$ that $p$ gives and the instruments $y$ that $q$ gives.

\begin{enumerate}
\item \textbf{Symmetric mutual credit.}  $x = \text{\textcent}^k_p$, $y = \text{\textcent}^k_q$: each mints and gives the same number of their own coins, and each thereafter holds coins of the other~\cite{shapiro2024gc}.
\item \textbf{Asymmetric mutual credit.}  $x = \text{\textcent}^{k'}_p$, $y = \text{\textcent}^k_q$ with $k' < k$: $p$ charges a premium for credit to $q$.  As coins are redeemable at par, $p$ may redeem at once and keep the premium, so a premium on coins is not interest~\cite{shapiro2024gc}.
\item \textbf{Sale of debt.}  $x = \text{\textcent}^k_r$, $y = \text{\textcent}^{k'}_q$ with $k' < k$: $p$ sells coins of a third person $r$ to $q$ at a discount, which prices $p$'s doubt about $r$'s solvency.  The debtor $r$ is not a party.
\item \textbf{Zero-coupon loan.}  $x = \text{\textcent}^{k'}_p$, $y = \text{\textcent}^k_{q,d}$ with $k' < k$: the lender $p$ gives coins now, the borrower $q$ bonds of larger face value maturing at $d$, and the difference is interest for that term.  A bond is not presented for redemption, so the loan cannot be recalled; the borrower may repay early, by redeeming the lender's coins against its own bond.
\item \textbf{Sale of debt before maturity.}  $x = \text{\textcent}^k_{r,d}$, $y = \text{\textcent}^{k'}_q$ with $k' < k$: the discount prices the time to maturity as well as the doubt about $r$.
\item \textbf{Forward contract.}  $x = \text{\textcent}^k_{p,d}$, $y = \text{\textcent}^{k'}_{q,d}$: bonds of the two parties maturing at one date are swapped at inception, so at $d$ each holds a claim on the other of the agreed face value.  Between inception and $d$ each party holds the other's debt, so a default in that interval falls on the counterparty, which a forward settled at $d$ does not.
\end{enumerate}

A currency swap and a repurchase agreement are sequences of two swaps; factoring is a sale of debt; a mortgage is a loan with collateral; a bank's time deposit is its bond, and its demand deposit is its coin.

\subsection{The Escrow Agent}\label{app:escrow-agent}

An escrow agent $e$ holds coins and bonds on behalf of the parties to an escrow agreement and transfers them by its agreed conditions.  Each transfer is a transaction between $e$ and the party giving or receiving them, and $e$ is party to every transfer, so each is made by the consent of $e$ and that party.  What $e$ consents to is fixed by the escrow program implementing the agreement: for a date-based condition, by the local date $d_e^*$ of $e$; for an event-based condition, by the judgment of $e$, informed where needed by an \emph{oracle}, an agent the parties agree to trust for the value of a quantity outside the agreement, which attests but holds no instruments and takes no act.  An attestation is not a transaction of the contract: the escrow agent judges from what it sees, its own date and the deposits it holds, and an outside event reaches it outside the contract, as the payment of a fiat coin does; the instruments below that rest on an outside event rest on that.  The constructions assume that both parties trust $e$ to follow the escrow program and, for event-based conditions, trust $e$ as adjudicator of the condition.  An arrangement has a depositor, a beneficiary and $e$: the depositor transfers coins or bonds to $e$; subsequently $e$ either releases them to the beneficiary or returns them to the depositor.  We write $\{(a, x),\; (b, y)\}$ for the exchange in which $a$ gives the instruments $x$ and $b$ gives the instruments $y$, guarded by $a$ and $b$; $y = \emptyset$ makes it a one-sided transfer.  The contract has four such transfers: the deposit, the release and the return of clause~9, and the payment of clause~5, which carries only coins the party receiving them accepts.  Bonds therefore do not pass one way from one ordinary party to another; a swap has two sides.

An escrow whose agent is a committee, acting when a threshold of its members agree, is not expressible in a model in which every agent is one person~\cite{shapiro2026bonds}.

\subsection{Instruments That Employ an Escrow Agent}\label{app:escrow-instruments}

\begin{enumerate}
\item \textbf{Packaged exchange.}  Parties $p$ and $q$ agree the lots $X$ that $p$ gives, the lots $Y$ that $q$ gives, and an establishment window $T_0$.  \emph{Deposit}: $\{(p, X),\; (e, \emptyset)\}$ and $\{(q, Y),\; (e, \emptyset)\}$, one transfer per lot.  \emph{Release}: once every lot of both is deposited, $\{(e, Y),\; (p, \emptyset)\}$ and $\{(e, X),\; (q, \emptyset)\}$.  \emph{Return}: if the window passes with the lots of a party incomplete, $e$ returns every deposited lot to its depositor.  All lots change hands or none do, and neither party holds the other's lots before both deposits are complete.  It realises the instruments whose sides are several lots:
\begin{enumerate}
\item \textbf{Balloon loan.}  $X = \text{\textcent}^k_p$; $Y = \bigcup_{j=1}^{n} \text{\textcent}^{k_j}_{q,d_j} \cup \text{\textcent}^k_{q,d}$: coupon bonds at the payment dates and a principal bond at $d$.
\item \textbf{Fixed-payment loan.}  $X = \text{\textcent}^k_p$; $Y = \bigcup_{j=1}^{n} \text{\textcent}^{k_j}_{q,d_j}$: bonds with fixed payments $k_j$ at dates $d_j$.
\item \textbf{Interest rate swap.}  $X = \bigcup_j \text{\textcent}^{k_j}_{p,d_j}$, $Y = \bigcup_j \text{\textcent}^{k'_j}_{q,d_j}$, where $p$ pays fixed amounts $k_j$ while $q$ pays amounts $k'_j$ that vary with a reference rate, attested by an interest-rate oracle.  Periodic settlement exchanges realise the net difference at each date.
\end{enumerate}

\item \textbf{Term credit line.}  A lender $p$ commits to advance up to $k$ coins to a borrower $q$, with rate $\rho$ on drawn credit, implied by the face values, payment dates $d_1, \ldots, d_n$ and expiry $T$.  \emph{Establishment}: $p$ mints $k$ coins and deposits them with $e$, $\{(p, \text{\textcent}^k_p),\; (e, \emptyset)\}$; $e$ tracks the drawn amount $k_d$, initially $0$.  \emph{Draw}: when $q$ draws $k' \le k - k_d$, $q$ mints a principal bond \textcent$^{k'}_{q,T}$ and coupon bonds \textcent$^{\rho k'}_{q,d_j}$, with $\rho k'$ an integer in the smallest unit (Definition~\ref{def:fiat-bonds}), for each $d_j > d_e^*$ and deposits them with $e$; $e$ releases them to $p$ and releases \textcent$^{k'}_p$ to $q$, and $k_d$ becomes $k_d + k'$.  \emph{Repayment}: $q$ acquires $p$-coins and redeems them from $p$ against its principal bond, which $p$ cannot refuse; $p$ deposits them with $e$, and $k_d$ falls by that amount.  The coupon bonds of a repaid draw stand, being $q$'s debts already issued and held by $p$.  \emph{Expiry}: once $d_e^* \ge T$, $e$ returns the undrawn balance to $p$.  Between the operations of the agreement $e$ holds \textcent$^{k-k_d}_p$, and $p$ holds \textcent$^{k_d}_{q,T}$ together with the coupon bonds of the draws, with $0 \le k_d \le k$, so $q$ cannot draw beyond the limit and the principal $p$ holds equals the amount drawn~\cite{shapiro2026bonds}.

\item \textbf{Collateral.}  To collateralise a loan from $p$ to $q$: \emph{deposit} $\{(q, x),\; (e, \emptyset)\}$, the borrower's bonds $x$ to $e$; \emph{release on default} $\{(e, x),\; (p, \emptyset)\}$; \emph{return on fulfilment} $\{(e, x),\; (q, \emptyset)\}$.  The lender of last resort of Section~\ref{sec:policy} lends against such a deposit.

\item \textbf{Guarantee.}  A guarantor $g$ covers the obligations of $q$ to $p$: \emph{deposit} $\{(g, \text{\textcent}^k_g),\; (e, \emptyset)\}$; \emph{invocation on default} $\{(e, \text{\textcent}^k_g),\; (p, \emptyset)\}$; \emph{release on fulfilment} $\{(e, \text{\textcent}^k_g),\; (g, \emptyset)\}$.

\item \textbf{Option.}  An option grants $p$ the right, and not the obligation, to execute a specified swap with $q$ within an exercise window $[d_1, d_2]$ agreed at establishment and judged by $d_e^*$, with an establishment window $T_0$ before which both parties must deposit.  \emph{Establishment}: $p$ deposits the premium \textcent$^k_p$ with $e$ and $q$ deposits the underlying bonds $y$; if only one deposits by $T_0$, $e$ returns that deposit.  \emph{Exercise}: while $d_1 \le d_e^* \le d_2$, $p$ may signal exercise, whereupon $e$ releases $y$ to $p$ and the premium to $q$.  \emph{Expiry}: once $d_e^* > d_2$ with no exercise, $e$ returns $y$ to $q$ and transfers the premium to $q$.  American exercise is the window $[0, d]$ and European exercise the window $[d, d_2]$; a window is needed because advance-date is unguarded and may raise $d_e^*$ past $d$ without equalling it.  This is a call option; a put option is symmetric, $q$ depositing the strike price and $p$ the bonds it wishes to sell.  A strike price is incorporated by a further deposit of $p$ at establishment, released to $q$ on exercise and returned to $p$ on expiry.

\item \textbf{Insurance.}  The insured $p$ pays a premium and the insurer $q$ a payout if a specified event occurs by expiry $T$, as $e$ attests.  \emph{Establishment}: $\{(q, \text{\textcent}^{k'}_q),\; (e, \emptyset)\}$, the insurer's reserve, $k' \gg k$.  \emph{Premium}: $\{(p, \text{\textcent}^k_p),\; (e, \emptyset)\}$.  \emph{Claim}: $\{(e, \text{\textcent}^{k'}_q),\; (p, \emptyset)\}$ and $\{(e, \text{\textcent}^k_p),\; (q, \emptyset)\}$.  \emph{Expiry with no claim}: $e$ transfers the premium to $q$ and returns its reserve.  A mutual insurance arrangement is each member depositing premiums into a shared escrow reserve.

\item \textbf{Credit default swap.}  $p$ pays periodic premiums to the protection seller $q$, who compensates $p$ if a reference person $r$ defaults, as $e$ adjudicates.  \emph{Establishment}: $\{(q, \text{\textcent}^K_q),\; (e, \emptyset)\}$, the seller's reserve.  \emph{Premium payments}: $\{(p, \text{\textcent}^{k_j}_{p,d_j}),\; (e, \emptyset)\}$ at the dates $d_j$, forwarded to $q$.  \emph{Credit event}: $\{(e, \text{\textcent}^K_q),\; (p, \emptyset)\}$.  \emph{No credit event}: the premium bonds go to $q$, and at expiry $T$ the reserve returns to $q$.

\item \textbf{Letter of credit.}  A bank $b$ guarantees payment to the seller $p$ on behalf of the buyer $q$, $e$ verifying that the contractual conditions are met.  \emph{Issuance}: $\{(b, \text{\textcent}^k_b),\; (e, \emptyset)\}$.  \emph{Presentation}: once $e$ verifies that $p$ has fulfilled the terms, $\{(e, \text{\textcent}^k_b),\; (p, \emptyset)\}$.  \emph{Reimbursement}: the buyer deposits its bonds with $e$, $\{(q, \text{\textcent}^k_{q,d}),\; (e, \emptyset)\}$, and $e$ releases them to the bank, $\{(e, \text{\textcent}^k_{q,d}),\; (b, \emptyset)\}$; the buyer cannot transfer them to the bank directly, there being no such act.
\end{enumerate}

\section{The Sovereign Market, as It Runs}\label{app:run}

The transactions of Definition~\ref{def:sgc-transactions} are implemented by the \emph{denominated agent} every party runs, a program in vGLP~\cite{shapiro2026volition}, GLP~\cite{shapiro2025glp} with volition guards: a clause preceded by a volition guard \verb|*(X1, ..., Xk)| reduces only when the person wills it, and the person's answer fills the writers \verb|X1, ..., Xk|.  The agent has a volition-guarded request clause for each transaction a party initiates: Mint has two, for coins and for bonds, and Pay has three, to the issuer of the coins paid, by that issuer, and by a third party.  Swap and the three escrow transfers are guarded by both parties, and the other party answers by the volition-guarded clause of a responder.  Advance is guarded by no one: the agent reads its date from a calendar.

The agent is that of grassroots bonds~\cite{shapiro2026bonds} with the denomination added as the argument after the issuer of every coin and bond, and its arguments are those of the agent of grassroots currencies~\cite{shapiro2024gc} with the calendar, the date, the warranties received and the acceptances granted added.  Accept and the payments by the issuer and by a third party are new here, and their request clauses are below.  The central bank runs the same agent, and no clause names it.

\mypara{Accept}
Accept$_p(f,u,L)$ is guarded by $p$: a request clause with three fields, the denomination, the issuer accepted and the limit, whose precondition $\text{\textcent}_{f,u,\ast} \in c_p$ is tested by \verb|holds|, and whose effect adds $L$ acceptances of $u$ in $f$ to $A_p$:
\begin{verbatim}
*(F, U, L)
agent(Id, UserIn, PeerIn, Convs, Cal, Outs, Holdings, Date,
      Ws, As, Serial) :-
    ground(Id?), ground(F?), ground(U?), integer(L?), L? >= 1 |
    holds(U?, F?, Holdings?, Held, Holdings1),
    do_accept(Held?, Id?, U?, F?, L?, UserIn?, PeerIn?,
              Convs?, Cal?, Outs?, Holdings1?, Date?, Ws?,
              As?, Serial?).
\end{verbatim}

\mypara{A payment by the issuer}
Pay$_{p,q}(\text{\textcent}^k_{f,p})$ is guarded by $p$: a request clause with three fields, the payee, the denomination and the amount, whose precondition $\text{\textcent}^k_{f,p} \subseteq c_p$ is tested by selecting $p$'s own coins:
\begin{verbatim}
*(Q, F, K)
agent(Id, UserIn, PeerIn, Convs, Cal, Outs, Holdings, Date,
      Ws, As, Serial) :-
    ground(Id?), ground(Q?), ground(F?), integer(K?), K? >= 1 |
    select_instruments(Id?, F?, now, K?, Holdings?,
                       Status, Give, Rest),
    do_payout(Status?, Id?, Q?, F?, K?, Give?, Rest?, UserIn?,
              PeerIn?, Convs?, Cal?, Outs?, Date?, Ws?, As?,
              Serial?).
\end{verbatim}
The coins reach $q$, whose agent takes them without asking its person if $q$ holds $k$ acceptances of $p$ in $f$, and deletes $k$ of them; otherwise it returns the coins.

\mypara{A payment by a third party}
Pay$_{p,q}(\text{\textcent}^k_{f,u})$ with $u$ distinct from $p$ and $q$ is guarded by $p$: a request clause with four fields, the payee, the issuer, the denomination and the amount, whose precondition $\text{\textcent}^k_{f,u} \subseteq c_p$ is tested by selecting the coins of $u$:
\begin{verbatim}
*(Q, S, F, K)
agent(Id, UserIn, PeerIn, Convs, Cal, Outs, Holdings, Date,
      Ws, As, Serial) :-
    ground(Id?), ground(Q?), ground(S?), ground(F?),
    integer(K?), K? >= 1 |
    select_instruments(S?, F?, now, K?, Holdings?,
                       Status, Give, Rest),
    do_spend(Status?, Id?, Q?, S?, F?, K?, Give?, Rest?,
             UserIn?, PeerIn?, Convs?, Cal?, Outs?, Date?, Ws?,
             As?, Serial?).
\end{verbatim}
The coins reach $u$, whose agent countersigns the payment without asking its person and forwards it to $q$ on the conversation $u$ holds with $q$.  The agent of $q$ takes the coins without asking its person if $q$ holds $k$ acceptances of $u$ in $f$, and deletes $k$ of them; otherwise it returns the coins.  The issuer records the payment when $q$ has taken it.  The payer and the payee need no conversation with each other; the issuer connects them (Appendix~\ref{app:sgc-grassroots}).

\mypara{The village}
The run is the village market of grassroots bonds~\cite{shapiro2026bonds} --- Alice the baker, Bob the farmer, Charlie the carpenter, Diana the doctor, Eve the teacher, Frank the fisherman, and an escrow agent --- denominated in one fiat currency, with the central bank as an eighth party and Diana as the community bank.  Besides the operations of that village, the central bank mints thirty coins and opens a mutual credit line with Diana; Bob redeems two Diana-coins for central bank coins and accepts central bank coins up to five; Frank redeems a Diana-coin for a central bank coin, accepts central bank coins up to five, and pays the coin to the central bank for a unit of the fiat currency; the central bank pays Frank four of its coins; Frank spends three of them at Bob; Diana asks for a Frank-coin for a Frank-bond before Frank's date reaches its maturity, and again after; and Frank deposits two Eve-coins with the escrow agent, who returns them.  A play stands in for the super-app and for the persons: it opens the conversation between every two parties who transact, and a scripted person for each party answers the cards the agent shows in the order of their script.

Below is the output of the run, elided to the transactions this paper rests on.  A holding is a list of lots, each naming its issuer, its denomination, \texttt{now} for a coin or the maturity date of a bond, and its number of instruments, so \texttt{lot(cb,~usd,~now,~30)} is thirty sovereign grassroots coins and \texttt{lot(bob,~usd,~25,~24)} twenty-four Bob-bonds maturing on day~$25$.  An \texttt{opened(q)} line records a conversation with $q$ that the play has opened: two parties transact over the root channel of a conversation and over nothing else.

\begin{footnotesize}
\begin{verbatim}
...
tagged(cb, opened(diana))
...
tagged(cb, opened(frank))
...
tagged(cb, opened(bob))
...
tagged(cb, minted(30, now))
tagged(frank, minted(18, 28))
...
tagged(diana, swap_done(cb))
tagged(cb, swap_done(diana))
tagged(diana, holdings([lot(cb, usd, now, 10)]))
...
tagged(cb, holdings([lot(cb, usd, now, 20), lot(diana, usd, now, 10)]))
...
tagged(bob, redeemed(diana, cb, now))
...
tagged(bob, accepted(cb, usd, 5))
...
tagged(frank, not_mature_for(diana, 28))
tagged(diana, not_mature(frank, 28))
...
tagged(frank, date_advanced(28))
...
tagged(escrow, transferred(release, frank))
...
tagged(diana, matured(frank, 28))
tagged(diana, warranty(frank, 28))
...
tagged(escrow, transferred(return, frank))
...
tagged(frank, redeemed(diana, cb, now))
tagged(frank, holdings([..., lot(cb, usd, now, 1)]))
...
tagged(frank, accepted(cb, usd, 5))
...
tagged(frank, paid(cb, 1))
tagged(cb, received(frank, 1))
tagged(cb, fiat_handed(frank, 1))
tagged(cb, holdings([lot(cb, usd, now, 21), lot(diana, usd, now, 10)]))
...
tagged(cb, paid_out(frank, 4))
tagged(frank, received_from(cb, cb, 4))
tagged(frank, acceptances([accepting(cb, usd, 1)]))
...
tagged(frank, spending(bob, cb, 3))
tagged(frank, holdings([..., lot(cb, usd, now, 1)]))
tagged(bob, received_from(frank, cb, 3))
tagged(bob, acceptances([accepting(cb, usd, 2)]))
tagged(frank, spent_at(bob, cb, 3))
tagged(bob, holdings([..., lot(cb, usd, now, 5)]))
tagged(cb, countersigned(frank, bob, 3))
\end{verbatim}
\end{footnotesize}

\noindent The run ends with the holdings of Table~\ref{table:sovereign-holdings}.  They reconcile with Conservation of Debt (Lemma~\ref{lem:conservation}): the instruments of each issuer held across the eight parties, the issuer's own included, are those they minted --- $30$ central bank coins, $30$ Alice-coins, $15$ Bob-coins and $24$ Bob-bonds, $25$ Charlie-coins, $45$ Diana-coins, $20$ Eve-coins, and Frank's $10$ coins and $18$ bonds, one of which matured and is held as a Frank-coin.

\begin{table}[!htbp]
\centering
\small
\begin{tabular}{@{}ll@{}}
\hline
\textbf{Party} & \textbf{Holdings after the run} \\
\hline
Central bank & 17 cb-coins, 10 diana-coins \\
Alice & 5 bob-coins, 15 charlie-coins, 8 alice-coins, 4 frank-coins \\
Bob & 10 alice-coins, 18 diana-coins, 5 cb-coins \\
Charlie & 10 alice-coins, 10 eve-coins, 6 charlie-coins \\
Diana & 7 cb-coins, 11 diana-coins, 1 frank-coin, \\
 & 24 bob-bonds maturing on day 25, 12 frank-bonds maturing on day 28 \\
Eve & 4 charlie-coins, 1 frank-coin, 2 alice-coins, 10 bob-coins \\
Frank & 6 diana-coins, 10 eve-coins, 5 frank-coins, 1 cb-coin, \\
 & 5 frank-bonds maturing on day 28 \\
Escrow agent & none \\
\hline
\end{tabular}
\caption{The holdings of the eight parties at the end of the sovereign market, as each party's screen reports them, self-held instruments included; every instrument is denominated in the one fiat currency, and cb is the central bank.}
\label{table:sovereign-holdings}
\end{table}

\mypara{What the run shows}
The central bank mints thirty coins and opens a mutual credit line with Diana: a swap of ten of its coins for ten of hers, at par, on demand and without interest, willed by both (Section~\ref{sec:credit-lines}).  Frank, a household, holds Diana-coins and Diana holds central bank coins, so the $f$-coin holding graph has the path $\mathit{frank} \rightarrow \mathit{diana} \rightarrow \mathit{cb}$, and after one redemption along it Frank holds a sovereign grassroots coin, which is Proposition~\ref{prop:peg} performed (Section~\ref{sec:peg}).  Frank pays that coin to the central bank, a payment to its issuer (Definition~\ref{def:sgc-transactions}), and the central bank's person hands over one unit of the fiat currency, which is outside the contract (Appendix~\ref{app:schemas}); the coins of the central bank in circulation fall by one.  The central bank then pays out four of its coins to Frank, who has accepted them (Section~\ref{sec:issuance}).  Frank spends three of them at Bob, who has accepted them, and the central bank countersigns and records the payment: a retail payment in sovereign grassroots coins between two parties who have no conversation open (Section~\ref{sec:payment}).  Diana's first request for a Frank-coin for a Frank-bond is returned, Frank's date not having reached its maturity; after it has, Diana takes a Frank-coin for the bond and holds Frank's warranty of his date, the one warranty given in the run.  Charlie's deposit is released to Frank when the escrow agent's date reaches day~$30$, and Frank's deposit is returned to him (Appendix~\ref{app:escrow}).

\mypara{What the run does not show}
Throughput, deployed scale and ease of use are yet to be measured.

\section{Exchange across Denominations}\label{app:fx}

Persons can exchange grassroots coins across denominations among themselves, with no bank accounts and no correspondent banking.  Coins of any two denominations can be exchanged directly, so the exchange may bypass the vehicle currency of today's markets: the US dollar, the vehicle of foreign exchange and the invoicing currency of commodities~\cite{gopinath2020dominant}.

\mypara{Exchange as paired payments}
A person $p$ holding Dollar-denominated grassroots coins and a person $q$ holding Euro-denominated ones agree a rate and exchange by a pair of payments: $p$ pays $q$ $x$ Dollar-denominated coins acceptable to $q$, and $q$ pays $p$ $y$ Euro-denominated coins acceptable to $p$, each payment a transaction of one denomination (Section~\ref{sec:denominated}).  The rate $y/x$ is set by the parties: the architecture neither performs nor constrains conversion (Corollary~\ref{cor:no-cross-arbitrage}), and rates are set by competition among the persons quoting them.  The two payments are separate transactions, so one party is exposed between the legs; the exposure is bounded by exchanging coin by coin, or by the mutual credit the parties extend each other.

\mypara{Escrowed exchange}
The escrow agent of grassroots bonds (Appendix~\ref{app:escrow}) holds coins and bonds on behalf of the parties and transfers them by its conditions, and every deposit, release and return is a transfer of instruments of one denomination, so escrow composes with denomination.  An exchange agreement: $p$ deposits $x$ Dollar-denominated coins with the escrow agent $e$, $q$ deposits $y$ Euro-denominated ones, and $e$ releases the two deposits crosswise, or returns a lone deposit if the other is not made within the establishment window.  Both legs complete or neither, at the end of the agreement: the parties need no mutual credit and no trust in each other, only in $e$.  The agreement is not one transaction --- no transaction spans two denominations (Corollary~\ref{cor:no-cross-arbitrage}) --- so between the two releases there is a state in which one leg has completed and the other has not; the escrow agent guarantees the outcome of the agreement, and in that state each party's exposure is to $e$.  Forwards and options across denominations are the escrow instruments of~\cite{shapiro2026bonds} with legs in distinct denominations, each transaction within one denomination.

\mypara{Consequences}
A person holding grassroots coins of several denominations, quoting rates and standing ready to exchange, is a money changer, and competition among money changers sets the rates.  A migrant worker can exchange into the home denomination and pay family directly, avoiding the correspondent-banking chain and its fees~\cite{worldbank2025remittance}, though a money changer, an issuer or an escrow agent may charge a spread or a fee.  A person with assets in one denomination and a liability in another covers by exchange, bearing the rate risk.  Exchange, remittance and covering thus proceed among persons, in legally binding debts of the fiat units, without correspondent banking.

\end{document}